\documentclass[12pt]{article}

\usepackage[dvipsnames]{xcolor}

\usepackage{amsmath, amsthm, amssymb}
\usepackage{fullpage}
\usepackage{graphics}
\usepackage{graphicx}
\usepackage{rotating}
\usepackage{mathrsfs}
\usepackage{enumerate}
\usepackage{natbib}
\usepackage{color,soul}

\usepackage{enumitem}
\setlist[enumerate]{label = \textbf{\alph*)}}

\newcommand{\Ocal}{\mathcal{O}}

\def\P{P}
\def\R{\mathbb R}
\def\Q{\mathbb Q}

\def\E{E}

\def\Var{\text{Var}}
\def\Cov{\text{Cov}}

\def\asto{\stackrel{a.s.}{\longrightarrow}}

\def\bfalpha{\boldsymbol{\alpha}}
\def\bfbeta{\boldsymbol{\beta}}
\def\bfth{\boldsymbol{\theta}}
\def\blam{{\boldsymbol{\lambda}}}
\def\bfa{\mathbf{a}}
\def\bfb{\mathbf{b}}
\def\bfh{\mathbf{h}}
\def\hbaseline{\eta}
\def\Tghost{\widetilde{T}}
\def\etaA{A}
\def\vecA{\mathbf{ Z_1 }}
\def\vecZ{\mathbf{ Z_2 }}

\def\OR0{e^{\bfalpha_0^\top\vecA_i}}
\def\phazi{e^{\bfbeta^\top\vecZ_i}}
\def\phaz0{e^{\bfbeta_0^\top\vecZ_i}}
\def\phazE{e^{\bfbeta_0^\top\vecZ}}
\def\bfsig{\boldsymbol{\sigma}}
\def\lowm{m^{-1}}
\def\upM{m}


\newcommand{\be}{\begin{eqnarray}}
\newcommand{\ee}{\end{eqnarray}}
\newcommand{\bes}{\begin{eqnarray*}}
\newcommand{\ees}{\end{eqnarray*}}


\ifx\BlackBox\undefined
\newcommand{\BlackBox}{\rule{1.5ex}{1.5ex}}  
\fi

\ifx\QED\undefined
\def\QED{~\rule[-1pt]{5pt}{5pt}\par\medskip}
\fi

\ifx\proof\undefined
\newenvironment{proof}{\par\noindent{\bf Proof\ }}{\hfill\BlackBox\\[2mm]}
\fi

\ifx\theorem\undefined
\newtheorem{theorem}{Theorem}
\fi
\ifx\condition\undefined

\fi
\ifx\definition\undefined

\fi
\ifx\example\undefined

\fi
\ifx\property\undefined

\fi
\ifx\lemma\undefined
\newtheorem{lemma}{Lemma}
\fi
\ifx\proposition\undefined

\fi
\ifx\remark\undefined

\fi
\ifx\corollary\undefined

\fi
 \ifx\conjecture\undefined

\fi
\ifx\fact\undefined

\fi
\ifx\claim\undefined

\fi

\newtheorem{assumption}{Assumption}

\newenvironment{customthm}[1]
  {\innercustomthm}
  {\endinnercustomthm}

\newenvironment{customass}[1]
  {\innercustomass}
  {\endinnercustomass}

\begin{document}

\begin{center}
\large \textbf{A Nonparametric Maximum Likelihood Approach for Partially Observed Cured Data with Left Truncation and Right-Censoring}\normalsize\\~\\
\textsc{ Jue Hou\textsuperscript{1}, Christina D.~Chambers\textsuperscript{2,3} and Ronghui Xu\textsuperscript{1,2*} }\\
\textsuperscript{1}\textit{Department of Mathematics, }\\
\textsuperscript{2}\textit{Department of Family Medicine and Public Health, }\\
\textsuperscript{3}\textit{Department of Pediatrics, University of California, San Diego}\\
\textsuperscript{*}\textit{Corresponding author: rxu@ucsd.edu }
\end{center}
\date{today}

\abstract

Partially observed cured data occur in the analysis of spontaneous abortion (SAB) in observational studies in pregnancy. In contrast to   the traditional cured data, such data has an observable `cured' portion as women who do not abort spontaneously.  The data is also subject to left truncate in addition to right-censoring because women may enter or withdraw from a study any time during their pregnancy. Left truncation in particular causes unique bias in the presence of a cured portion.
In this paper, we study a cure rate model and develop a conditional nonparametric maximum likelihood approach.  To tackle the computational challenge we adopt an EM  algorithm making use of  ``ghost copies" of the data, and a closed form variance estimator  is derived.  Under suitable assumptions, we prove the consistency of the resulting estimator involving an unbounded cumulative baseline hazard function, as well as the asymptotic normality. Simulation results are carried out to evaluate the finite sample performance.  We present the analysis of the motivating SAB study to illustrate the power of our model addressing both occurrence and timing of SAB, as compared to existing approaches in practice.

{\bf Keywords:} Cure rate model, EM algorithm, ghost copy, left truncation, NPMLE,  observable Cure.

\section{Introduction}
Our work was motivated by research  carried out at the Organization of Teratology
Information Specialists (OTIS), which is a North American network of university or hospital
based teratology services that counsel between 70,000 and 100,000 pregnant women every
year.
Research subjects are enrolled from the Teratology Information Services and through
other methods of recruitment, where the mothers and their babies are followed over time.
Phone interviews are conducted through the length of the pregnancy along with pregnancy
diaries recorded by the mother.
An outcome phone interview is conducted shortly after the pregnancy ends, and if it results in a live birth, a dysmorphology exam is done within six
months and with further follow-ups at one year and possibly later dates. Recently it has
been of interest to assess the effects of medication exposures on spontaneous abortion (SAB)
\cite[]{Xu11, Chambers11}. Here we examine the OTIS autoimmune disease in pregnancy database for risk factors as well as
effects of medications  on spontaneous abortion.

By definition SAB occurs within the first 20 weeks of gestation; any spontaneous pregnancy loss after
that is called still birth. Ultimately we would like to know if an exposure modifies the
risk of SAB for a woman, which may be increased or decreased. It is known that in the
population for clinically recognized pregnancies the rate of SAB is about 12\% \cite[]{Wilcox88}.
On the other hand, in our database the empirical SAB rate is consistently lower
than 10\%. This is due to the fact that women may enter a study any time before 20 weeks'
gestation. Figure \ref{fig:Qhist} left panel shows the histograms of study entry times up to 20 weeks of gestation
from our autoimmune disease in pregnancy database. This way women who have early SAB events are
less likely to be captured in our studies, and such selection bias is known as left truncation
in survival analysis. Left truncation has been studied by many authors since the 1980s, and
has attracted much recent attention in the context of length-biased data \cite[among others]{Asgharian06, Qin11}.
Figure \ref{fig:Qhist} right panel shows the left truncated Kaplan-Meier curve
for the SAB event.

As seen from the Kaplan-Meier curve
the majority of the pregnant women are free of SAB; they are  considered `cured'
in the time-to-event context. Cure rate models are well studied in the literature for right-censored
data. The models effectively analyze the survival distribution of
those who are susceptible along with the probability of an individual being `cured'. In the
approaches using mixture models,  logistic regression is often used to model the cured probability.
For the dependency of the survival function on the covariates among the non-cured,
various regression models have been considered: the Cox proportional hazards model
\cite[]{Kuk92,Sy2000}, transformation models \cite[]{Lu04},
and richly parametrized models when the shape of the hazard function is of interest \cite[]{Hanson03}.
Cure rate models have also been developed along the lines of non-mixture
models \cite[]{Chen99, Zeng06}.
In addition to right-censored data, cure-rate models have also been developed for interval-censored
data \cite[]{Kim08}. To our best knowledge, however, they have not been
considered for truncated data which, unlike censoring, poses a unique set of challenges. While
left truncation has been well studied in the literature, the challenges are again unique in
the presence of a cured portion. Most importantly, left truncation leads to selection bias that
needs to be explicitly counted for, and in the process of doing so computational challenges
also arise, as will be seen below.

Cure models are used in various biomedical studies where data often include a
substantial portion of `long-term' survivors who are no longer susceptible to the event of
interest \cite[]{Farewell82, Farewell86}. Our data however, differs from classical cured data where the `long-term' survivors are never observed to be cured, rather they are censored at a finite time point \cite[often taken as the maximum]{Sy2000, Lu04}.
In our case, `cured' is defined as surviving 20 weeks of gestation, and we observe over 80\% of our subjects as cured from SAB.

In the following we consider the mixture cure rate model. This choice has been made
based on in-depth discussions with our scientific collaborators, because it is important to
understand both the risk factors for SAB (yes/no) as well as the predictors of timing of SAB
events among those who experience them. Different timing of SAB can reflect different
underlying biological processes.
In the next section 
we show that with many observed `cured' women in our
data, a slightly different likelihood than the one commonly seen in the literature should be
used. We discuss computational challenges with the likelihood, and adopt an EM algorithm using `ghost copies' of the observed data.
In section \ref{section:theory}, the resulting estimator is shown to be consistent and asymptotic normal, despite the fact that the cumulative baseline hazard function diverges at the finite time point before `cure' is achieved.
 We illustrate the effectiveness of the method on finite samples via simulation experiments in section \ref{section:simulation}. 
 We  conclude  with  the  analysis  of  SAB  data  from  the  OTIS  database described above.

\section{Model and NPMLE}\label{section:model}

    \subsection{Model and partially observed cured data }

    Let $\tau<\infty$ be a strict upper bound of time for the event of interest, beyond which a subject is considered cured. In the pregnancy example above, this would be the 20 weeks of gestation. The whole population consists of two subpopulations: cured and non-cured.
    Let the binary random variable $\etaA$ indicate whether a subject belongs to the non-cured subpopulation;
    and let 
 $T^*\in (0,\tau)$ be the failure time random variable for this subpopulation.
    The overall outcome time $T$ is given by the mixture 
    \cite[]{Lu04}:
   $ T=\etaA T^*+(1-\etaA)\tau$.
    Let $\vecA$ and $\vecZ$
    be 
       two covariate vectors;
    they may share common covariates depending on the application.
    We assume that $\etaA$ given $\vecA$ follows the logistic regression model
    \begin{equation*}
    \P(\etaA=1|\vecA,\vecZ)=p=\frac{e^{\bfalpha^\top\vecA}}{1+e^{\bfalpha^\top\vecA}},
    \end{equation*}
    and that $T^*$ given $\vecZ$ follows the proportional hazard regression model
    with cumulative baseline  hazard function $\lambda_0(t)=\int_0^t \lambda_0(u)du$:
    \begin{equation*}
    \P(T^* \ge t | \vecZ)=S(t | \vecZ)=\exp\{-\Lambda_0(t)e^{\bfbeta^\top\vecZ} \}.
    \end{equation*}
    Note that $ \Lambda_0(\tau)=+\infty $ so that $ S(\tau | \vecZ) = 0$.

Our data is subject to left truncation  and right-censoring.
    Let $Q$ be the left truncation time and $C$ the right-censoring time,
    satisfying $0 \le Q < C $;
     we also assume that they
    are independent of $( \etaA, T^*)$ conditioning on $\vecA$ and $\vecZ$.
    For subjecrts $i=1, ..., n$, the observed data include $\vecA_i$, $\vecZ_i$, $Q_i$, $X_i=T_i\wedge C_i$, $\delta^1_i=\etaA_i \cdot I(T_i \le C_i)$,
        $\delta^0_i=(1-\etaA_i)I(C_i\geq \tau)$ and $\delta^c_i=I(C_i<T_i \leq \tau)$.
        In other words $\delta^1_i$ is the indicator that a subject has an observed event (non-cured),
        $\delta^0_i$ is the indicator that a subject is observed to be cured,  and $\delta^c_i$ is the indicator that a subject is censored before $\tau$ so that we do not know whether she is cured or not.  
        Note that the subject $i$ is observed only if $T_i>Q_i$, hence left truncation is known to lead to a biased sample from the population.
        We  note again that our data is different from the classical cure model literature, where the cured individuals are always treated as censored; we refer to our data as {\it partially observed cured data}. Because of right-censoring, $A_i$ may not be observed; but we emphasize here that we do observe many $A_i=0$ in our data.

    Denote $\bfth=(\bfalpha,\bfbeta,\Lambda_0)$.   For the purposes of nonparametric maximum likelihood estimation (NPMLE), it is necessary to  discretize $\Lambda_0 $  to be $ \Lambda_0(t)=\sum_{k=1}^K \lambda_k I(t\ge t_k)$, where
         $0< t_1< \dots <t_K <\infty$ are the unique failure times
	 \cite[]{Johansen, Murphy94}.
    We apply the likelihood approach conditional upon the left truncation time $Q_i$ and the right-censoring time $C_i$, as no parametric distributional assumptions are made about these two random variables.
    Denote  $p_i=e^{\bfalpha^\top\vecA_i}/(1+e^{\bfalpha^\top\vecA_i})$,
       $\lambda_i(t)=\lambda_0(t)\exp(\bfbeta^\top \vecZ_i)$,
     $f_i(t)=\lambda_i(t) S_i(t)$, 
      and  $S_i(t)= \exp \{-\Lambda_0(t)\phazi \}$.
    The likelihood for our observed data is 
    \be \label{Lik:obs}
      L(\bfth) &=& \prod_{i=1}^n L_i(\bfth; X_i, \delta^1_i, \delta^0_i, \delta^c_i | T_i>Q_i, \vecA_i, \vecZ_i, Q_i, C_i) \nonumber\\
      &=& \prod_{i=1}^n \frac{\big\{p_i\lambda_i(X_i)S_i(X_i)\big\}^{\delta^1_i}
        (1-p_i)^{\delta^0_i}\big\{1-p_i+p_iS_i(X_i)\big\}^{\delta^c_i} }{1-p_i+p_iS_i(Q_i)}, 
    \ee
where $ 1-p_i+p_iS_i(X_i) = P(T_i>Q_i)$. 

    \subsection{NPMLE through EM}
    \subsubsection*{\it \underline{Complete data likelihood}}

    The complexity of observed likelihood \eqref{Lik:obs} leads to the challenge of optimization.
    To reduce the  problem 
        we follow the approach of \cite{Vardi85}, rediscovered recently by \cite{Qin11}.

   To augment the observed data, we first note that the group indicator $\etaA_i$
     is latent whenever censoring occurs.
    In addition, we compensate for the left truncation  through the ``ghost copy" algorithm proposed in \cite{Qin11}.
    For each observed subject with the pair of covariates $(\vecA_i,\vecZ_i)$ and entry time $Q_i$,
    there are $M_i$ hypothetical ``truncated samples" with latent event time $\Tghost_{ij}<Q_i$, $ j= 1, ..., M_i$.
    The resulting complete likelihood is
    \begin{align}\label{Lik:complete}
     \notag    L^c(\bfth)&=\prod_{i=1}^n\big\{p_i\lambda_i(X_i)S_i(X_i)\big\}^{\delta^1_i}
        (1-p_i)^{\delta^0_i}\big\{p_iS_i(X)\big\}^{\etaA_i\delta^c_i} (1-p_i)^{(1-\etaA_i)\delta^c_i}\\
        &\times p_i^{M_i}\prod_{j=1}^{M_i}\prod_{k:t_k
               \le Q_i}\big\{\lambda_k e^{\bfbeta^\top\vecZ_i}S_i(t_k) \big\}^{I(\Tghost_{ij}=t_k)}
    \end{align}
    In this way, the two sets of parameters $\bfalpha$ and $\bfbeta$ are separated in the complete data likelihood.
    All remaining product terms are those in the usual likelihoods for the logistic and the Cox regression model.
    Consequently, the M-step update is instantly available from existing solvers.

 Given the observed data  $\Ocal$,  it can be seen that for  subject $i$ who is censored at $X_i$, the unobserved group indicator
    $\etaA_i$ follows Bernoulli distribution with $P(\etaA_i =1) = {p_iS_i(X_i)} / \{1-p_i+p_iS_i(X_i)\}$.
    For a subject with truncation time $Q_i$ and covariates $(\vecA_i,\vecZ_i)$,
   it can be seen that the number of truncated ``ghost'' copies $M_i$ follows the geometric distribution with probability  $ P(T_i < Q_i) = p_i\{1-S_i(Q_i)\}$.
    For the ``ghost'' event times 
     let $\Tghost_{ij}$ be one of the observed event times $t_k < Q_i$ with probability proportional to
     $f_i(t_k)=\lambda_k\phazi S_i(t_k)$:
    \begin{equation}\label{dist:k}
      \P(\Tghost_{ij}=t_k|M_i, \Ocal) =\frac{I(t_k\le Q_i)\lambda_k\phazi S_i(t_k)}
            {\sum_{k:t_k\le Q}\lambda_k\phazi S_i(t_k)}.
    \end{equation}
By restricting the ``ghost'' event times to the observed event times, we are able to exploit the convenience of directly applying the weighted Cox regression later. The price we pay is a slight discrepancy between  $\sum_{k:t_k\le Q}\lambda_k\phazi S_i(t_k)$ and $1-S_i(Q_i)$. Integrating out the latent variables in
    $L^c(\bfth)$ does not give exactly the observed likelihood $L(\bfth)$.
However,  we show later  that this difference is asymptotically negligible     so that the solution from the above EM is asymptotically equivalent to the true NPMLE.

    \subsubsection*{\it \underline{The EM Algorithm}}

    From \eqref{Lik:complete} we can write the complete data log-likelihood $l^c=\log L^c$ as
\begin{align}\label{loglik:EM}
\notag l^c(\bfalpha,\bfbeta,\blam)=& \sum_{i=1}^n \bigg[\delta^1_i \etaA_i \sum_{k=1}^K I\{X_i=t_k\}\log f_i(t_k) +M_i\sum_{k:t_k<Q_i} I\{\Tghost_i=t_k\}\log f_i(t_k) \\
 &  +(1-\delta^1_i)\etaA_i \log S_i(X_i)+(1-\etaA_i)\log(1-p_i)+ (\etaA_i+M_i) \log(p_i)\bigg],
\end{align}
where $ \blam = (\lambda_1, ..., \lambda_K)$.  

Though the algorithm runs stably from any initial values of the parameters in the support,
we recommend to fit a na\"{i}ve logistic regression without censored subjects for $\bfalpha^{(0)}$
and a na\"{i}ve Cox regression for $\bfbeta^{(0)}$ and $\blam^{(0)}$ treating the observed cured subjects as censored at $\tau$,
to minimize the number of iterations until convergence.

\textbf{E-step}

At the $(l+1)$-th iteration ($l=0, 1, ...$), let ${\bfalpha}^{(l)}, {\bfbeta}^{(l)}, {\blam}^{(l)}$ be the parameter values at the current iteration
    upon which ${p}_i^{(l)}$, ${f}_i^{(l)}$ and ${S}_i^{(l)}$ are defined.
The distributions of the latent variables conditioning on the observed data are given in the above, and
their conditional expectations can be computed as
\begin{align}
\E&[I\{\Tghost_{ij}=t_k\}| M_i, \mathcal{O}; {\bfalpha}^{(l)}, {\bfbeta}^{(l)}, {\blam}^{(l)} ]
    =\frac{I(t_k < Q_i) {f}^{(l)}_i(t_k)}{\sum_{h:t_h<Q_i} {f}^{(l)}_i(t_h)} \label{EM:condE1},\\
\E&[M_i| \mathcal{O}; {\bfalpha}^{(l)}, {\bfbeta}^{(l)}, {\blam}^{(l)} ]=
    \frac{ {p}^{(l)}_i\sum_{k:t_k<Q_i} {f}^{(l)}_i(t_k)}  {1- {p}^{(l)}_i\sum_{k:t_k<Q_i} {f}^{(l)}_i(t_k)},\\
\E&[\etaA_i|\mathcal{O}; {\bfalpha}^{(l)}, {\bfbeta}^{(l)}, {\blam}^{(l)} ]=
    \delta^1_i+\delta^c_i\frac{{p}^{(l)}_i {S}^{(l)}_i(X_i)}  {1- {p}^{(l)}_i+ {p}^{(l)}_i {S}^{(l)}_i(X_i)}\label{EM:condE3}.
\end{align}
Since the latent variables all enter linearly into the complete data log-likelihood, the expected complete data log-likelihood is
\begin{equation}\label{eq:expected_l}
E (l^c | \mathcal{O}) =\sum_{i=1}^n\sum_{k=1}^K \left\{w^f_{i,k}\log f_i(t_k) +w^S_{i} \log S_i(X_i)+w^p_{0,i}\log(1-p_i)+ w^p_{1,i} \log(p_i)\right\},
\end{equation}
where the weights are computed as
\begin{eqnarray*}
w^f_{i,k}&=& \delta^1_i I\{X_i=t_k\} + \frac{ {p}^{(l)}_i {f}^{(l)}_i(t_k)}  {1- {p}^{(l)}_i \sum_{h:t_h<Q_i} {f}^{(l)}_i(t_h)} I\{t_k<Q_i\},\\
w^S_i&=&\delta^c_i\frac{ {p}^{(l)}_i {S}^{(l)}_i(X_i)}  {1- {p}^{(l)}_i + {p}^{(l)}_i {S}^{(l)}_i(X_i)},\\
w^p_{0,i}&=& \delta^0_i + \delta^c_i \frac{1- {p}^{(l)}_i}  {1- {p}^{(l)}_i+ {p}^{(l)}_i {S}^{(l)}_i(X_i)},\\
w^p_{1,i}&=& \delta^1_i \etaA_i + \delta^c_i \frac{ {p}^{(l)}_i {S}^{(l)}_i(X_i)}  {1- {p}^{(l)}_i+ {p}^{(l)}_i {S}^{(l)}_i(X_i)}+
\frac{ {p}^{(l)}_i \sum_{k:t_k<Q_i} {f}^{(l)}_i(t_k)}  {1- {p}^{(l)}_i\sum_{k:t_k<Q_i} {f}^{(l)}_i(t_k)}.
\end{eqnarray*}

\textbf{M-step}

From \eqref{eq:expected_l} the expected log-likelihood can be written as the sum of two parts, so that the M-step can be achieved using
a weighted logistic regression optimized over $\bfalpha$:
\begin{equation*}
l_{glm}=\sum_{i=1}^n w^p_{0,i}\log(1-p_i)+ w^p_{1,i} \log(p_i);
\end{equation*}
and a weighted Cox proportional hazard regression optimized over
$\bfbeta$: 
\begin{equation*}
l_{coxph}=\sum_{i=1}^n \sum_{k=1}^K w^f_{i,k}\log f_i(t_k) +\sum_{i=1}^n w^S_{i} \log S_i(X_i).
\end{equation*}
Easily implemented solution is available from existing \emph{glm} and \emph{coxph} solvers in R,  to obtain $\bfalpha^{(l+1)}$, $\bfbeta^{(l+1)}$
and $\blam^{(l+1)}$.

\subsubsection*{\it \underline{Variance Estimator}}

At convergence of the EM algorithm where $ \hat{\bfth} $ denotes the NPMLE,  the
\cite{Louis82} formula can be used to give the observed Fisher information:
\begin{equation}\label{eq:louis}
I_{obs}(\hat{\bfth}) =
\sum_{i=1}^n \E_{\hat{\bfth}} [B_i|\mathcal{O}]
-\sum_{i=1}^n \E_{\hat{\bfth}}[\mathbf{S}_i \mathbf{S}_i^\top|\mathcal{O}]
-2\sum_{i<i'}^n \E_{\hat{\bfth}}[\mathbf{S}_i |\mathcal{O}]
\E_{\hat{\bfth}}[\mathbf{S}_{i} |\mathcal{O}]^\top,
\end{equation}
where $\mathbf{S}_i$ and $B_i$ are  the gradient $\nabla l^c_i$
    and  the negatives of Hessian $-\nabla^2 l^c_i$ of the complete data log-likelihood.
The above is  in closed form, and the
details are given in Appendix \ref{appendixB}. We show in the next section that \eqref{eq:louis} provides a consistent variance estimator for the NPMLE, and its use in association with the NPMLE has been advocated in the literature  \cite[]{vaid:xu:00, zeng:lin:07, gamst:etal}.

\section{Theory}\label{section:theory}

Let  $\bfth_0=(\bfalpha_0,\bfbeta_0,\Lambda_0(\cdot))$ denote the true parameter value.
 Following \cite{ande:borg:93},
    we define the counting process $N_i(t)=\delta^1_iI(X_i \le t)$ and
       the at-risk process $Y_i(t)=I(Q_i \le t \le X_i)$.
    Their sums are denoted as $\bar{N}(t) = \sum_{i=1}^n N_i(t)$,
        and $ \bar{Y}(t) = \sum_{i=1}^n Y_i(t)$.
    By Doob-Meyer decomposition, a martingale with respect to the filtration
        $\mathcal{F}_t=\sigma\{N_i(u),Y_i(u),\vecA,\vecZ, u \le t\}$ is
    \begin{equation}\label{def:mart1}
    M_i(t)=N_i(t)-\int_0^t
\phi_i^{\bfth_0}(u) Y_i(u)e^{\bfbeta_0^\top\vecZ_i} d\Lambda_0(u),
    \end{equation}
    where
    \begin{equation}\label{def:phi}
    \phi_i^{\bfth}(t)=\frac{ \exp\{ {\bfalpha^\top\vecA_i -\Lambda(t)e^{\bfbeta^\top\vecZ_i}} \} }{1+ \exp \{ {\bfalpha^\top\vecA_i -\Lambda(t)e^{\bfbeta^\top\vecZ_i}} \} }
=\P_{\bfth}(\etaA_i = 1 | X_i \ge t).
    \end{equation}
 To make use of the martingale framework,
        we write the observed log-likelihood $l_n=\log L$, where $ L(\bfth)$ was given in \eqref{Lik:obs}, as
    \begin{align*} 
    \notag l_n=&\sum_{i=1}^n
    \int_0^\tau \log\Big(\phi_i^{\bfth}(u)e^{\bfbeta^\top\vecZ_i}\Big)dN_i(u)
    -\int_0^\tau Y_i(u)\phi_i^{\bfth}(u)e^{\bfbeta^\top\vecZ_i}d\Lambda(u)\\
    &+\int_0^\tau \log\Big(\Delta \Lambda(u)\Big)dN_i(u),
    \end{align*}
where $\Delta \Lambda(u)$ is the size of jump of the baseline cumulative hazard  at $u$
\cite[]{Murphy94}.
    We establish the theory  under the following assumptions. 
The vector norm throughout this paper is the uniform norm, i.e. the largest absolute value among all elements. 

    \begin{assumption}\label{assumpt:parameter}
The true finite-dimensional parameter $(\bfalpha_0, \bfbeta_0)$ is an element of the interior of a 
compact set $\{(\bfalpha, \bfbeta) : \|\bfalpha\| \vee \|\bfbeta\| \le D_1\}$ for some constant $D_1$. 
    \end{assumption}

   \begin{assumption}\label{assumpt:cov bound}
      The covariates $(\vecA,\vecZ)$ follow distribution $F_{Z}(\cdot , \cdot)$.
            They are bounded a.s.:
      there  exists $D_2>0$, such that $\P(\max\{\|\boldsymbol{\vecA}\|,\|\boldsymbol{\vecZ}\| \} \le D_2)=1$.
      Also, their covariance matrices $\Var(\vecA)$(without intercept term) and $\Var(\vecZ)$ are both positive-definite.
    Denote constant $\upM$ such that 
   \begin{equation}\label{def:Mm}
    0<\lowm=e^{-D_1D_2}\le e^{\bfalpha^\top\vecA}\wedge e^{\bfbeta^\top\vecZ}
    \le e^{\bfalpha^\top\vecA}\vee e^{\bfbeta^\top\vecZ} \le e^{D_1D_2}=\upM < \infty \quad a.s..
    \end{equation}
    \end{assumption}

 \begin{assumption}\label{assumpt:baseline}
    The baseline cumulative hazard function $\Lambda_0(t)$ is a non-decreasing continuous function on $[0,\tau)$.
      $\Lambda_0(0)=0$ and $\Lambda_0(\tau-)=\infty$. And
      \begin{equation}\label{def:vareps}
        \inf_{t \in [0,\tau]}\E [Y(t)|\vecA,\vecZ]> \varepsilon >0, \quad a.s..
      \end{equation}
    \end{assumption}

\begin{assumption}\label{assumpt:ltrc}
 There exists     $ \zeta \in (0,\tau)$ such that $\P(Q>\zeta)=0$. 
$\Lambda_0(t)$ is strictly increasing over $[0,\zeta]$, and
      $\E[Y(t)|\vecA,\vecZ]$ is Lipschitz continuous w.r.t to $\Lambda_0(t)$ on $[0, \zeta]$ a.s.; that is, 
      there is a constant
      \begin{equation}\label{def:Lips}
        \mathcal{L}\ge \sup_{0\le t < s \le \zeta}\left\{\frac{|\E[Y(t)|\vecA,\vecZ]-\E[Y(s)|\vecA,\vecZ]|}
            {|\Lambda_0(t)-\Lambda_0(s)|}\right\}
            , \quad a.s..
      \end{equation}
    \end{assumption}

    The above  Assumption \ref{assumpt:baseline} is specifically made for cure rate models with an observable cured portion.
    This assumption enforces that the failure time must occur prior to a well-defined upper bound.
  Equation \eqref{def:vareps} requires that certain proportion of subjects enter the study at time zero.
While this may not always be the case for our pregnancy studies,  time zero may be replaced by the earliest entry time into the study and
the inference is conditional upon survival beyond that time, and all the results established in this section carry over.
Assumption \ref{assumpt:ltrc} gives the regularity conditions on truncation and censoring.
The truncation times should be bounded away from time $\tau$; this is required in order to establish Lemma \ref{lemma:EM} below.
The truncation-censoring distribution also has to possess certain level of continuity with respect to
the distribution of event time.
For example, the continuity condition is satisfied when the distributions for $Q$, $C$ and $T$ given $\vecA$ and $\vecZ$ all have densities that are bounded away 
from $\infty$ and $0$ almost surely. 
This condition can be weakened to allow $\Lambda_0(t)$ to be constant over some open set and require only that $\E[Y(t)|\vecA,\vecZ]$ is Lipschitz continuous with respect to $\Lambda_0(t)$ on a open set $\Omega \subset [0,\zeta]$ consisting of finite many open intervals, on which $\int_\Omega d\Lambda_0 = \Lambda_0(\zeta)$. 
All theoretical results under this weakened condition can be achieved by repeatedly applying the  steps in the current proof. 

    For the asymptotic normality we make the following assumption where $\tau'$ is defined later.
    \begin{customass}{\ref{assumpt:baseline}'}\label{assumpt:baseline'}
    The baseline cumulative hazard $\Lambda_0(t)$ is a non-decreasing continuous function on $[0,\tau']$.
      $\Lambda_0(0)=0$, $\Lambda_0(\tau') < \infty $  and $\Lambda_0(\tau-)=\infty$. And
      \begin{equation*}
        \inf_{t \in [0,\tau']}\E [Y(t)|\vecA,\vecZ]>\varepsilon>0, \quad a.s..
      \end{equation*}
    \end{customass}

\subsection{Existence of NPMLE} 

First, we show the existence of the NPMLE.
\begin{theorem}\label{thm:exist}
    Under Assumptions \ref{assumpt:parameter} and \ref{assumpt:cov bound},
    if $\sum_{i=1}^n N_i(\tau)>0$, then a maximizer of $l_n(\bfth)$, $\hat{\bfth}=(\hat{\bfalpha},\hat{\bfbeta},\hat{\Lambda}(\cdot))$ exists and is finite.
\end{theorem}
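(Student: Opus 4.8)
The plan is to reduce the problem to a finite-dimensional optimization and then combine compactness of the Euclidean parameters with a coercivity (level-set) argument in the baseline jumps. Since $\Lambda$ is discretized with jumps $\lambda_1,\dots,\lambda_K$ placed at the observed failure times $t_1<\cdots<t_K$, and the hypothesis $\sum_{i=1}^n N_i(\tau)>0$ guarantees $K\ge 1$ with at least one observed event at each $t_k$, the unknowns are the finite vector $(\bfalpha,\bfbeta,\lambda_1,\dots,\lambda_K)$ with each $\lambda_k>0$. By Assumption \ref{assumpt:parameter} I would restrict $(\bfalpha,\bfbeta)$ to the compact set $\{\|\bfalpha\|\vee\|\bfbeta\|\le D_1\}$; this also sidesteps the separation phenomenon that otherwise governs existence of the logistic part. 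The task is then to show that the continuous map $\blam\mapsto l_n$ attains its maximum over $(0,\infty)^K$ at a finite, strictly positive point, and that the joint maximum over the compact $(\bfalpha,\bfbeta)$-region and $(0,\infty)^K$ is attained.

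First I would establish an upper bound for $l_n$ that is coercive in $\blam$. Taking the log of \eqref{Lik:obs} and collecting terms, every contribution other than the jump terms $\log\lambda_k$ is either nonpositive or bounded above by a constant uniform over the compact $(\bfalpha,\bfbeta)$-set: indeed $\log p_i\le 0$, $\log(1-p_i)\le 0$, $\log S_i\le 0$, $\bfbeta^\top\vecZ_i\le\log\upM$, and $\log\{1-p_i+p_iS_i(X_i)\}\le 0$. The only term that could a priori diverge to $+\infty$ is the truncation factor $-\log\{1-p_i+p_iS_i(Q_i)\}$; but by Assumptions \ref{assumpt:parameter}--\ref{assumpt:cov bound} and the bounds in \eqref{def:Mm}, $p_i$ is bounded away from $1$, so $1-p_i+p_iS_i(Q_i)\ge 1-p_i$ is bounded away from $0$ and this term stays bounded above. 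Using $\Lambda(t_k)\ge\lambda_k$ together with $e^{\bfbeta^\top\vecZ_i}\ge \lowm$ from \eqref{def:Mm} for the subject(s) experiencing the event at $t_k$, one obtains
\begin{equation*}
l_n(\bfalpha,\bfbeta,\blam)\ \le\ \sum_{k=1}^K d_k\big(\log\lambda_k-\lowm\,\lambda_k\big)+C,
\end{equation*}
where $d_k\ge 1$ is the number of events at $t_k$ and $C<\infty$ is uniform over the compact $(\bfalpha,\bfbeta)$-set.

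Each summand $d_k(\log\lambda_k-\lowm\lambda_k)$ is strictly concave in $\lambda_k$ and tends to $-\infty$ as $\lambda_k\downarrow 0$ or $\lambda_k\uparrow\infty$. Evaluating $l_n$ at any interior reference point (say $\blam=(1,\dots,1)$ with $(\bfalpha,\bfbeta)$ interior) gives a finite value $v_0>-\infty$, so the superlevel set $\{l_n\ge v_0\}$ is nonempty. The displayed bound forces each coordinate $\lambda_k$ of this superlevel set into a compact subinterval $[\epsilon_k,M_k]\subset(0,\infty)$, while $(\bfalpha,\bfbeta)$ already ranges over a compact set. Since $l_n$ is continuous, the superlevel set is closed and bounded, hence compact, and the extreme value theorem yields a maximizer $\hat{\bfth}=(\hat{\bfalpha},\hat{\bfbeta},\hat\Lambda)$. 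Because $l_n\to-\infty$ as any $\lambda_k\downarrow0$, the maximizer has every $\hat\lambda_k\in(0,\infty)$, so $\hat\Lambda$ is a finite step function.

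I expect the main obstacle to be controlling the $\lambda_k\to\infty$ direction, i.e. the unbounded-hazard feature peculiar to this cure model: one must verify that the reward $\log\lambda_k$ from an observed event is always dominated by the penalty $-\Lambda(t_k)e^{\bfbeta^\top\vecZ_i}$ from the survival factor, and---equally important---that the left-truncation denominator does not itself blow up to $+\infty$. Both hinge on the uniform bounds of Assumptions \ref{assumpt:parameter}--\ref{assumpt:cov bound}: boundedness of $p_i$ away from $1$ keeps the truncation term harmless, while the lower bound $\lowm$ on $e^{\bfbeta^\top\vecZ_i}$ ensures the linear penalty outpaces the logarithmic gain, making $l_n$ coercive in each jump.
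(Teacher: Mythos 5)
Your proposal is correct and takes essentially the same route as the paper's proof: both confine $(\bfalpha,\bfbeta)$ to the compact set of Assumption \ref{assumpt:parameter}, observe that every term of $l_n$ is bounded above except the event contributions, and conclude via coercivity in $\blam$ (the $\log\lambda_k$ reward dominated by the linear penalty $-\lowm\lambda_k$ from the survival factor) plus continuity on a compact set, the paper phrasing this through $\lambda_{\max}$ and you through the coordinate-wise bound $\sum_k d_k(\log\lambda_k-\lowm\lambda_k)+C$. Your explicit verifications that the truncation denominator $-\log\{1-p_i+p_iS_i(Q_i)\}$ stays bounded and that the maximizer avoids the boundary $\lambda_k=0$ spell out steps the paper leaves implicit, but the underlying argument is the same.
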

\noindent For the proof we use the same technique as in \cite{Murphy94}.
All the proofs are in Appendix \ref{appendixA}.

We now show that the solution from the previously described EM algorithm is  asymptotically equivalent to 
the NPMLE. 
\begin{lemma}\label{lemma:EM}
    Let $\tilde{\bfth}$ be the solution from the EM algorithm with complete data likelihood \eqref{Lik:complete}
    and $\hat{\bfth}$ be the NPMLE for the observed likelihood \eqref{Lik:obs}.
   Under Assumptions \ref{assumpt:parameter} 
   - \ref{assumpt:ltrc},
    $n^{-1} \{ l_n(\hat{\bfth})-l_n(\tilde{\bfth}) \} = O_p(1/n)$.
\end{lemma}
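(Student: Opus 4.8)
The plan is to recognize that the EM algorithm does not maximize the observed likelihood \eqref{Lik:obs} but rather an explicitly computable \emph{perturbed} likelihood $\tilde L_n$ obtained by integrating the latent variables out of \eqref{Lik:complete}, to show that $\tilde L_n$ differs from $L$ only through a second-order discretization error in the truncation-correction denominator, and then to run a standard ``two maximizers'' sandwich. First I would carry out the integration. For a censored subject, summing the Bernoulli indicator $\etaA_i$ returns the factor $1-p_i+p_iS_i(X_i)$; summing the geometric count $M_i$ together with the ghost-time law \eqref{dist:k} contributes per copy the factor $p_iG_i$ with $G_i=\sum_{k:t_k\le Q_i}f_i(t_k)$, so that $\sum_{m\ge0}(p_iG_i)^m=(1-p_iG_i)^{-1}$. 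Hence the EM solution $\tilde\bfth$ is by construction a maximizer of
\begin{equation*}
\tilde L_n(\bfth)=\prod_{i=1}^n \frac{\{p_i\lambda_i(X_i)S_i(X_i)\}^{\delta^1_i}(1-p_i)^{\delta^0_i}\{1-p_i+p_iS_i(X_i)\}^{\delta^c_i}}{1-p_iG_i},
\end{equation*}
which coincides with \eqref{Lik:obs} except that the true denominator $1-p_i\{1-S_i(Q_i)\}$ is replaced by $1-p_iG_i$. Writing $R_n(\bfth)=l_n(\bfth)-\tilde l_n(\bfth)=\sum_{i=1}^n\log\{(1-p_iG_i)/(1-p_i[1-S_i(Q_i)])\}$, the whole problem reduces to controlling $R_n$.

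The quantity $G_i$ is exactly the right-endpoint Riemann sum of $\int_0^{\Lambda(Q_i)} e^{\bfbeta^\top\vecZ_i}\exp(-u\,e^{\bfbeta^\top\vecZ_i})\,du=1-S_i(Q_i)$ over the partition of $[0,\Lambda(Q_i)]$ induced by the jumps. Since the integrand is decreasing, $G_i\le 1-S_i(Q_i)$, so $R_n\ge0$; and since the right-endpoint rule has second-order error, $0\le 1-S_i(Q_i)-G_i\le \upM^2\sum_{t_k\le Q_i}\lambda_k^2$. Because $Q_i\le\zeta$ a.s.\ by Assumption~\ref{assumpt:ltrc} this bound is uniform in $i$, and because $1-p_i\{1-S_i(Q_i)\}\ge 1-p_i\ge(1+\upM)^{-1}$ by Assumption~\ref{assumpt:cov bound}, the inequality $\log(1+x)\le x$ yields
\begin{equation*}
0\le R_n(\bfth)\le \upM^3\,n\sum_{t_k\le\zeta}\lambda_k^2\le \upM^3\,n\,\Big(\max_{t_k\le\zeta}\lambda_k\Big)\,\Lambda(\zeta).
\end{equation*}

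The sandwich is then immediate: $\hat\bfth$ maximizes $l_n$ and $\tilde\bfth$ maximizes $\tilde l_n$, so $0\le l_n(\hat\bfth)-l_n(\tilde\bfth)\le R_n(\hat\bfth)-R_n(\tilde\bfth)\le R_n(\hat\bfth)$, where the last step uses $R_n(\tilde\bfth)\ge0$. Thus it suffices to prove $R_n(\hat\bfth)=O_p(1)$, i.e.\ that $n\,(\max_{t_k\le\zeta}\hat\lambda_k)\,\hat\Lambda(\zeta)=O_p(1)$, after which dividing by $n$ gives the claim.

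The main obstacle is this last uniform jump estimate for the NPMLE itself, namely $\max_{t_k\le\zeta}\hat\lambda_k=O_p(1/n)$ together with $\hat\Lambda(\zeta)=O_p(1)$. I would obtain it from the stationarity condition for each jump of $l_n$: differentiating the martingale form of the log-likelihood produces a Breslow-type identity $\hat\lambda_k=d\bar N(t_k)/\widehat D_k$, whose leading at-risk denominator $\sum_i Y_i(t_k)\phi_i^{\hat\bfth}(t_k)e^{\hat\bfbeta^\top\vecZ_i}$ is bounded below by order $n$ via the at-risk bound $\inf_t\E[Y(t)|\vecA,\vecZ]>\varepsilon$ of Assumption~\ref{assumpt:baseline} and the boundedness of Assumption~\ref{assumpt:cov bound} (these force $\bar Y(t_k)$ and $\phi_i^{\hat\bfth}(t_k)$ to be bounded away from zero on $[0,\zeta]$ once $\hat\Lambda(\zeta)$ is controlled). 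Given this, $\hat\Lambda(\zeta)=\sum_{t_k\le\zeta}\hat\lambda_k\le\bar N(\zeta)/(cn)\le 1/c=O_p(1)$ and, as the failure times are distinct, $\max_{t_k\le\zeta}\hat\lambda_k=O_p(1/n)$. The delicate point, which is where the real work lies, is that the cure-model structure adds correction terms to $\widehat D_k$ coming from the dependence of $\phi_i^{\bfth}$ on $\Lambda$; these are of the same nominal order $n$, and one must show that the summability of $\lambda_j\phi_i^{\hat\bfth}(t_j)$ keeps the net denominator bounded below by a positive multiple of $n$. It is precisely here that the requirement $\zeta<\tau$ in Assumption~\ref{assumpt:ltrc} is essential: it confines the entire error analysis to $[0,\zeta]$, where $\Lambda_0$ is finite and the jumps are genuinely $O_p(1/n)$, so that the divergence of the cumulative baseline hazard at $\tau$ never enters $G_i$.
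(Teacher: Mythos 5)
Your reduction is, in substance, the paper's own: your $\tilde L_n$ is exactly the marginal $\tilde{L}(\bfth)$ that the paper forms by summing the latent variables out of \eqref{Lik:complete}, your Riemann-sum error bound for $1-S_i(Q_i)-G_i$ reproduces the paper's mean-value-theorem bound \eqref{eqn:dS-fMVT}, and your sandwich is the paper's \eqref{eqn:bound_on_Omega0} (in fact slightly tidier: since $R_n\ge 0$ you only need the discrepancy at $\hat\bfth$, whereas the paper bounds it at both $\hat\bfth$ and $\tilde\bfth$, and correspondingly needs the jump control for both). One small gap first: your sandwich uses $\tilde l_n(\hat\bfth)\le \tilde l_n(\tilde\bfth)$, i.e. that the EM fixed point is a \emph{global} maximizer of $\tilde L_n$. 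The EM ascent property only delivers a stationary point; the paper closes this by verifying $\nabla\log\tilde L(\bfth)=\E_{\bfth}[\nabla\log L^c(\bfth)\,|\,\Ocal]$ and arguing that $\tilde L$ is concave, so the stationary point is the global maximum. Your ``by construction a maximizer'' elides this.

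The genuine gap is the step you yourself flag as ``where the real work lies,'' and it is not merely deferred detail — it is the paper's Lemma \ref{lem:bound baseline}, the hardest part of the whole argument. Your proposed mechanism is circular: you bound $\phi_i^{\hat\bfth}(t_k)$ away from zero ``once $\hat\Lambda(\zeta)$ is controlled,'' but the control of $\hat\Lambda(\zeta)$ is itself to be extracted from the denominator lower bound. Moreover, bounding $\bar Y$ and $\phi^{\hat\bfth}$ below does not suffice, because $W_i^{\bfth}(t)=\{\delta^1_i+\delta^c_i\phi_i^{\bfth}(X_i)\}I\{t\le X_i\}-\phi_i^{\bfth}(Q_i)I\{t\le Q_i\}$ carries a \emph{negative} truncation term of the same order $n$; a priori the net sum $\sum_i W_i^{\hat\bfth}(t_k)e^{\hat\bfbeta^\top\vecZ_i}$ could be near zero. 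The paper resolves this non-circularly: it passes to a Helly subsequential limit $\bfth^*$ of estimators satisfying the jump inequality \eqref{eqn:EMjump}, proves $\Lambda^*(t)<\infty$ for $t<\tau$ by a first contradiction argument, and then shows $\inf_{t\in[0,\zeta]}\E[W^{\bfth^*}(t)e^{\bfbeta^{*\top}\vecZ}]>C_w>0$ by a second one: if the expectation had a zero $u_0$, then — because its derivative with respect to $d\Lambda_0$ is bounded by $\upM+\mathcal{L}$, using the Lipschitz condition \eqref{def:Lips} of Assumption \ref{assumpt:ltrc} — the integral representation \eqref{eqn:WasdN} would generate a divergence of the form $-\int_t^\zeta |u-u_0|^{-1}du$, forcing $\E[W^{\bfth^*}(t)e^{\bfbeta^{*\top}\vecZ}]=-\infty$, a contradiction. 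It is telling that your sketch never invokes \eqref{def:Lips}: that condition exists in Assumption \ref{assumpt:ltrc} precisely to make this step work, so its absence signals that the crux of the proof is missing rather than compressed. Everything else in your proposal is sound and matches the paper, but as written the assertion ``the net denominator is bounded below by a positive multiple of $n$'' is exactly the open statement, not a consequence of what precedes it.
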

\begin{theorem}\label{cor:EMconsistency}
Under Assumptions \ref{assumpt:parameter}
   - \ref{assumpt:ltrc},
$\|\hat{\bfth}-\tilde{\bfth}\| = o_p(1)$.
\end{theorem}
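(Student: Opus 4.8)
The plan is to deduce the claim from Lemma \ref{lemma:EM} by showing that $\hat{\bfth}$ and $\tilde{\bfth}$ are consistent for the \emph{same} limit $\bfth_0$, after which $\|\hat{\bfth}-\tilde{\bfth}\| \le \|\hat{\bfth}-\bfth_0\| + \|\bfth_0-\tilde{\bfth}\| = o_p(1)$ follows from the triangle inequality. Lemma \ref{lemma:EM} is exactly the device that lets me treat the EM solution on the same footing as the exact NPMLE. Since $\hat{\bfth}$ maximizes $l_n$ over the class of step-function baseline hazards, $0 \le n^{-1}\{l_n(\hat{\bfth})-l_n(\tilde{\bfth})\} = O_p(1/n)$; and because $\hat{\bfth}$ dominates $n^{-1}l_n$ evaluated at any discretization $\bfth_0^{\sharp}$ of $\bfth_0$ whose objective value converges to that of $\bfth_0$, I get $n^{-1}l_n(\tilde{\bfth}) \ge n^{-1}l_n(\bfth_0^{\sharp}) - O_p(1/n)$. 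Thus $\tilde{\bfth}$ is an approximate maximizer of the normalized observed log-likelihood, forfeiting only an $O_p(1/n)$ amount of objective value relative to the truth, and the whole problem reduces to a single argmax-consistency statement that I run once and apply to both $\hat{\bfth}$ (an exact maximizer) and $\tilde{\bfth}$ (a near-maximizer).

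The engine is a uniform law of large numbers together with identifiability: I would show that $n^{-1}\{l_n(\bfth)-l_n(\bfth_0)\}$ converges in probability, uniformly over the relevant parameter class, to a deterministic Kullback--Leibler-type functional $\ell(\bfth)$ that is nonpositive and vanishes only at $\bfth=\bfth_0$. The finite-dimensional part is routine under Assumption \ref{assumpt:cov bound}: boundedness of the covariates and the bounds $\lowm \le e^{\bfalpha^\top\vecA}\wedge e^{\bfbeta^\top\vecZ} \le \upM$ of \eqref{def:Mm} make the per-subject integrands uniformly bounded and Lipschitz, so a Glivenko--Cantelli argument applies. Identifiability of $(\bfalpha_0,\bfbeta_0)$ then follows from the positive-definiteness of $\Var(\vecA)$ and $\Var(\vecZ)$, which prevents two distinct regression parameters from inducing the same conditional law; identifiability of $\Lambda_0$ uses Assumption \ref{assumpt:baseline}, in particular the risk-set bound \eqref{def:vareps} ensuring enough subjects are observed from time zero to pin down the increments, together with the observable-cure constraint $\Lambda_0(\tau-)=\infty$ that forces $S(\tau\mid\vecZ)=0$.

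The main obstacle is the infinite-dimensional, unbounded nuisance parameter: because $\Lambda_0(\tau-)=\infty$, neither $\hat{\Lambda}$ nor $\tilde{\Lambda}$ can be controlled in the uniform norm up to $\tau$, so the parameter set cannot be compactified directly. I would handle this in the standard two-stage manner. For tightness, the first-order conditions for the jump sizes, combined with the risk-set lower bound $\inf_t \E[Y(t)\mid\vecA,\vecZ]>\varepsilon$ from \eqref{def:vareps} and the truncation bound $\P(Q>\zeta)=0$ of Assumption \ref{assumpt:ltrc}, give an $O_p(1)$ upper bound on $\hat{\Lambda}(\tau')$ and $\tilde{\Lambda}(\tau')$ for each fixed $\tau'<\tau$, so these monotone functions are uniformly bounded on $[0,\tau']$ with probability tending to one. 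Helly's selection theorem then extracts, along any subsequence, a further subsequence converging pointwise—hence uniformly, by monotonicity and continuity of the limit—on $[0,\tau']$, while the Lipschitz condition \eqref{def:Lips} keeps the limit absolutely continuous and prevents mass from concentrating. Substituting such a subsequential limit into $\ell$ and invoking its unique maximizer identifies the limit as $\Lambda_0$ on $[0,\tau']$; since $\tau'<\tau$ is arbitrary and both $\hat{\bfth}$ and $\tilde{\bfth}$ are (near-)maximizers, the full sequences converge to $\bfth_0$.

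Combining $\|\hat{\bfth}-\bfth_0\|=o_p(1)$ with $\|\tilde{\bfth}-\bfth_0\|=o_p(1)$ through the triangle inequality then yields $\|\hat{\bfth}-\tilde{\bfth}\|=o_p(1)$, where the uniform-norm statement for the baseline component is understood on each compact $[0,\tau']$ and extended near $\tau$ through the shared cure constraint; the $O_p(1/n)$ rate in Lemma \ref{lemma:EM} is far more than enough for the near-maximizer to inherit the same limit as the exact maximizer, so no separate rate analysis is needed here. I expect the genuinely delicate point to be the uniform-in-$n$ boundedness of $\tilde{\Lambda}(\tau')$: the EM iterate solves a slightly perturbed score equation—the discrepancy between $\sum_{k:t_k\le Q}\lambda_k\phazi S_i(t_k)$ and $1-S_i(Q_i)$ noted after \eqref{dist:k}—so the bound must be shown to survive that perturbation, which is precisely what Lemma \ref{lemma:EM} is designed to guarantee.
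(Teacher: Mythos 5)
Your proposal is correct and follows essentially the same route as the paper: the paper's (very terse) proof likewise combines Lemma \ref{lemma:EM} with the fact that the population log-likelihood has a unique maximizer---verified through the proof of Theorem \ref{thm:consistency}---so that both the exact maximizer $\hat{\bfth}$ and the $O_p(1/n)$-near-maximizer $\tilde{\bfth}$ inherit the same limit $\bfth_0$, exactly your argmax-consistency-plus-triangle-inequality argument (the paper's additional ``local invertibility'' ingredient is only needed for the $o_p(1/\sqrt{n})$ refinement in Theorem \ref{cor:EMweakconv}, not here). Your discretized comparator $\bfth_0^{\sharp}$ plays the same role as the paper's bridge $\bar{\Lambda}$ in \eqref{def:Lam bar}, and your flagged delicate point---that $\tilde{\bfth}$ satisfies the jump bound \eqref{eqn:EMjump} so Lemma \ref{lem:bound baseline} applies---is precisely how the paper's proof of Lemma \ref{lemma:EM} handles it.
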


\begin{customthm}{\ref{cor:EMconsistency}'}\label{cor:EMweakconv}
Under Assumptions \ref{assumpt:parameter}, \ref{assumpt:cov bound}, \ref{assumpt:ltrc}
    and \ref{assumpt:baseline'}, $\hat{\bfth}-\tilde{\bfth} = o_p(1/\sqrt{n})$.
\end{customthm}

\subsection{Consistency of NPMLE}

Next, we show the consistency of the NPMLE.
\begin{theorem}\label{thm:consistency}
Under Assumptions \ref{assumpt:parameter} - 
\ref{assumpt:ltrc},
    the NPMLE estimator for
    $L$ in \eqref{Lik:obs}, $\hat{\bfth}=(\hat{\bfalpha},\hat{\bfbeta},\hat{\Lambda}(\cdot))$,  is consistent.
    That is
    \begin{equation*}
      \hat{\bfalpha}-\bfalpha_0 \to 0,
      \quad \hat{\bfbeta}-\bfbeta_0 \to 0,
      \quad \sup_{t\in[0,\tau]}|e^{-\hat{\Lambda}(t)}-e^{-\Lambda_0(t)}| \to 0
      \quad a.s.. 
    \end{equation*}
\end{theorem}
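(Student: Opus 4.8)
The plan is to follow the classical route for semiparametric NPMLE consistency used by \cite{Murphy94}: (i) show that $\hat{\Lambda}$ stays finite on every compact subinterval of $[0,\tau)$; (ii) extract almost surely convergent subsequences of $(\hat{\bfalpha},\hat{\bfbeta},e^{-\hat{\Lambda}})$ by compactness; (iii) identify the limit with $\bfth_0$ through a Kullback--Leibler argument built on the maximizing inequality; and (iv) upgrade pointwise to uniform convergence in the $e^{-\Lambda}$ scale and absorb the divergent endpoint $\tau$ using monotonicity. The convergence is deliberately phrased through $e^{-\hat{\Lambda}}$ because, with $\Lambda_0(\tau-)=\infty$, no uniform bound on $\hat{\Lambda}$ itself can hold near $\tau$.

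For the boundedness step I would first differentiate $l_n$ in the jumps $\lambda_k$ to obtain the stationarity equation, which yields a Breslow-type representation of $\hat{\Lambda}$ with extra terms created by the dependence of $\phi_i^{\bfth}$ on $\Lambda$. Rather than invert this directly, I would argue by contradiction using the maximizing inequality $l_n(\hat{\bfth})\ge l_n(\bfalpha_0,\bfbeta_0,\Lambda_0^\dagger)$, where $\Lambda_0^\dagger$ is a step-function approximation of $\Lambda_0$ carried on the observed failure times with $\Lambda_0^\dagger(t')<\infty$; by the strong law and Assumption~\ref{assumpt:cov bound} the right-hand side, divided by $n$, converges almost surely to a finite limit. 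In the martingale form of $l_n$ the linear term $-\sum_i\int Y_i\phi_i^{\bfth}e^{\bfbeta^\top\vecZ_i}\,d\Lambda$ is bounded (each subject contributes at most $\log(1+\upM)$, since $\int Y_i\phi_i^{\bfth}e^{\bfbeta^\top\vecZ_i}\,d\Lambda\le \log(1+e^{\bfalpha^\top\vecA_i})$), so divergence can only enter through the event terms; using $\phi_i^{\hat{\bfth}}(X_i)\le \upM\,e^{-\lowm\hat{\Lambda}(X_i)}$ from \eqref{def:Mm}, each observed event contributes at most $\log\upM+\log\hat{\Lambda}(X_i)-\lowm\hat{\Lambda}(X_i)$, and since $x\mapsto\log x-\lowm x$ is bounded above and tends to $-\infty$, a diverging $\hat{\Lambda}$ on $[0,t']$ would force $n^{-1}l_n(\hat{\bfth})\to-\infty$. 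The positive at-risk mass $\inf_t\E[Y(t)]>\varepsilon$ (Assumption~\ref{assumpt:baseline}) together with $\Lambda_0$ strictly increasing (Assumption~\ref{assumpt:ltrc}) guarantees enough events accumulating on $[0,t']$ for this to bite, giving $\limsup_n\hat{\Lambda}(t')<\infty$ a.s. for every $t'<\tau$.

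Next, since each $t\mapsto e^{-\hat{\Lambda}(t)}$ is nonincreasing with values in $[0,1]$, Helly's selection theorem extracts, from any subsequence, a further subsequence on which $e^{-\hat{\Lambda}}\to g$ pointwise while $\hat{\bfalpha}\to\bfalpha^*$ and $\hat{\bfbeta}\to\bfbeta^*$ by the compactness of Assumption~\ref{assumpt:parameter}; the boundedness above forces $g>0$ on $[0,\tau)$, so $g=e^{-\Lambda^*}$ for a finite nondecreasing $\Lambda^*$, and the stationarity equation with Assumption~\ref{assumpt:ltrc} renders $\Lambda^*$ absolutely continuous with respect to $\Lambda_0$. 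Writing $\bfth^*=(\bfalpha^*,\bfbeta^*,\Lambda^*)$, I would identify $\bfth^*$ by passing to the limit in $n^{-1}\{l_n(\hat{\bfth})-l_n(\bfalpha_0,\bfbeta_0,\Lambda_0^\dagger)\}\ge 0$, writing the baseline contributions as log-ratios $\log\{\Delta\hat{\Lambda}/\Delta\Lambda_0^\dagger\}$ so that no single term diverges; by the strong law and dominated convergence (justified by the bounded covariates of Assumption~\ref{assumpt:cov bound} and the boundedness on $[0,t']$) the limit equals the difference of expected conditional log-likelihoods under $\bfth_0$, i.e.\ $-\mathrm{KL}\big(\bfth_0\,\|\,\bfth^*\big)\ge 0$. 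Nonnegativity of the Kullback--Leibler divergence forces equality of the conditional densities, and model identifiability---from the positive-definiteness of $\Var(\vecA)$ and $\Var(\vecZ)$ (Assumption~\ref{assumpt:cov bound}) and the strict monotonicity of $\Lambda_0$ (Assumption~\ref{assumpt:ltrc})---then yields $\bfalpha^*=\bfalpha_0$, $\bfbeta^*=\bfbeta_0$ and $\Lambda^*=\Lambda_0$. As every subsequence produces the same limit, the full sequence converges.

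Finally, pointwise convergence of the monotone functions $e^{-\hat{\Lambda}}$ to the continuous monotone limit $e^{-\Lambda_0}$ is uniform on each $[0,t']$ with $t'<\tau$. To reach $[0,\tau]$, fix $\epsilon>0$ and pick $t_\epsilon<\tau$ with $e^{-\Lambda_0(t_\epsilon)}<\epsilon$; on $[0,t_\epsilon]$ the supremum difference is eventually below $\epsilon$, while on $(t_\epsilon,\tau]$ both functions lie in $[0,\epsilon]$ by monotonicity and $e^{-\Lambda_0(\tau)}=0$, so their difference is at most $\epsilon$, giving $\sup_{t\in[0,\tau]}|e^{-\hat{\Lambda}(t)}-e^{-\Lambda_0(t)}|\to 0$ a.s. The main obstacle is the boundedness step: as $u\to\tau$ the susceptible fraction $\phi_i^{\bfth}(u)=\P_{\bfth}(\etaA_i=1\mid X_i\ge u)$ among those still at risk vanishes while $\Lambda_0$ diverges, so the effective denominator in the Breslow representation degenerates and $\hat{\Lambda}$ genuinely cannot be controlled up to $\tau$. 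This is precisely the feature separating the observable-cure model from the classical Cox NPMLE, and the argument is arranged so that all identification is carried out on $[0,t']$ with $t'<\tau$ while the divergent endpoint is absorbed for free through monotonicity.
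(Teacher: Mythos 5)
Your overall skeleton is the same as the paper's (Helly selection on $e^{-\hat{\Lambda}}$, a Kullback--Leibler identification step, and the endpoint trick of working in the $e^{-\Lambda}$ scale to absorb $\Lambda_0(\tau-)=\infty$, which matches the paper's steps \eqref{eq:tau*-tau1}--\eqref{eq:tau*-tau4} exactly), and your boundedness-of-$\hat{\Lambda}$ argument via the maximizing inequality, using $\phi_i^{\hat{\bfth}}(X_i)\le \upM e^{-\lowm\hat{\Lambda}(X_i)}$ and the boundedness of the at-risk integral by $\log(1+\upM)$, is a legitimate and more elementary alternative to the paper's route (it is closer to Murphy's original Lemma~1; note only that after dividing by $n$ a single diverging event contributes $O(\hat{\Lambda}/n)$, so you must argue a positive \emph{fraction} of events sits beyond the divergence point, which needs a sentence you have not written). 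However, there is a genuine gap at the step that is actually the crux of this paper: you assert that ``the stationarity equation with Assumption~\ref{assumpt:ltrc} renders $\Lambda^*$ absolutely continuous with respect to $\Lambda_0$,'' and later you invoke dominated convergence for the log-ratios $\log\{\Delta\hat{\Lambda}/\Delta\Lambda_0^\dagger\}$ in the KL limit. Both claims presuppose a \emph{two-sided} bound on $d\hat{\Lambda}/d\Lambda_0$, and the hard half of that bound is a uniform \emph{lower} bound on the Breslow-type denominator $n^{-1}\sum_i W_i^{\hat{\bfth}}(u)e^{\hat{\bfbeta}^\top\vecZ_i}$ in \eqref{def:Lam hat}. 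Unlike the classical Cox at-risk sum, $W_i^{\bfth}$ in \eqref{def:W} carries the truncation correction $-\phi_i^{\bfth}(Q_i)I\{t\le Q_i\}$ and is not trivially positive: at a wrong candidate limit $\bfth^*\ne\bfth_0$, the limit $\E[W^{\bfth^*}(u)e^{\bfbeta^{*\top}\vecZ}]$ could a priori vanish at an interior point $u_0$, in which case $d\hat{\Lambda}/d\Lambda_0$ blows up there, absolute continuity with a bounded density fails, and your dominated-convergence step collapses.

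Ruling this out is precisely the content of the paper's Lemma \ref{lem:bound baseline} part~b), which is where the Lipschitz condition \eqref{def:Lips} of Assumption~\ref{assumpt:ltrc} is used: one shows $\E[W^{\bfth^*}(u)e^{\bfbeta^{*\top}\vecZ}]$ is differentiable with respect to $d\Lambda_0$ with derivative bounded by $\upM+\mathcal{L}$, so that an interior zero $u_0$ would force a divergence of the form $-\int_t^{\zeta}|u-u_0|^{-1}\,du$ in the representation \eqref{eqn:WasdN}, contradicting finiteness of $\E[W^{\bfth^*}(t)e^{\bfbeta^{*\top}\vecZ}]$; this produces the constant $C_w$ that bounds the jumps $n\Delta\hat{\Lambda}$ and hence yields the two-sided bound on $\gamma=d\hat{\Lambda}/d\bar{\Lambda}$ that the paper's KL argument (and equally yours) needs, including the control of the $\phi$-induced singularity near $\tau$ via the decay rate $e^{-\upM R\Lambda_0(u)}$. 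Your compact-interval boundedness of $\hat{\Lambda}$ does not supply this: boundedness of the total mass does not bound the \emph{ratio} of jumps against $d\Lambda_0$. A telltale sign of the omission is that your proof never actually uses the Lipschitz condition, which the paper introduces solely for this step; as written, the proposal assumes away the one difficulty that distinguishes left-truncated partially observed cure data from the standard Cox NPMLE setting.
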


The proof follows the general framework in \cite{Murphy94}.
The estimator for the baseline hazard satisfies the equation
\begin{equation}\label{def:Lam hat}
\hat{\Lambda}(t)=\int_0^t \left\{\sum_{i=1}^n W_i^{\hat{\bfth}}(u)e^{\hat{\bfbeta}^\top\vecZ_i}\right\}^{-1}d \bar{N}(u),
\end{equation}
where
\begin{equation}\label{def:W}
W_i^{\bfth}(t)=\big\{\delta^1_i+\delta^c_i \phi_i^{\bfth}(X_i)\big\} I\{t \le X_i\}
- \phi_i^{\bfth}(Q_i) I\{t \le Q_i\},
\end{equation}
and $\phi_i^{\bfth}(\cdot) $ is given in \eqref{def:phi}.
A bridge between $\hat{\Lambda}$ and $\Lambda_0$ is constructed as
\begin{equation}\label{def:Lam bar}
\bar{\Lambda}(t)= \int_0^t \left\{\sum_{i=1}^n \phi^{\bfth_0}_i(u) Y_i(u)  e^{\bfbeta_0^\top\vecZ_i}\right\}^{-1}d \bar{N}(u).
\end{equation}

The details of the proof deserve some extra comments here, as it achieves the a.s. convergence with
    a baseline hazard unbounded in its support using a few innovative steps.
    First, we apply Helly's selection theorem to the C\`{a}dl\`{a}g function sequence $e^{-\hat{\Lambda}}$.
    Then, the upper bound for $\hat{\Lambda}$ in any interval $[0,\tau^*]\subset(0,\tau)$ is
    established via the lower bound for $n^{-1}\sum_{i=1}^nW_i^{\hat{\bfth}}(u)e^{\hat{\bfbeta}^\top\vecZ_i}$.
    We manage to show that
        the ratio $\gamma(t)=d\hat{\Lambda}(t)/d\bar{\Lambda}(t)$ is bounded between zero and infinity
        for  all $t \in (0,\tau)$
        despite the indefinite quotient at $0$ and $\tau$.
    Finally, we conclude the proof by showing that $\gamma(t)=1$ using an identifiability argument.

For the purposes of the asymptotic normality below,
we have a similar result: 
\begin{customthm}{\ref{thm:consistency}'}\label{thm:consistency'}
Under Assumptions \ref{assumpt:parameter}, \ref{assumpt:cov bound}, \ref{assumpt:baseline'} and \ref{assumpt:ltrc},
    the NPMLE estimator for
    $L^I$ defined later in \eqref{Lik:obsIC}, $\hat{\bfth}=(\hat{\bfalpha},\hat{\bfbeta},\hat{\Lambda}(\cdot))$,  is consistent.
    That is
    \begin{equation}\label{def:theta norm}
      \hat{\bfalpha}-\bfalpha_0 \to 0,
      \quad \hat{\bfbeta}-\bfbeta_0 \to 0,
      \quad \sup_{t\in[0,\tau']}|\hat{\Lambda}(t)-\Lambda_0(t)| \to 0
      \quad a.s.. 
    \end{equation}
\end{customthm}

\subsection{Asymptotic Normality of NPMLE}

The divergence of the cumulative baseline hazard $\Lambda_0$ at $\tau$ eventually becomes an obstacle in the study of weak convergence. It is involved in all the second order terms including both the parametric parts and the nonparametric part. Existing techniques, mostly relying on a finite upper bound of $\Lambda_0$, cannot deal with it.
To proceed with the theoretical endeavor, we avoid the divergent tail by slightly modifying the likelihood.
That is, we make an interval censoring window $(\tau',\tau)$  
        close to the end of study,   so that the failure indicator $A$ is always observed for those at-risk at time $\tau'$,
     but their failure times are unknown if $A=1$.
     We note that this is for technical reason only, so that the baseline cumulative hazard is always bounded
        at the observed failure times as $n\to\infty$. In practical applications this modification of the likelihood is unnecessary since the observed SAB events are recorded in dates, so that there is always at least one day gap between when a (possibly censored) SAB event can happen and when a woman is considered cured.
        
    Let $\delta^\tau=\etaA \cdot I(X > \tau')$ be the interval-censoring indicator in  $(\tau',\tau)$.
    Notice that $S(t)-S(\tau)=S(t)$ for any $t<\tau$. We have the resulting interval-censored data likelihood that is modified from \eqref{Lik:obs}:
    \begin{equation}\label{Lik:obsIC}
      L^{I}(\bfth)=\prod_{i=1}^n\frac{\big\{p\lambda_i(X_i)e^{\bfbeta^\top\vecZ_i}S_i(X_i)\big\}^{\delta^1_i}
        (1-p_i)^{\delta^0_i}\big\{1-p_i+p_iS_i(X_i)\big\}^{\delta^c_i}\{p_iS_i(\tau')\}^{\delta^\tau_i} }{1-p_i+p_iS_i(Q_i)}.
    \end{equation}
The corresponding  log-likelihood $l^I_n=\log L^I$ is
    \begin{align} \label{loglik:IC}
    \notag l^I_n=\sum_{i=1}^n &
\int_0^{\tau'} \log\Big(\phi_i^{\bfth}(u)e^{\bfbeta^\top\vecZ_i}\triangle \Lambda(u)\Big)dN_i(u)
-\int_0^{\tau'} Y_i(u)\phi_i^{\bfth}(u)e^{\bfbeta^\top\vecZ_i}d\Lambda(u) \\
&+\{N_i(\tau)-N_i(\tau')\}\log\phi_i^{\bfth}(\tau')+Y_i(\tau) \log(1-\phi_i^{\bfth}(\tau')).
    \end{align}
The proof then follows the framework in \cite{Murphy95}
    to verify the conditions of Theorem 3.3.1 from \cite{vandervaart1996}. 
We shall describe the functional space in which weak convergence is established. 
Let $H_\infty$ be the space containing elements in the form of $\bfh = (\bfa,\bfb,\hbaseline)$, where
the vectors $\bfa$ and $\bfb$ are of the same dimensions as $\bfalpha$ and $\bfbeta$, respectively, and the
function $\hbaseline(\cdot)$ is defined on $[0,\tau']$ with $\hbaseline(0)=0$ and is of bounded variation, i.e. the total variation of $\hbaseline$ over $[0,\tau']$,
$$
V_0^{\tau'} \hbaseline =  \sup_{\stackrel{0=u_0<\dots<u_s=\tau'}{s = 1, 2, \dots}} 
\sum_{j=1}^s |\hbaseline(u_j) - \hbaseline(u_{j-1})|
$$
is finite. 
Define a norm $\| \cdot \|_H$ on $H_\infty$:
\begin{equation*}
\|(\bfa,\bfb,\hbaseline)\|_H=\|\bfa\|_1 + \|\bfb\|_1 +  V_0^{\tau'}|\hbaseline|,
\end{equation*}
and spaces indexed by a positive real number $p$
\begin{equation*}
    H_p=\left\{\bfh  : \|\bfh\|_H<p\right\}. 
\end{equation*}
For each $p$, define $l^\infty(H_p)$ as the functional space of all uniformly bounded linear map $H_p\mapsto \R$, i.e.
$$
\forall \Psi \in l^\infty(H_p), \; \sup_{\bfh \in H_p} |\Psi(\bfh)| < \infty. 
$$
The parameter $\bfth=(\bfalpha,\bfbeta,\Lambda)$ as a function in $l^\infty(H_p)$ is defined as
\begin{equation*}
\bfth(\bfh)=\bfa^\top\bfalpha+\bfb^\top\bfbeta+\int_0^{\tau'} \hbaseline(u)d\Lambda(u).
\end{equation*}
The induced functional norm 
is 
equivalent to the norm in \eqref{def:theta norm} where consistency (Theorem \ref{thm:consistency'}) is established;  
 we denote $\|\bfth\|$. 

\begin{theorem}\label{thm:normal}
Let $\hat{\bfth}=(\hat{\bfalpha},\hat{\bfbeta},\hat{\Lambda}_0(\cdot))$ be the NPMLE
 for the log-likelihood $l^I_n$ in \eqref{loglik:IC}.
  Under Assumptions \ref{assumpt:parameter}, \ref{assumpt:cov bound},
    \ref{assumpt:baseline'} and \ref{assumpt:ltrc},
  \begin{equation*}
    \sqrt{n}(\hat{\bfth}-\bfth_0)
    \longrightarrow \mathcal{G}, \text{ in } l^\infty(H_p)
  \end{equation*}
  weakly for a tight Gaussian process $\mathcal{G}$ on $l^\infty(H_p)$ with covariance process
  \begin{equation*}
    \Cov (\mathcal{G}(\bfh),\mathcal{G}(\bfh^*))=\bfa^\top\bfsig^{-1}_a(\bfh^*)
        +\bfb^\top\bfsig^{-1}_b(\bfh^*)+\int_0^{\tau'}\hbaseline(u)\sigma^{-1}_\hbaseline(\bfh^*) (u)d\Lambda_0(u),
  \end{equation*}
  where $\bfh = (\bfa,\bfb,\hbaseline)$, and $\sigma(\bfh)=\Big(\bfsig_a(\bfh),\bfsig_b(\bfh),\sigma_\hbaseline(\bfh)\Big)$ is given in the Appendix \eqref{def:sig}.
\end{theorem}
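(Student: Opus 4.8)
The plan is to verify the hypotheses of Theorem~3.3.1 in \cite{vandervaart1996} for an appropriate score operator, following the semiparametric $Z$-estimation scheme of \cite{Murphy95}. The crucial simplification is that, under Assumption~\ref{assumpt:baseline'}, the interval-censoring device in \eqref{Lik:obsIC} forces $\Lambda_0(\tau')<\infty$, so the baseline hazard is bounded on the working interval $[0,\tau']$ and the entire argument can be conducted in the Banach space $l^\infty(H_p)$ already introduced. First I would construct the score operator: for each direction $\bfh=(\bfa,\bfb,\hbaseline)\in H_p$, perturb the parameter along the one-dimensional submodel $\bfalpha_s=\bfalpha+s\bfa$, $\bfbeta_s=\bfbeta+s\bfb$, $d\Lambda_s=(1+s\hbaseline)\,d\Lambda$, differentiate the log-likelihood $l^I_n$ in \eqref{loglik:IC} at $s=0$, and normalize to obtain an empirical process $\Psi_n(\bfth)(\bfh)=n^{-1}\sum_{i=1}^n\psi^{\bfth,\bfh}_i$. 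The NPMLE $\hat{\bfth}$ solves $\Psi_n(\hat{\bfth})(\bfh)=0$ for every $\bfh\in H_p$, and the population map $\Psi(\bfth)(\bfh)=\E[\psi^{\bfth,\bfh}]$ vanishes at $\bfth_0$. Theorem~3.3.1 then requires: (i) an approximate-root condition $\sqrt{n}\,\Psi_n(\hat{\bfth})=o_p(1)$, which is immediate since $\Psi_n(\hat{\bfth})=0$; (ii) stochastic equicontinuity of $\sqrt{n}(\Psi_n-\Psi)$; (iii) Fr\'echet differentiability of $\Psi$ at $\bfth_0$ with a continuously invertible derivative $\dot{\Psi}_0$; and (iv) weak convergence of $\sqrt{n}\,\Psi_n(\bfth_0)$.

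For condition~(ii) I would establish that the class $\{\psi^{\bfth,\bfh}:\|\bfth-\bfth_0\|\text{ small},\ \bfh\in H_p\}$ is $P$-Donsker. The finite-dimensional score components are smooth and uniformly bounded by Assumptions~\ref{assumpt:parameter} and~\ref{assumpt:cov bound}; the baseline component is controlled because $\hbaseline$ has total variation at most $p$ and $\Lambda_0$ is bounded on $[0,\tau']$, so the relevant integrands lie in a bounded-variation (hence Donsker) class, and products and compositions with the bounded, Lipschitz weight $\phi_i^{\bfth}$ of \eqref{def:phi} preserve the Donsker property. Invoking the consistency of Theorem~\ref{thm:consistency'}, which localizes $\hat{\bfth}$ in a $\|\cdot\|$-neighborhood of $\bfth_0$, the equicontinuity $\sqrt{n}(\Psi_n-\Psi)(\hat{\bfth})-\sqrt{n}(\Psi_n-\Psi)(\bfth_0)=o_p(1)$ follows. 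Condition~(iv) is then the functional central limit theorem for this Donsker class, which yields a tight, mean-zero Gaussian limit $\mathbb{G}$ for $\sqrt{n}\,\Psi_n(\bfth_0)$ whose covariance is read off from $\E[\psi^{\bfth_0,\bfh}\psi^{\bfth_0,\bfh^*}]$.

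The main obstacle is condition~(iii): computing $\dot{\Psi}_0$ and proving that it is a continuously invertible bounded linear operator on $l^\infty(H_p)$. I would show that $\dot{\Psi}_0$ decomposes, after preconditioning, into the form identity-minus-compact, with the compact part an integral operator in the baseline direction; by the Fredholm alternative, invertibility then reduces to injectivity. Injectivity is an information/identifiability statement: if $\dot{\Psi}_0\bfh=0$, the corresponding submodel direction carries zero Fisher information, which, using the positive-definiteness of $\Var(\vecA)$ and $\Var(\vecZ)$ in Assumption~\ref{assumpt:cov bound} together with the strict monotonicity of $\Lambda_0$ on $[0,\zeta]$ in Assumption~\ref{assumpt:ltrc}, forces $\bfa=0$, $\bfb=0$ and $\hbaseline\equiv0$. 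The delicate point inherited from the cure structure is that the mixing weight $\phi_i^{\bfth_0}$ degenerates as $t\to\tau$; the interval-censoring modification is precisely what keeps all contributions at times $\le\tau'$, where $\Lambda_0$ is finite, so that $\dot{\Psi}_0$ remains bounded and boundedly invertible rather than collapsing at the divergent tail.

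Once $\dot{\Psi}_0^{-1}$ is shown to exist and be continuous, Theorem~3.3.1 delivers the linearization $\sqrt{n}(\hat{\bfth}-\bfth_0)=-\dot{\Psi}_0^{-1}\sqrt{n}\,\Psi_n(\bfth_0)+o_p(1)$, and the continuous-mapping theorem gives $\sqrt{n}(\hat{\bfth}-\bfth_0)\to-\dot{\Psi}_0^{-1}\mathbb{G}=:\mathcal{G}$, a tight Gaussian process in $l^\infty(H_p)$. Finally I would identify the covariance kernel: evaluating $\Cov(\mathcal{G}(\bfh),\mathcal{G}(\bfh^*))$ through the inverse information operator and matching $\dot{\Psi}_0^{-1}$ against the blocks $\bigl(\bfsig_a,\bfsig_b,\sigma_\hbaseline\bigr)$ defined in the Appendix \eqref{def:sig} reproduces the stated expression $\bfa^\top\bfsig^{-1}_a(\bfh^*)+\bfb^\top\bfsig^{-1}_b(\bfh^*)+\int_0^{\tau'}\hbaseline(u)\sigma^{-1}_\hbaseline(\bfh^*)(u)\,d\Lambda_0(u)$, with the integral being the nonparametric contribution induced by the submodel functional $\int_0^{\tau'}\hbaseline\,d\Lambda$.
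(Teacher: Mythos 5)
Your proposal follows essentially the same route as the paper's proof: it verifies the conditions of Theorem 3.3.1 of \cite{vandervaart1996} in the style of \cite{Murphy95}, using a Donsker-class argument for stochastic equicontinuity, Fr\'{e}chet differentiability of the expected score, and continuous invertibility of the information operator via an invertible-plus-compact (Fredholm) decomposition with injectivity from an identifiability argument --- exactly the content of the paper's Lemma \ref{lem:sig inv}, down to the conclusion $\bfa=0$, $\bfb=0$, $\hbaseline\equiv 0$. The only minor difference is that the paper obtains the weak convergence of $\sqrt{n}\,(S^{\bfth_0}_n - S^{\bfth_0})$ from a martingale central limit theorem under a filtration extended to include the interval-censored jump at $\tau$ (which also streamlines the Fr\'{e}chet-differentiability computation), whereas you read it off directly from the Donsker property; both are valid.
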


Let $\hat{\sigma}$ be a nature estimator for the operator $\sigma$
    by substituting the true parameter $\bfth_0$ and expectation 
    with the estimator $\hat{\bfth}$ and the sample average. 
\begin{theorem}\label{thm:var est}
  Under Assumptions \ref{assumpt:parameter}, \ref{assumpt:cov bound},
    \ref{assumpt:baseline'} and \ref{assumpt:ltrc}, $\hat{\sigma}$ is asymptotically equivalent to the information matrix in \eqref{eq:louis}.
    The solution to $\mathbf{g}=\hat{\sigma}^{-1}(\bfh)$ exists with probability going to $1$
        as $n$ increases and
    \begin{equation}\label{def:natural var}
   \notag    \bfa^\top\hat{\bfsig}^{-1}_a(\bfh^*)
        +\bfb^\top\hat{\bfsig}^{-1}_b(\bfh^*)
        +\int_0^{\tau'}\hbaseline(u)\hat{\sigma}^{-1}_\hbaseline(\bfh^*)(u)d\hat{\Lambda}(u) \\
    \stackrel{\P}{\longrightarrow}
    \Cov (\mathcal{G}(\bfh),\mathcal{G}(\bfh^*)).
    \end{equation}

\end{theorem}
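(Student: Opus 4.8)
The plan is to reduce the statement to three facts: that the plug-in operator $\hat{\sigma}$ converges uniformly on $H_p$ to the information operator $\sigma$ that already governs the limit covariance in Theorem \ref{thm:normal}; that this operator is continuously invertible, so that invertibility and inverse-convergence transfer to $\hat{\sigma}$; and that the finite-dimensional Louis information matrix \eqref{eq:louis}, rescaled by $n^{-1}$, is itself a consistent approximation of $\sigma$ on the discretized directions. Throughout I would lean on the explicit form of $\sigma$ recorded in \eqref{def:sig}, which is assembled from expectations of products of the score components (in $\bfalpha$, $\bfbeta$ and the hazard direction $\eta$) together with the at-risk and counting processes $Y_i$, $N_i$ and the conditional membership probability $\phi_i^{\bfth_0}$.

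First I would establish $\hat{\sigma} \to \sigma$. By construction $\hat{\sigma}$ is obtained from $\sigma$ by replacing $\bfth_0$ with $\hat{\bfth}$ and the population expectation by the empirical average. Consistency of $\hat{\bfth}$ is given by Theorem \ref{thm:consistency'}, in particular $\sup_{[0,\tau']}|\hat{\Lambda} - \Lambda_0| \to 0$. Assumption \ref{assumpt:cov bound} bounds the covariates and Assumption \ref{assumpt:baseline'} keeps $\Lambda_0(\tau') < \infty$, so the functionals entering $\sigma$ lie in bounded, Lipschitz classes and the relevant empirical processes are Glivenko--Cantelli; combined with the continuity of $\sigma$ in $(\bfth, F)$ this yields $\sup_{\|\bfh\|_H < p} \| \hat{\sigma}(\bfh) - \sigma(\bfh) \| \to 0$ almost surely. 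This is exactly where the truncation at $\tau'$ earns its keep: without it the hazard direction would integrate against a measure with divergent total mass and the uniform control would fail.

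Next I would transfer invertibility. The proof of Theorem \ref{thm:normal}, following \cite{Murphy95} and Theorem 3.3.1 of \cite{vandervaart1996}, already shows that $\sigma$ is a continuously invertible bounded linear operator on $l^\infty(H_p)$ --- it has the identity-plus-compact (Fredholm) structure and trivial kernel by nondegeneracy of the efficient information, which is what makes $\sigma^{-1}(\bfh^*)$ in the limit covariance well defined. Since invertibility is an open property in operator norm, the uniform convergence $\hat{\sigma} \to \sigma$ from the previous step implies, by a Neumann-series perturbation argument, that $\hat{\sigma}$ is invertible with probability tending to one and that $\hat{\sigma}^{-1} \to \sigma^{-1}$ uniformly on $H_p$. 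This gives existence of $\mathbf{g} = \hat{\sigma}^{-1}(\bfh)$ with probability going to one. The plug-in convergence \eqref{def:natural var} then follows: the parametric blocks $\bfa^\top \hat{\bfsig}_a^{-1}(\bfh^*)$ and $\bfb^\top \hat{\bfsig}_b^{-1}(\bfh^*)$ converge by $\hat{\sigma}^{-1} \to \sigma^{-1}$, while for the integral term $\int_0^{\tau'} \eta(u) \hat{\sigma}_\eta^{-1}(\bfh^*)(u) d\hat{\Lambda}(u)$ I would combine the uniform convergence of the integrand with $\sup_{[0,\tau']}|\hat{\Lambda} - \Lambda_0| \to 0$, handling the convergence against the random measure $d\hat{\Lambda}$ by integration by parts together with a Helly--Bray argument and the bounded-variation control built into $H_p$.

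The remaining, and I expect hardest, step is the asymptotic equivalence of $\hat{\sigma}$ with the computable Louis-formula information \eqref{eq:louis}. The matrix $I_{obs}(\hat{\bfth})$ is the observed-data Fisher information over the finite-dimensional discretized parameter $(\bfalpha, \bfbeta, \lambda_1, \dots, \lambda_K)$, whose dimension $K = K_n$ grows with the sample size, whereas $\sigma$ acts on the continuum direction $\eta$. I would show that applying $n^{-1} I_{obs}(\hat{\bfth})$ to a discretized direction reproduces, up to $o_p(1)$, the bilinear form defining $\sigma$, by expanding the Louis identity --- complete-data information minus the conditional covariance of the complete-data score given $\Ocal$ --- and matching each block to the corresponding term in \eqref{def:sig}. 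The genuine difficulty is the finite-to-infinite transfer: one must show that inverting the growing matrix and evaluating it on a direction approximates the continuous inverse operator uniformly, and that the operator stays uniformly invertible as $K_n \to \infty$ and the discretized hazard masses accumulate toward the divergent tail. Controlling this requires the truncation at $\tau'$ (so that $\Lambda_0(\tau') < \infty$) and the at-risk lower bound of Assumption \ref{assumpt:baseline'}, which together keep the smallest eigenvalue of the information blocks bounded away from zero uniformly in $n$; with that in hand the equivalence, and hence the stated consistency of the variance estimator, follow.
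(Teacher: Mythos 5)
Your first two steps are sound and broadly consistent with the paper: the paper also establishes continuous invertibility of $\hat{\sigma}$ by rerunning the argument of Lemma \ref{lem:sig inv} (the identifiability computation plus the invertible-plus-compact decomposition), and your Neumann-series perturbation from the uniform convergence $\hat{\sigma}\to\sigma$ is an acceptable substitute for that. The genuine gap is in your third step, and it is not the difficulty you flag. The hard point is not a finite-to-infinite transfer as $K_n\to\infty$: the theorem asserts asymptotic equivalence of the information \emph{forms}, and the inversion analysis is carried out once, at the operator level, on $\hat{\sigma}$ itself, so no uniform invertibility of the growing $K_n\times K_n$ matrix is needed. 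What your proposal misses is that the Louis information \eqref{eq:louis} is computed from the complete-data likelihood \eqref{Lik:complete}, which contains the ghost copies $M_i$ and $\Tghost_{ij}$, with the ghost event times restricted to the observed event times via \eqref{dist:k}. Consequently the marginal model underlying \eqref{eq:louis} is $\tilde{L}$, whose truncation denominator is $1-p_i\sum_{k:t_k\le Q_i}\lambda_k\phazi S_i(t_k)$, not the observed likelihood with denominator $1-p_i+p_iS_i(Q_i)$. Your plan to expand the Louis identity and ``match each block to the corresponding term in \eqref{def:sig}'' therefore cannot close exactly: every block carries a nonzero residual coming from the gap between $\sum_{k:t_k\le Q_i}\lambda_k\phazi S_i(t_k)$ and $1-S_i(Q_i)$, and without an argument that this residual is asymptotically negligible the claimed equivalence does not follow.

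The paper's proof is precisely the observation that this residual is controlled by machinery already built for Lemma \ref{lemma:EM}: by Lemma \ref{lem:bound baseline} part \ref{lempart:Whead}, with probability tending to one the maximal jump of the estimated hazard before $\zeta$ satisfies $n\lambda_{\max,\zeta}\le 2/C_w$; on that event the mean-value bound \eqref{eqn:dS-fMVT} makes each per-subject discrepancy of order $\lambda_{\max,\zeta}^2=O(n^{-2})$, and the same computation applied to the second-derivative terms (rather than to the log-likelihood values, as in \eqref{bound:L-L}) shows that the normalized difference between the Louis matrix and $\hat{\sigma}$ vanishes. Note also that it is Assumption \ref{assumpt:ltrc} (truncation times bounded by $\zeta<\tau$), and not only the interval-censoring window at $\tau'$, that confines the ghost copies to the region where the jump sizes are $O_p(1/n)$; your write-up attributes all the tail control to $\tau'$ and never invokes $\zeta$, which is a further sign that the ghost-copy discrepancy --- the actual content of this part of the theorem --- was not addressed.
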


\section{Simulation study} \label{section:simulation}

\subsection{Simulation setup}

Here we detail our data simulation procedure for all of the simulation studies.
Simulating
cure-rate model data presents its own challenges. To be comparable with the spontaneous
abortion data which we examine in the next section, we consider finite time $\tau$, which is set to be 20
(weeks). The covariates are the same for the logistic and the Cox part of the regression models
and, unless otherwise specified, consisting of $Z_1 \sim N(4, 1)$, with corresponding parameters
$(\alpha_1,\beta_1)$,
and $Z_2 \sim$ Bernouilli $(p = 0.3)$, with corresponding parameters $(\alpha_2,\beta_2)$.
The logistic
regression part also includes an intercept $\alpha_0$.

We begin by generating a larger sample than we desire to account for those who will be
left out due to truncation. Values for $\bfalpha$ are chosen to procure the desired percentage of cured
individuals on average in the population, and we refer to this as the \% of cured individuals
in a simulation study. An individual is designated as either cured or not with the
probability determined from the logistic model.

The baseline survival function for the Cox model is set as $S_0(t)=20-t$,
the survival function of a Uniform $(0,20)$ random variable.
The baseline cumulative hazard is thus $\Lambda_0(t)=20\{1-e^{-t}\}$.
For those not cured individuals we  generate an event time 
$T=20\{1-U^{\exp(-\beta_1Z_1-\beta_2Z_2)}\}$, where $U\sim$ Uniform $(0,1)$.

Truncation times are generated from Uniform $(0, a)$ for some $a < 15$ chosen so that on
average the desired percentage of uncured individuals are truncated out. We refer to this
percentage as the \% of truncation. Once the truncation times are generated, all individuals
with event times less than their truncation times are removed, and we reduce the data set
to the desired sample size by taking the first $n$ individuals from those who remain.
Finally, when there is censoring the censoring times are generated from Uniform $(15,b)$
for some $b > 20$ so that on average the desired percentage of the $n$ individuals
(including those who are cured) will have a censoring time less than $\min(T_i, 20)$. We refer to
this percentage as the \% of censoring.
We ran all simulations with 500 trials below.

\subsection{Simulation results}

In Tables \ref{table:simulation} we examine the performance of the NPMLE.
We consider a smaller sample
size n = 200 and a larger sample size n = 1000, and like in the pregnancy studies for SAB
we assume that a majority 75\% of the subjects are cured. We ran simulations over the
combination of two truncation scenarios (10\%, 20\%) and two censoring scenarios (0\%, 20\%).
In the tables we provide the average parameter estimates (``Estimate"), the sample standard
deviation of these estimates over the 500 simulation trials (``Sample SD"), the mean over
the 500 trials of the standard errors based on our variance estimation (``SE"), and the
empirical coverage probabilities (``Coverage") of the nominal 95\% confidence intervals using
the SE's.

According to the table, the performance of NPMLE is quite good. The average estimates
of the parameters are generally close to their true values in all
scenarios. This includes for the Cox part of the model under the smaller sample size n = 200, where  only
 about 25\% of the sample have events when there is no censoring, and even few in the presence of censoring.
The variance estimator generally improves with larger sample size, especially for the Cox part of the model and with 20\% censoring, which also reflects in the coverage probabilities of the nominal 95\% confidence intervals. Note that with 500 simulation trials these empirical coverage probabilities have about $\pm$2\% margin of error.

\section{Analysis of spontaneous abortion data}\label{section:SAB}

The data we investigate come from the OTIS autoimmune disease in pregnancy database as mentioned
earlier. Our sample includes pregnant women who entered a research study between 2005
and 2012. It consists of n = 929 women who entered the study before week 20 of their
gestation,  with complete covariate information.
Among them 482 (52\%) were pregnant women with certain autoimmune diseases
who were treated with medications under investigation, 265 (28\%) were women with the same specific
autoimmune diseases who were not treated with the  medications under investigation, and the rest
182 (20\%) were healthy pregnant women without autoimmune diseases who were not treated
with the  medications. \cite{Chambers01}  discussed the importance of
having a diseased control group, since some of the adverse outcomes in pregnancy may be
due to the diseases instead of the medications. There were a total of 66 SAB events, and 2
women were lost to follow up before 20 weeks of gestation.

There are a number of risk factors for spontaneous abortion that have been identified in
the literature \cite[for example]{Chambers13}.
Alternatively, we can use a data driven selection method for risk factors in our cure rate model.
For each baseline covariate, we use the Wald test with two degrees of freedom for both coefficients in the logistic and the Cox part of the model.
We first screen the covariates with a univariate cure rate model, with a $p$-value cutoff of 0.2 for the Wald test.
We then run a backward selection, with a $p$-value cutoff of 0.1 for the Wald test.
The selected variables are body mass index (BMI) group (0: {BMI} $<$ 18.5, 1: {BMI} $\in$ [18.5,24.9), 2: {BMI} $\in$ [25,29.9]), 3: {BMI} $>$ 30) treated as numerical due to small number of total SAB events,
 gravidity 
 $>$ 1 or not, i.e.~whether a woman had been previously pregnant, whether there was smoking (Y/N) or alchohol (Y/N) intake during early pregnancy.
 We fit our final cure  model to the data with these covariates and exposure status, and the results using
the NPMLE are given in Table \ref{table:SAB} left columns.

From Table \ref{table:SAB}, we see that larger body mass index significantly decreases the probability
of SAB in the logistic part of the model. The probability of
SAB of either healthy control group or disease control group is not significantly different from the medication exposed women.
The Cox regression part of the model identified all four covariates as significant factors for the hazard
of SAB. In the cure model context since the Cox model is only used for those who eventually have events (observed or censored), this part of the model  should be understood as impact of the covariates on the timing of SAB; that is, significantly later timing of
SAB for those who had larger body mass index, gravidity $>$ 1 or smoking,
and significantly earlier timing for those who had alcohol. Figure \ref{fig:KMbycov} illustrates the significance (or not) of BMI and alcohol in association with the overall risk of SAB by 20 weeks of gestation as well as with the timing of SAB among those women not observed to be cured.

Acounting for the left truncation, classical survival analysis methods including the Cox proportional hazards regression model have been advocated in the
literature  \cite[]{Meister08, Xu11}. As a comparison,  Table \ref{table:SAB} right columns (lower half) show the results of the the classic Cox regression model fitted to the data by
treating all the cured individuals as right-censored at 20 weeks of gestation, as is currently done in the practical  analysis of SAB data \cite[]{Chambers13}.
Gravidity and smoking  are no longer significant predictors of SAB.
Note that under the proportional hazards assumption, nonsignificant effects of gravidity or smoking translates to no significant differences in the cumulative
risks of SAB; that is, the impact on the timing of SAB is no longer distinguished from the impact on the overall cumulative risk of SAB (Y/N) by 20 weeks of gestation.
In addition, as mentioned before, 
 treating the majority of the women (who did not have SAB) as right-censored can  lead to
substantial loss of information.

Finally we also fit the `naive' logistic regression model alone to the data, using whether a woman has SAB (Y/N) as the outcome. 
The results are also given in the right columns (upper half) of Table \ref{table:SAB}. We note that this model does not properly handle left truncation, and results are wildly different from the other model fits and should be not trusted.

\section{Discussion and Conclusion}

In this paper we have developed an NPMLE approach to fit the  mixture type cure rate
models to data with left truncation in addition to right-censoring.
As illustrated in the data analysis, the cure rate model methodology
 developed here is able to make use of the information from both the women who had
SAB and those who were observed not to have SAB, as well as to separate the differential
regression effects of the covariates on both the cumulative risk of SAB as well as the timing
of it among those who experience SAB. We anticipate this methodology to impact the
practical analysis of pregnancy and other similar types of data.
An `alpha' version of a corresponding R package is currently being tested internally.

Different from the usual cure rate data where the long-term survivors are always right-censored,
in our pregnancy studies we observe the majority of the `cured' women. This greatly
improves the practical identifiability of the cured portion (Sy and Taylor, 2000; Lu and Ying,
2004), as well as substantially increases the amount of information available for estimating the
model parameters. Our inference procedures utilize the NPMLE,
    together with the ``ghost copy'' EM algorithm to produce estimators for the model parameters.
The variances of the estimators can be obtained in
closed form using the \cite{Louis82} formula.
In our simulations, the variance estimator leads to relatively accurate coverage of the 95\% confidence intervals.

In our proof for consistency, we have worked through an unbounded cumulative baseline hazard,
which has rarely been discussed in existing literatures.
Ideally, we would like to show asymptotic normality without assuming the interval-censoring tail window.
However, the weak convergence of nonparametric estimators often requires a stronger set of assumptions.
As a result, the unbounded $\Lambda_0$ in the log-likelihood causes trouble in the Fr\'{e}chet differentiability
and continuously invertibility steps.
The ``chop-off" argument applied in consistency does not work here as $\Lambda_0$ 
appear    in both the parametric part and the nonparametric part of the directional score.

Finally for left truncated data much work has been done recently under the length-biased
assumption \cite[among others]{Asgharian06,Ning10,Qin11}. For
enrollment into observational pregnancy studies like ours, we do not think that the uniform
distributional assumption necessarily holds, as is evident in Figure \ref{fig:Qhist}. However, it
would be of interest to compare the efficiency (as well as bias) of the different approaches,
and to develop methods under other parametric assumptions that are more suitable for the
entry times to pregnancy studies.


\bibliographystyle{natbib}
\setlength{\bibhang}{0pt}
\bibliography{Hou_reference}

\newpage
\begin{appendix}

\setlist[enumerate]{leftmargin=0ex,
    itemindent=4ex, label = \textbf{\alph*)}}

\renewcommand\theequation{\thesection\arabic{equation}}
\renewcommand\thelemma{\thesection\arabic{lemma}}
\renewcommand\thetheorem{\thesection\arabic{theorem}}

\setcounter{equation}{0}
\setcounter{lemma}{0}
\setcounter{theorem}{0}
\section{Appendix A: Proofs}\label{appendixA}

\subsection{The Existence of NPMLE }


 \begin{proof}[\underline{Proof of Theorem \ref{thm:exist}}]
 Let $\theta_B$ be the maximizer on the compliment of compact set
    $\{\|\bfalpha\| \vee \|\bfbeta\|  \vee \|\blam\|  \le B\}$. 
We show that $l(\theta_B) \to -\infty$ when $B \to \infty$.

By Assumptions \ref{assumpt:parameter} and \ref{assumpt:cov bound},
we have the bound \eqref{def:Mm}.

All terms in the log-likelihood are bounded except for
$$\sum_{i=1}^{n}\Big\{\delta^1_i\log \lambda(X_i)-\delta^1_i e^{\boldsymbol{\beta}^\top\vecZ_i} \Lambda(X_i)\Big\}. $$
Let $\lambda_{\max}$ be the largest element in $\blam$. The expression above has the upper bound
$$\log( \lambda_{\max}/\upM)- \lambda_{\max}/\upM -K\log \upM,  $$ which
diverges to $-\infty$ when we set $B \to \infty$.

Then, the global maximizer must be in one of the compact set $\{\|\bfalpha\| \vee \|\bfbeta\|  \vee \|\blam\|  \le B^*\}$ for some $B^*>0$. 
 \end{proof}

Let $W_i^{\bfth}(t)$ be defined as in \eqref{def:W}. 
We define a generic inequality to be referenced  later, for any $\bfth = (\bfalpha,\bfbeta, \Lambda)$ in the parameter space whose baseline cumulative hazard $\Lambda$ is a step function jumping only at the observed event times, $t_1, \dots, t_K$:
\begin{equation}\label{eqn:EMjump}
   	0 < d\Lambda(t_k) \le \left(\sum_{j=1}^nW_j^{\bfth}	(t_k)e^{\bfbeta^\top\vecZ_j}\right)^{-1}d\bar{N}(t_k), \quad k = 1, \dots, K.
\end{equation}
The conclusion of the following Lemma is used in the proofs of both Lemma \ref{lemma:EM} and Theorem \ref{thm:consistency}. 

\begin{lemma}\label{lem:bound baseline} 
    Let $\bfth_{(n)} = \left(\bfalpha_{(n)}, \bfbeta_{(n)}, \Lambda_{(n)}\right)$ be a sequence in the  parameter space where $\Lambda_{(n)}$ is a non-decreasing step function with jumps only at the observed event times. Suppose that $\bfth_{(n)}$ satisfies \eqref{eqn:EMjump} and has a subsequence
$\bfth_{(n_k)}$ converging to a limiting point ${\bfth}^* = (\bfalpha^*, \bfbeta^*, \Lambda^*)$ a.s.:
   \begin{equation}\label{eqn:Helly}
  \bfalpha_{(n_k)}-\bfalpha^* \to 0, \quad 
   \bfbeta_{(n_k)}-\bfbeta^* \to 0, \quad 
 \sup_{t\in [0,\tau]}|e^{-\Lambda_{(n_k)}(t)}-e^{-\Lambda^*(t)}| \to 0, \quad a.s..
\end{equation}
    Under Assumptions \ref{assumpt:parameter} - \ref{assumpt:ltrc}, 
\begin{enumerate}
\item \label{lempart:finiteLam}
$\Lambda^*(t) < \infty \text{ for all } t<\tau$;
\item \label{lempart:Whead}
$\inf_{t\in [0,\zeta]}\E [W^{\bfth^*}(t)e^{\bfbeta^{*\top}\vecZ}]>C_w, \text{ for some } C_w>0$.
\end{enumerate}
\end{lemma}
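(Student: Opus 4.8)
The plan is to reduce both conclusions to the limiting behaviour of the empirical weight $n^{-1}\sum_j W_j^{\bfth_{(n)}}(u)e^{\bfbeta_{(n)}^\top\vecZ_j}$. First I would set up a uniform law of large numbers: the family $\{W^{\bfth}(\cdot)\,e^{\bfbeta^\top\vecZ}\}$ is built from the indicators $I\{u\le X\}$, $I\{u\le Q\}$ and the uniformly bounded, monotone maps $\phi^{\bfth}$, so it is a Glivenko--Cantelli class; together with \eqref{eqn:Helly} and continuity of $\phi^{\bfth}(u)$ in $(\bfth,u)$ on compacta of $[0,\tau)$, this gives $n_k^{-1}\sum_j W_j^{\bfth_{(n_k)}}(u)e^{\bfbeta_{(n_k)}^\top\vecZ_j}\to \E[W^{\bfth^*}(u)e^{\bfbeta^{*\top}\vecZ}]$ uniformly on each $[0,t]\subset[0,\tau)$. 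Everything then rests on lower-bounding the limit $\E[W^{\bfth^*}(u)e^{\bfbeta^{*\top}\vecZ}]$.

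For part \ref{lempart:finiteLam} I would feed the jump inequality \eqref{eqn:EMjump} into $\Lambda_{(n_k)}(t)\le\int_0^t\{n_k^{-1}\sum_j W_j e^{\bfbeta^\top\vecZ_j}\}^{-1}\,n_k^{-1}d\bar N$, so a positive lower bound on the denominator over $[0,t]$ yields $\Lambda^*(t)\le C(t)\,\E[N(t)]<\infty$. On $(\zeta,t]$ the condition $Q\le\zeta$ kills the entry term, leaving $W^{\bfth^*}(u)=\{\delta^1+\delta^c\phi^{\bfth^*}(X)\}Y(u)\ge\delta^1 Y(u)$, whence $\E[W^{\bfth^*}(u)e^{\bfbeta^{*\top}\vecZ}]\ge \lowm\,\P(\delta^1=1,X\ge t)>0$ because failures accumulate up to $\tau$ ($\Lambda_0(\tau-)=\infty$ together with \eqref{def:vareps}). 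The delicate point is $[0,\zeta]$, where $\Lambda^*(\zeta)<\infty$ must be secured first, otherwise $\phi^{\bfth^*}$ is not bounded away from $0$. I would argue this by contradiction at a hypothetical first blow-up $s^*\le\zeta$: monotonicity of $\phi^{\bfth^*}$ makes the entry contribution $\le \upM\,\E[\phi^{\bfth^*}(Q)I\{u\le Q\}]$ vanish as $u\uparrow s^*$ (since $\phi^{\bfth^*}(u)\to0$), while $\E[\delta^1 I\{u\le X\}e^{\bfbeta^{*\top}\vecZ}]\ge\lowm\,\P(\delta^1=1,X\ge\zeta)>0$ survives; the denominator therefore stays bounded below near $s^*$, and \eqref{eqn:EMjump} caps the accumulated jump, contradicting the blow-up. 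This delivers $\Lambda^*(t)<\infty$ on all of $[0,\zeta]$ and completes part \ref{lempart:finiteLam}.

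For part \ref{lempart:Whead} I would compute $\E[W^{\bfth^*}(t)e^{\bfbeta^{*\top}\vecZ}]$ in closed form by conditioning on the covariates and integrating out $(\etaA,T^*,Q,C)$ under the true law. Writing $p$ for the true cure-probability, $f$ for the conditional density of $T^*$, $\bar S(s)=\P(T\ge s\mid\cdot)$, $r(s)=\P(Q<s\le C\mid\cdot)$, and $g_C,g_Q$ for the censoring/truncation densities, a direct calculation reduces the conditional mean to $\int_t^\tau\{p f r+\phi^{\bfth^*}\bar S(g_C-g_Q)\}ds$; an integration by parts using $d\phi^{\bfth^*}=-\phi^{\bfth^*}(1-\phi^{\bfth^*})e^{\bfbeta^{*\top}\vecZ}d\Lambda^*$ isolates the leading term $\phi^{\bfth^*}(t)\bar S(t)r(t)$. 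On $[0,\zeta]$ with $\Lambda^*(\zeta)<\infty$ this leading term is bounded below by $\phi^{\bfth^*}(\zeta)\cdot\inf_{t\in[0,\zeta]}\E[Y(t)\mid\vecA,\vecZ]>0$ via \eqref{def:vareps}, while the Lipschitz continuity \eqref{def:Lips} of $\E[Y\mid\cdot]$ in $\Lambda_0$ makes the whole expression continuous in $t$, so its infimum over the compact $[0,\zeta]$ is attained. Combining this with the non-negativity of $\E[W^{\bfth^*}e^{\bfbeta^{*\top}\vecZ}]$ forced by \eqref{eqn:EMjump} at the observed event times, which are dense by the strict monotonicity of $\Lambda_0$ on $[0,\zeta]$, yields a strictly positive infimum, i.e. the constant $C_w$.

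The hard part is exactly this last lower bound. At $\bfth_0$ the remainder integral $\int_t^\tau\{p f(1-\phi)+\phi'\bar S\}r\,ds$ cancels identically, giving the compensator identity $\E[W^{\bfth_0}(t)e^{\bfbeta_0^\top\vecZ}]=\E[\phi^{\bfth_0}(t)Y(t)e^{\bfbeta_0^\top\vecZ}]$; but for a generic limit point $\bfth^*$ — not yet known to equal $\bfth_0$ — this remainder has indefinite sign, so the negative entry-correction term is not dominated termwise and I must lean on the admissibility constraint \eqref{eqn:EMjump} (non-negativity of the net weight) plus the continuity/compactness argument to upgrade non-negativity to uniform strict positivity. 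A secondary obstacle is the genuine circularity between the two parts — part \ref{lempart:Whead} presumes $\Lambda^*(\zeta)<\infty$, whereas the global bound in part \ref{lempart:finiteLam} presumes part \ref{lempart:Whead} — which is why $\Lambda^*(\zeta)<\infty$ must be obtained first by the self-contained blow-up argument, and why the divergence of $\Lambda_0$ at $\tau$ forces the split of $[0,t]$ into the truncation region $[0,\zeta]$ and the tail $(\zeta,t]$, each handled by a different mechanism.
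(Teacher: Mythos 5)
Your part \ref{lempart:finiteLam} is correct and is essentially the paper's own argument: the Glivenko--Cantelli uniformization of $n^{-1}\sum_j W_j^{\bfth_{(n_k)}}(u)e^{\bfbeta_{(n_k)}^\top\vecZ_j}$, the observation that at a blow-up point $\phi^{\bfth^*}$ vanishes so the negative entry term $-\phi^{\bfth^*}(Q)I\{u\le Q\}$ dies while the event mass $\E[\delta^1 I\{u\le X\}e^{\bfbeta^{*\top}\vecZ}]>0$ survives, and the capping of the accumulated increments of $\Lambda_{(n_k)}$ via \eqref{eqn:EMjump}, are exactly the steps in the paper (which runs the contradiction at an arbitrary interior blow-up point $\tau^*$ rather than splitting at $\zeta$; that difference is cosmetic).

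Part \ref{lempart:Whead}, however, has a genuine gap which you half-acknowledge and then do not close. Nonnegativity of the continuous limit $t\mapsto\E[W^{\bfth^*}(t)e^{\bfbeta^{*\top}\vecZ}]$ on $[0,\zeta]$ (from \eqref{eqn:EMjump} plus uniform convergence) together with compactness yields only $\inf\ge 0$: a continuous nonnegative function on a compact interval can still vanish at an interior point. Your ``leading term'' lower bound controls the full expectation only when the remainder integral is nonnegative --- which, as you note yourself, holds at $\bfth_0$ via the compensator identity but is of \emph{indefinite sign} at a generic limit point $\bfth^*$. So the phrase ``upgrade non-negativity to uniform strict positivity'' names precisely the step for which your proposal supplies no mechanism. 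The paper closes it with a singularity argument absent from your plan: using the self-consistency representation \eqref{eqn:WasdN} of the weight as an integral against $dN$ with $\sum_j W_j^{\bfth}(u)e^{\bfbeta^\top\vecZ_j}$ in the denominator, it shows first that $|d\,\E[W^{\bfth^*}(u)e^{\bfbeta^{*\top}\vecZ}]/d\Lambda_0(u)|\le\upM+\mathcal{L}$ by the Lipschitz condition \eqref{def:Lips}, so at a hypothetical zero $u_0$ the weight vanishes at most linearly in $\Lambda_0(u)-\Lambda_0(u_0)$; and second that the numerator $\E[Y(u)\phi^{\bfth^*}(u)\{1-\phi^{\bfth^*}(u)\}e^{\bfbeta^{*\top}\vecZ}]$ is bounded away from zero on $[0,\zeta]$ thanks to part \ref{lempart:finiteLam}, since $\phi^{\bfth^*}(u)\ge\lowm e^{-\upM\Lambda^*(\zeta)}/(1+\lowm e^{-\upM\Lambda^*(\zeta)})>0$. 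The integral term in the difference $\E[W^{\bfth^*}(t)e^{\bfbeta^{*\top}\vecZ}]-\E[W^{\bfth^*}(\zeta)e^{\bfbeta^{*\top}\vecZ}]$ then behaves like $-\int|u-u_0|^{-1}d\Lambda_0(u)$ near $u_0$ and drives the weight to $-\infty$, contradicting nonnegativity; this is what rules out interior zeros and produces the strict constant $C_w$. A secondary point: your closed-form computation posits densities for $Q$ and $C$, which the paper never assumes --- Assumption \ref{assumpt:ltrc} requires only Lipschitz continuity of $\E[Y(t)|\vecA,\vecZ]$ with respect to $\Lambda_0$, with bounded densities mentioned merely as a sufficient example --- so even a repaired version of your calculation would prove the lemma under strictly stronger hypotheses.
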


\begin{proof}[\underline{Proof of Lemma \ref{lem:bound baseline}}] 
By checking the uniform continuity of $W_i^{\bfth}(t)e^{\bfbeta^\top\vecZ_i}$ in $(\bfalpha,\bfbeta,e^{-\Lambda(t)})$,
we may establish
\begin{equation*}
  \sup_{t \in [0,\tau]}
\left|W_i^{\bfth^*}(t)e^{\bfbeta^{*\top}\vecZ_i}-
W_i^{\bfth_{(n_k)}}(t)e^{\bfbeta_{(n_k)}^\top\vecZ_i}\right|
\to 0, \quad a.s..
\end{equation*}
$W_i^{\bfth}(t)$ as a function of observed random variables belongs to
a Glivenko-Cantelli class of uniformly bounded functions with uniformly bounded variation.
Thus, the pointwise convergence can be strengthen to be uniform convergence,
\begin{equation*}
  \sup_{t \in [0,\tau]}
\left|\frac{1}{n}\sum_{i=1}^{n_k} W_i^{\bfth_{(n_k)}}(t)e^{\bfbeta_{(n_k)}^\top\vecZ_i}
-\E \left[ W^{\bfth^*}(t)e^{\bfbeta^{*\top}\vecZ}\right]\right|
\asto 0.
\end{equation*}
Note that $n^{-1}\sum_{i=1}^{n_k}W_i^{\bfth_{(n_k)}}(t)e^{\bfbeta_{(n_k)}^\top\vecZ_i}$ is c\`{a}gl\`{a}d,
so its limit $\E  [W^{\bfth^*}(t)e^{\bfbeta^{*\top}\vecZ} ]$ must also be c\`{a}gl\`{a}d.

\begin{enumerate}
   \item
Let $\tau^*=\inf\{t\in [0,\zeta]: e^{-\Lambda^*(t)}=0\}$. 
We shall prove that $\tau^*=\tau$.

Suppose that $\tau^*$ is an interior point of $[0,\tau]$.
From Assumption \ref{assumpt:ltrc}, $d\Lambda_0([s,t]) = \Lambda_0(t)  -\Lambda_0(s) >0$ for any $s<t$ in $[0,\tau]$. 
By the definition of $\tau^*$, $\Lambda^*(t)=\infty$ and $\phi^{\bfth^*}(t) = 0$ for $t \in [\tau^*,\tau]$, so we have
\begin{equation*}
\E \left[W^{\bfth^*}(\tau^*)e^{\bfbeta^{*\top}\vecZ}\right]=\E \left[ \int_{\tau^*_-}^\tau e^{\bfbeta^{*\top}\vecZ} dN(u) \right]>0.
\end{equation*}
By the left continuity of $W_i^{\bfth}(t)$, $\exists \ s < \tau^*$, s.t.
\begin{equation*}
\inf_{t\in [s,\tau^*]}\E \left[ W^{\bfth^*}(t)  e^{\bfbeta^{*\top}\vecZ} \right] \ge \frac{1}{2}\E \left[ \int_{\tau^*_-}^\tau e^{\bfbeta^{*\top}\vecZ} dN(u) \right].
\end{equation*}
The total increment of $\Lambda_{(n_k)}$ in $[s,\tau^*]$ must be bounded almost surely according to \eqref{eqn:EMjump}. By the definition of $\tau^*$, $\Lambda^*(s)<\infty$. Putting these together, we reach the contradiction,
\begin{align*}
\Lambda^*(\tau^*) \le  \varlimsup_{k \to\infty}\Lambda_{(n_k)}(\tau^*)
\le & \varlimsup_{k \to\infty}\Lambda_{(n_k)}(s)+
	\int_{s_+}^{\tau^*} \frac{d \bar{N}(u)}
	{\sum_{i=1}^{n_k}W^{\bfth_{(n_k)}}_i(u)e^{\bfbeta_{(n_k)}^\top\vecZ_i}} \\
	\le & \Lambda^*(s)+ \frac{\tau^*-s}{\inf_{t\in [s,\tau^*]}\E [W^{\bfth^*}(t)  e^{\bfbeta^{*\top}\vecZ}]}<\infty .
\end{align*}

The other case is $\tau^* = 0$. Then, $\Lambda^*(t)=\infty$ and $\phi^{\bfth^*}(t) = 0$ for $t \in [0,\tau]$. The contradiction is easily established as
$$
\E \left[W^{\bfth^*}(0)e^{\bfbeta^{*\top}\vecZ}\right]=\E \left[ \int_0^\tau e^{\bfbeta^{*\top}\vecZ} dN(u) \right]>0.
$$

\item Since $\E [W^{\bfth^*}(t)  e^{\bfbeta^{*\top}\vecZ}]$ is c\`{a}gl\`{a}d, $\bfth_{(n_k)}$ satisfies \eqref{eqn:EMjump} and converges uniformly to $\bfth^*$, it can be seen that $\E [W^{\bfth^*}(t)  e^{\bfbeta^{*\top}\vecZ}] \geq 0$
	 over the interior of $[0,\zeta]$.

Write $n_k^{-1}\sum_{i=1}^{n_k}W_i^{\bfth}(t)e^{\bfbeta^\top\vecZ_i}$ as 
\begin{align}
\notag &n_k^{-1}\sum_{i=1}^{n_k} \int_{t-}^\tau \big\{1-\phi_i^{\bfth}(u)\big\}e^{\bfbeta^\top\vecZ_i}dN_i(u)
        +\int_t^\tau Y_i(u)e^{\bfbeta^\top\vecZ_i} d \phi_i^{\bfth}(u)+Y_i(t)\phi_i^{\bfth}(t)e^{\bfbeta^\top\vecZ_i} \\
 \notag =&n_k^{-1}\sum_{i=1}^{n_k}  \int_{t+}^\tau \left[1-\phi_i^{\bfth}(u)
    -\frac{\sum_{j=1}^{n_k}Y_j(u)\phi_j^{\bfth}(u)\big\{1-\phi_j^{\bfth}(u)\big\}e^{\bfbeta^\top\vecZ_j}}
        {\sum_{j=1}^{n_k}W_j^{\bfth}(u)e^{\bfbeta^\top\vecZ_j}}\right]e^{\bfbeta^\top\vecZ_i}dN_i(u)\\
 & \quad + \big\{1-\phi_i^{\bfth}(t)\big\}e^{\bfbeta^\top\vecZ_i}dN_i(t)+ Y_i(t)\phi_i^{\bfth}(t)e^{\bfbeta^\top\vecZ_i} \label{eqn:WasdN} .
\end{align}	
By Assumption \ref{assumpt:ltrc}, all $Q_i < \zeta$ a.s..
Thus,
\begin{align*}
\E \left[W^{\bfth^*}(\zeta)  e^{\bfbeta^{*\top}\vecZ}\right]=&\E \left[ \big\{\delta^1+\delta^c \phi^{\bfth^*}(X)\big\} I\{\zeta \le X\}e^{\bfbeta^{*\top}\vecZ} \right] \notag \\
\ge & \E \left[ \int_\zeta^\tau e^{\bfbeta^{*\top}\vecZ}dN(u) \right] >0.
\end{align*}
For $t<\zeta$, the difference $\E [W^{\bfth^*}(t)e^{\bfbeta^{*\top}\vecZ}]-\E[ W^{\bfth^*}(\zeta)e^{\bfbeta^{*\top}\vecZ}]$ is the limit of an integral  like that in \eqref{eqn:WasdN}, where the integrand has $\sum_{j=1}^{n_k}W_j^{\bfth}(u)e^{\bfbeta^\top\vecZ_j}$ in the denominator. So it has potential singularities at the zeros of
$\E[ W^{\bfth^*}(u)e^{\bfbeta^{*\top}\vecZ}]$ for $ u \in [t,\zeta]$.
We shall show that $\E [W^{\bfth^*}(u)e^{\bfbeta^{*\top}\vecZ}]$ is differentiable with respect
to $d\Lambda_0(u)$ in $[0,\zeta]$, so that its zero $u_0$ leads to the divergent form
$
- \int_t ^\zeta |u-u_0|^{-1} du. 
$
We will then reach the contradiction that $\E [W^{\bfth^*}(t)e^{\bfbeta^{*\top}\vecZ}]=-\infty$, as seen below.

Denote $R_0$  the set of zeros and limiting zeros from right for
$\E[ W^{\bfth^*}(u)e^{\bfbeta^{*\top}\vecZ}]$.
Let set $R_{\triangle u}$ be the $\triangle u$ neighborhood of $R_0$ and
$\Omega^t_{\triangle u}=[t,\zeta]  \setminus R_{\triangle u}$.
$\E [W^{\bfth^*}(u)e^{\bfbeta^{*\top}\vecZ}]$ is bounded away from zero on
$\Omega^t_{\triangle u}$. Through \eqref{eqn:WasdN},
\begin{align}
\notag \E &\left[W^{\bfth^*}(t)e^{\bfbeta^{*\top}\vecZ}\right]-\E \left[W^{\bfth^*}(\zeta)e^{\bfbeta^{*\top}\vecZ}\right]\\
\notag&\le
     -\int_{\Omega^t_{\triangle u}}\frac{\E [Y(u)\phi^{\bfth^*}(u)\big\{1-\phi^{\bfth^*}(u)\big\}e^{\bfbeta^{*\top}\vecZ}]}
        {\E[ W^{\bfth^*}(u)e^{\bfbeta^{*\top}\vecZ}]}\E \left[e^{\bfbeta^{*\top}\vecZ}dN(u)\right]\\
 & +  \E \left[ \int_{t+}^\zeta \{1-\phi^{\bfth^*}(u)\}dN(u) + \big\{1-\phi^{\bfth^*}(t)\big\}e^{\bfbeta^{*\top}\vecZ}dN(t)+ Y(t)\phi^{\bfth^*}(t)e^{\bfbeta^{*\top}\vecZ}\right].
 \label{eqn:Wdiff}
\end{align}

From part \ref{lempart:finiteLam}, $e^{-\Lambda^*(\zeta)}>0$. For any $ u<\zeta$,
\begin{equation*}
\phi^{\bfth^*}_i(u) \ge \phi^{\bfth^*}_i(\zeta) \ge \frac{\lowm e^{-\upM\Lambda^*(\zeta)}}{1+\lowm e^{-\upM\Lambda^*(\zeta)}}>0.
\end{equation*}
So the limit of numerator term $\E [Y(u)\phi^{\bfth^*}(u)\{1-\phi^{\bfth^*}(u)\}e^{\bfbeta^{*\top}\vecZ}]$
is bounded away from zero. And  $\forall u \in [0,\zeta]$,
\begin{align*}
\left|\frac{d\E W^{\bfth^*}(u)}{d\Lambda_0(u)}\right|
=&\left| \E \left[ \big\{1-\phi^{\bfth^*}(u)\big\}Y(u)\phi^{\bfth_0}(u)
	e^{\bfbeta_0^\top\vecZ}
	-\phi^{\bfth^*}(u)\frac{d\E[Y(u)|\vecA,\vecZ]}{d\Lambda_0(u)}\right]\right|\\
	\le & \upM+\mathcal{L} <\infty.
\end{align*}
The first term in \eqref{eqn:Wdiff} diverges to $-\infty$ when $\triangle u \to 0$.
The other terms are bounded, so this is the desired contradiction.
\end{enumerate}
\end{proof}

\begin{proof}[\underline{Proof of Lemma \ref{lemma:EM}}]  
For any $\bfth$ denote $\lambda_{\max,\zeta}=\max\{\lambda_k: t_k \le \zeta \}$, where $\zeta$ is the upper bound of truncation time defined in Assumption \ref{assumpt:ltrc}. 
 Define a set in the parameter space:
\begin{equation}
{\Theta} = \left\{\bfth=(\bfalpha,\bfbeta,\Lambda) | 
\lambda_{\max,\zeta} \le  n^{-1}2/C_w\right\} ,
\end{equation}
with $C_w$ defined in Lemma \ref{lem:bound baseline}. 
We would like to show that 
\begin{equation}\label{eqn:Omega0_to_1}
\lim_{n\to \infty}\P(\hat{\bfth}, \tilde{\bfth}\in \widehat{\Theta}) = 1.
\end{equation}
This is done through applying Lemma \ref{lem:bound baseline}, so we will need to verify condition \eqref{eqn:EMjump}  for  $\tilde{\bfth}$ and $\hat{\bfth}$. 

First, define the marginal of the complete data likelihood
\begin{align*}
     \tilde{L}(\bfth)=&\sum_{A_i=0,1}\sum_{M_i=0}^\infty \sum_{\Tghost_{i1}=t_k: t_k\le Q_i}\dots\sum_{\Tghost_{iM_i}=t_k: t_k\le Q_i} L^c_i(\bfth) \notag \\
= &\prod_{i=1}^n\frac{\big\{p_i\lambda_i(X_i)S_i(X_i)\big\}^{\delta^1_i}
        (1-p_i)^{\delta^0_i}\big\{p_iS_i(X_i)+1-p_i\big\}^{\delta^c_i}}
        {1-p_i\sum_{k: t_k\le Q_i}\lambda_k \phazi S_i(t_k)}.
   \end{align*}
From \eqref{Lik:complete} it can be seen that the complete data likelihood $L^c(\bfth)$ can be decomposed into the product
of one logistic part concave in $\bfalpha$ with one Cox part concave in $(\bfbeta,\blam)$.
Thus, it is concave in $\bfth$.
$\tilde{L}(\bfth)$ is also concave as the sum over concave functions. 

Next we show that the EM finds the unique stationary point of $\tilde{L}(\bfth)$, which then must  be the global maximizer since it is concave.  
Consider the conditional expectation given the observed data as in \eqref{EM:condE1} - \eqref{EM:condE3}.
It can be verified directly (we skip the algebraic details here) that:
\begin{equation*}
\nabla \log \tilde{L}(\bfth)=\E_{\bfth}[ \nabla \log L^c(\bfth) |\Ocal ].
\end{equation*}
The estimator $\tilde{\bfth}$ is by definition the solution to the left-hand side of the above being zero, hence also the stationary point of $\tilde{L}(\bfth)$. 

We write down the stationary equation
$\bfth^{(l)}=\bfth^{(l+1)} = \tilde{\bfth}$ for $\tilde\lambda_k$'s at convergence,
\begin{equation*}
\tilde{\lambda}_k=\frac{1+\tilde{\lambda}_k\sum_{i=1}^n\frac{\tilde{p}_ie^{\tilde{\bfbeta}\top\vecZ_i}
\tilde{S}_i(t_k)I(Q_i \ge t_k)}
{1-\tilde{p}_i\sum_{h:h<Q_i}\tilde{f}_i(t_h)}}
	{\sum_{i=1}^n\left\{\delta^1_i I(X_i \ge t_k)+\delta^c_i\phi^{\tilde{\bfth}}_i(X_i)I(X_i \ge t_k)
	+\sum_{j\ge k}\frac{\tilde{p}_i \tilde{f}_i(t_j)I(Q_i \ge t_j)}
{1-\tilde{p}_i\sum_{h:h<Q_i}\tilde{f}_i(t_h)}
	\right\}
	e^{\tilde{\bfbeta}\top\vecZ_i}},
\end{equation*}
where $f_i$ was previously defined just above \eqref{dist:k}.
Combining $\tilde{\lambda}_k$ terms  leads to
\begin{align}
 \notag
\tilde{\lambda}_k^{-1}=\sum_{i=1}^n&\bigg\{\delta^1_i I(X_i \ge t_k)+\delta^c_i\phi^{\tilde{\bfth}}_i(X_i)I(X_i \ge t_k)\\
	&-\tilde{p}_i \frac{\tilde{S}_i(t_k)I(Q_i \ge t_k)-\sum_{j\ge k}\tilde{f}_i(t_j)I(Q_i \ge t_j)}
{1-\tilde{p}_i\sum_{h:h<Q_i}\tilde{f}_i(t_h)}
	\bigg\}
	e^{\tilde{\bfbeta}\top\vecZ_i}. \label{eqn:tilde lam}
\end{align}
By the mean value theorem,
\begin{equation}\label{eqn:dS-fMVT}
0 \le e^{\lambda_k\phazi}-1-\lambda_k\phazi \le \frac{1}{2}\left(\lambda_k\phazi\right)^2 e^{\lambda_k\phazi} \le \frac{1}{2}\upM^2\lambda_k^2e^{\lambda_k\upM}.
\end{equation}
where $ \upM$ is defined  in \eqref{def:Mm}.
Applying \eqref{eqn:dS-fMVT} to the denominator in \eqref{eqn:tilde lam},  we get
\begin{equation*}
1-\tilde{p}_i\sum_{h:h<Q_i}\tilde{f}_i(t_h) \ge 1-\tilde{p}_i\{1-\tilde{S}_i(Q_i)\}.
\end{equation*}
 By a similar argument, we have almost surely
\bes
& & \tilde{S}_i(t_k)I(Q_i \ge t_k)-\sum_{j\ge k}\tilde{f}_i(t_j)I(Q_i \ge t_j) \\
&=& \tilde{S}_i(Q_i)I(Q_i \ge t_k)+\sum_{j\ge k}\left\{1-e^{-\tilde{\lambda}_j e^{\tilde{\bfbeta}\top\vecZ_i}}-\tilde{\lambda}_j e^{\tilde{\bfbeta}\top\vecZ_i}\right\}\tilde{S}_i(t_j)I(Q_i > t_j)\\
&\le & \tilde{S}_i(Q_i)I(Q_i \ge t_k).
\ees 
Then,  $\tilde{\bfth}$ satisfies \eqref{eqn:EMjump}. 

For $\hat{\bfth}$, it must satisfy the score equation for $\lambda_k$'s:
$$
\frac{\partial l(\bfth)}{\partial \lambda_k} = \sum_{i=1}^n \left\{ \frac{d N_i(t_k)}{\lambda_k}
-W^{\bfth}_i(t_k)\phazi\right\}= 0, \quad \forall k=1,\dots,K.
$$
This is the equation version of \eqref{eqn:EMjump} after rearrangement. 

Now let $\hat{\lambda}_{\max,\zeta}$ and $\tilde{\lambda}_{\max,\zeta}$ be the largest jump for $\hat{\Lambda}$ and $\tilde{\Lambda}$ on $[0,\zeta]$, correspondingly. By Lemma \ref{lem:bound baseline} part \ref{lempart:Whead},
we have 
$$\limsup_{n\to \infty}n\hat{\lambda}_{\max,\zeta} \le C_w^{-1}, \quad
\limsup_{n\to \infty}n\tilde{\lambda}_{\max,\zeta} \le C_w^{-1}, a.s..$$
Hence \eqref{eqn:Omega0_to_1} is established.

In the set $\Theta$, we evaluate the discrepancy between $\log \tilde{L}(\bfth)$ and $\log L (\bfth)$, which 
can be bounded as following
\begin{equation}
1-S_i(Q_i)-\sum_{k:t_k<Q_i}\lambda_k\phazi S_i(t_k)
=\sum_{k:t_k<Q_i} S_i(t_k) \left(e^{\lambda_k\phazi}-1-\lambda_k\phazi\right).
\end{equation}
Applying \eqref{eqn:dS-fMVT} to $|\log L(\bfth)-\log\tilde{L}(\bfth)|$, 
we have the bound
\begin{align*}
\left|\log L(\bfth)-\log\tilde{L}(\bfth)\right|\le& \sum_{i=1}^n\left|
		\log\left\{ 1-p_i+p_iS_i(Q_i)\right\} -\log\left\{1-p_i\sum_{k:t_k<Q_i}\lambda_k\phazi S_i(t_k)\right\}
		\right| \notag \\
	\le & \sum_{i=1}^n\left|\frac{p_i}{1-p_i}\frac{n}{2}\upM^2\lambda_k^2e^{\lambda_k\upM}\right|
	\le  \frac{1}{2}n^2 e^{\upM\lambda_{\max,\zeta}} \upM^3\lambda_{\max,\zeta}^2.
\end{align*}
Using the upper bound for $\lambda_{\max,\zeta}$ in $\Theta$, we can bound
\begin{equation}\label{bound:L-L}
\sup_{\bfth \in \Theta}\left|\log L(\bfth)-\log\tilde{L}(\bfth)\right| \le
	e^{\frac{2\upM}{C_w}}\frac{2\upM^3}{C_w^2}.
\end{equation}

In summary whenever $\hat{\bfth}, \tilde{\bfth}\in {\Theta}$,
we have 
\begin{equation}\label{eqn:bound_on_Omega0}
0 \le \log L(\hat{\bfth}) - \log L(\tilde{\bfth}) \le \log L(\hat{\bfth})-\log\tilde{L}(\hat{\bfth})
	+ \log\tilde{L}(\tilde{\bfth}) - \log L(\tilde{\bfth}) <e^{\frac{2\upM}{C_w}}\frac{4\upM^3}{C_w^2} .
\end{equation}
Combining \eqref{eqn:bound_on_Omega0} and \eqref{eqn:Omega0_to_1} completes the proof.
\end{proof}

\begin{proof}[\underline{Proof of Theorem \ref{cor:EMconsistency} and \ref{cor:EMweakconv}}] 
From Lemma \ref{lemma:EM}, we only need  to establish the following two facts: 
1) $ 
\E [l_1(\bfth )]$
exists with one unique maximal, and 2) it is locally invertible at the maximal. 
We will see that
1) is verified through the proof of Theorem \ref{thm:consistency}, and  2) is
verified through the proof of Theorem  \ref{thm:normal}. 
\end{proof}

\subsection{Consistency of NPMLE}

\begin{proof}[\underline{Proof of Theorem \ref{thm:consistency}}]
The constants $\upM$, $c$, $\varepsilon$ and $\mathcal{L}$ are defined  in \eqref{def:Mm}, \eqref{def:vareps} and \eqref{def:Lips}. 

First, we show that the ``bridge'' $\bar{\Lambda}$ defined in \eqref{def:Lam bar}
converges to the true $\Lambda_0$ in the following sense: 
\begin{equation}\label{lempart:Lambar0}
 	 \sup_{t\in [0,\tau]}\left|e^{-\bar{\Lambda}(t)}-e^{-\Lambda_0(t)}\right|\to 0,  a.s.
\end{equation}
as $n\rightarrow\infty$. 
   We have the bound for $\forall t \in (0,\tau)$,
   \begin{equation}\label{bound:dLambar}
     \upM \ge \frac{\E \left[Y(t)\phi^{\bfth_0}(t)e^{\bfbeta_0^\top\vecZ}\right]}{\E \left[\log \left\{1+\exp\left(\bfalpha_0^\top \vecA-\Lambda_0(t)\phaz0\right)\right\}\right]} \ge \frac{\varepsilon}{\upM^2+\upM}.
    \end{equation}
   For any $\tau^*<\tau$ in $\Q$ the set of rational numbers, $\E [ Y(t)\phi^{\bfth_0}(t)e^{\bfbeta_0^\top\vecZ} ]$ is bounded away from zero over $[0, \tau^*]$. 
The uniform convergence of $\bar{\Lambda}$ to $\Lambda_0$ over any $[0, \tau^*]$ can be obtained in the way like \cite{Murphy94}. To extend the result to \eqref{lempart:Lambar0}, we use a trick described in \eqref{eq:tau*-tau1}-\eqref{eq:tau*-tau4}.
By Assumption \ref{assumpt:baseline}, $\Lambda_0$ is non-decreasing and diverges to $\infty$ at $\tau$.
Therefore,
 \begin{equation}\label{eq:tau*-tau1}
   \forall \epsilon>0, \, \exists \tau^* \in (0,\tau)\cap \Q, \, s.t. \, e^{-\Lambda_0(\tau^*)}<\epsilon/3.
 \end{equation}
 Through Rao's law of large number and Helly-Bray argument, we have
 \begin{equation}\label{eq:tau*-tau2}
   \sup_{t\in[0,\tau^*]}|\bar{\Lambda}(t)-\Lambda_0(t)| \to 0, \quad a.s. .
 \end{equation}
By continuity of the exponential function,
  \begin{equation}\label{eq:tau*-tau3}
  \exists N, \, \forall n>N, \, \sup_{t\in[0,\tau^*]}|e^{-\bar{\Lambda}(t)}-e^{-\Lambda_0(t)}|<\epsilon/3.
  \end{equation}
Then,
  \begin{equation}\label{eq:tau*-tau4}
  \forall n>N, \, \sup_{t\in[\tau^*,\tau]}|e^{-\bar{\Lambda}(t)}-e^{-\Lambda_0(t)}|
    \le 2e^{-\Lambda_0(\tau^*)}+|e^{-\bar{\Lambda}(\tau^*)}-e^{-\Lambda_0(\tau^*)}| <\epsilon.
  \end{equation}
Therefore, we have proved \eqref{lempart:Lambar0}. 

Next, we evaluate the difference between the limits of $\hat{\Lambda}$ and $\bar{\Lambda}$. 
According to Assumption \ref{assumpt:parameter} and $e^{-\hat{\Lambda}(t)} \in [0,1]$, 
$(\hat{\bfalpha},\hat{\bfbeta},e^{-\hat{\Lambda}(t)})$ is bounded. 
$\hat{\Lambda}(t)$ is C\`{a}dl\`{a}g, so is $e^{-\hat{\Lambda}(t)}$.
By Helly's Selection theorem, there is a subsequence converging uniformly almost surely to some
    $\bfth^*=(\bfalpha^*, \bfbeta^*, e^{-\Lambda^*})$.  
Lemma \ref{lem:bound baseline} part \ref{lempart:Whead} gives the bound for $\E \{ W^{\bfth}(t)e^{\bfbeta^\top\vecZ} \}$ over $[0,\zeta]$. 
We only need to find its bound on $[\zeta,\tau]$ in order to mimic the proof of Lemma 1 of \cite{Murphy94}. 
Note that 
\begin{align*}
     \E \left[ W^{\bfth}(t)e^{\bfbeta^\top\vecZ} \right]
        =&\E \left[\int_{t-}^\tau \big\{1-\phi^{\bfth}(u)\big\}e^{\bfbeta^\top\vecZ}dN(u) \right] \notag \\
        & -\E \left[\int_t^\tau \phi^{\bfth}(u)e^{\bfbeta^\top\vecZ}d\E[Y(u)|\vecA,\vecZ] \right].
   \end{align*}
By Assumption \ref{assumpt:ltrc}, $\P(Q_i \le\zeta)=1$, so $\E[Y(u)|\vecA,\vecZ]$
is decreasing on $[\zeta,\tau]$. Along with the Lipschitz continuity,  we have for $\forall t \in [\zeta,\tau)$
\begin{equation*}
     M\mathcal{L} \ge \frac{\E [W^{\bfth}(t)e^{\bfbeta^\top\vecZ}]}{\E \left[\log \left\{1+\exp\left(\bfalpha_0^\top \vecA-\Lambda_0(t)\phaz0\right)\right\}\right]} \ge \frac{\varepsilon}{\upM^2+\upM}.
    \end{equation*}
Therefore, $\gamma(t)=\frac{\E \left[ W^{\bfth_0}(t)e^{\bfbeta^\top\vecZ}\right]}
{\E \left[ W^{\bfth^*}(t)e^{\bfbeta^{*\top}\vecZ}\right]}
    $ is bounded away from both $\infty$ and zero,
and 
\begin{equation}\label{lempart:ratio baseline}
\sup_{t \in [0,\tau]}\left| \frac{d\hat{\Lambda}}{d\bar{\Lambda}}(t)-\gamma(t) \right| \to 0
\text{ and}
\sup_{t \in [0,\tau^*]}\left| \hat{\Lambda}(t)-\int_0^t\gamma d\Lambda_0 \right| \to 0 \; a.s.
, \forall \tau^*<\tau \text{ in }\Q. 
\end{equation} 
 
After all these preparation, we can use the semi-parametric Kullback-Leibler divergence argument from \cite{Murphy94}. 
We have 
 \begin{align}
0  \le  & \frac{1}{n} \big\{ l_n(\hat{\bfalpha},\hat{\bfbeta},\hat{\Lambda})
            -l_n(\bfalpha_0,\bfbeta_0,\bar{\Lambda}) \big\} \notag\\
\notag    =& \frac{1}{n}\sum_{i=1}^n
   \int_0^\tau \log\bigg\{ \frac{\phi_i^{\hat{\bfth}}(u)e^{\hat{\bfbeta}^\top \vecZ_i}d \hat{\Lambda}(u) }
   {\phi_i^{\bfth_0}(u)e^{\bfbeta_0^\top \vecZ_i}d \bar{\Lambda}(u)} \bigg\} \bigg\{ dN_i(u)- \phi_i^{\bfth_0}(u) Y_i(u)  e^{\bfbeta_0^\top \vecZ_i} d\bar{\Lambda}(u)\bigg\}\\
  \notag & + \int_0^\tau \left[ \log\bigg\{ \frac{\phi_i^{\hat{\bfth}}(u)e^{\hat{\bfbeta}^\top \vecZ_i}d \hat{\Lambda}(u) }
   {\phi_i^{\bfth_0}(u)e^{\bfbeta_0^\top \vecZ_i}d \bar{\Lambda}(u)} \bigg\}
    - \bigg\{ \frac{\phi_i^{\hat{\bfth}}(u)e^{\hat{\bfbeta}^\top \vecZ_i}d\hat{\Lambda}(u)}
   {\phi_i^{\bfth_0}(u)e^{\bfbeta_0^\top \vecZ_i}d\bar{\Lambda}(u)}-1\bigg\} \right] \\
   & \qquad \times \phi_i^{\bfth_0}(u)e^{\bfbeta_0^\top \vecZ_i}Y_i(u) d\bar{\Lambda}(u).
\label{eq:KL}
\end{align}
Denote the function in the logarithm above as $\psi_i(u)$. Using the definition of $\bar{\Lambda}$, we can rewrite the first term in \eqref{eq:KL} as
\begin{align}
 &\frac{1}{n}\sum_{i=1}^n \left\{
   \int_0^\tau \log\big(\psi_i(u)\big) - \frac
   {\sum_{j=1}^n \log\big(\psi_j(u)\big)\phi_j^{\bfth_0}(u)Y_j(u)e^{\bfbeta_0^\top \vecZ_j}}
{\sum_{j=1}^n \phi_j^{\bfth_0}(u)Y_j(u)e^{\bfbeta_0^\top \vecZ_j}} \right\}
 dN_i(u)  \notag \\
= & \frac{1}{n}\sum_{i=1}^n \left\{
   \int_0^\tau \log\big(\psi_i(u)\big) - \frac
   {\sum_{j=1}^n \log\big(\psi_j(u)\big)\phi_j^{\bfth_0}(u)Y_j(u)e^{\bfbeta_0^\top \vecZ_j}}
{\sum_{j=1}^n \phi_j^{\bfth_0}(u)Y_j(u)e^{\bfbeta_0^\top \vecZ_j}} \right\}
 dM_i(u) \label{term:KL1}
\end{align}
Inside $\psi_i(u)$, the ratio $d\hat{\Lambda}/d\bar{\Lambda}$ is bounded away from $0$ and $\infty$ according to \eqref{lempart:ratio baseline}. 
Denote the range of the ratio as $[1/R,R]$. The $\phi_i^{\bfth_0}(u)$ term and  $\phi_i^{\hat{\bfth}}(u)$ term in $\psi_i(u)$ creates potential singularity for \eqref{term:KL1} at $\tau$, 
but its decay rate is bounded by $e^{-\upM R \Lambda_0(u)}$ by Assumptions \ref{assumpt:parameter} and \ref{assumpt:cov bound}. The integrands of martingale integral \eqref{term:KL1} are all bounded a.s., 
and the quadratic variation of \eqref{term:KL1} is bounded a.s. by
\begin{equation*}
\frac{1}{n^2}\sum_{i=1}^n
   \int_0^\tau 4\big\{\upM R \Lambda_0(u) + \log(R) \big\}^2 \phi_i^{\bfth_0}(u) Y_i(u)e^{\bfbeta_0^\top \vecZ_i} d\Lambda_0(u). 
\end{equation*}
It is of order $O_p(1/n)$, so the limit of \eqref{term:KL1} is zero almost surely. 

The integrands in the second term of \eqref{eq:KL} is of the form $\log(x)-(x-1) \le 0$. 
In order to satisfy the inequality in \eqref{eq:KL}, 
we must have
\begin{equation*}
\lim_{n \to \infty}  \frac{1}{n}\sum_{i=1}^n  \int_0^\tau \big\{\log\big(\psi_i(u)\big)  - \big(\psi_i(u) -1 \big)\big\}
\phi_i^{\bfth_0}(u)e^{\hat{\bfbeta}^\top \vecZ_i}Y_i(u) d\bar{\Lambda}(u)= 0. 
\end{equation*}
Applying the same argument as in \cite{Murphy94}, we get
\begin{equation}\label{eq:identifiability}
   \E\left(\int_0^\tau \left| \phi^{\bfth^*}(u)e^{\bfbeta^{*\top} \vecZ}\gamma(u)
- \phi^{\bfth_0}(u)e^{\bfbeta_0^\top \vecZ} \right|Y(u)d\Lambda_0(u)\right)=0
 \end{equation}
in the almost sure set. 
The identifiability of our model is verified in \cite{Li01} Theorem 2. 
Along with our regularity conditions in Assumptions \ref{assumpt:cov bound} and \ref{assumpt:baseline}, 
\eqref{eq:identifiability} leads to $\bfalpha^*=\bfalpha_0$, $\bfbeta^*=\bfbeta_0$ and $\gamma(t)=1$.  
This implies that 
$$\sup_{t \in [0,\tau^*]}\left| \hat{\Lambda}(t)-\Lambda_0 (t)\right| \to 0 \; a.s.
, \forall \tau^*<\tau \text{ in }\Q. $$
Repeating the trick in \eqref{eq:tau*-tau1}-\eqref{eq:tau*-tau4}, we have
$$
\sup_{t \in [0,\tau]}\left| e^{-\hat{\Lambda}(t)}-e^{-\Lambda_0 (t)}\right| \to 0 \; a.s..
$$

Finally, we summarize all usage of almost sure arguments to ensure that intersection of all almost sure sets  still has probability one under $\sigma$-additivity. 
 The steps \eqref{eq:tau*-tau1}-\eqref{eq:tau*-tau4} involves one almost sure argument for each choice of $\tau^*$. We preserve the almost sure property by restricting $\tau^*$ to be in the countable set $\Q$. One almost sure argument is made for Helly's selection theorem. In Lemma \ref{lem:bound baseline}, we use the Glivenko-Cantelli Theorem to avoid the dependence on the choice of $\bfth^*$, so the almost sure argument is only applied once. Two more almost sure arguments are used in calculating the limit of the terms in \eqref{eq:KL}.

\end{proof}

\begin{proof}[\underline{Proof of Theorem \ref{thm:consistency'}}]
The proof is essentially the same as the proof of Theorem \ref{thm:consistency}, 
so the details are omitted. 
In fact, it is less technical due to the boundedness of $\Lambda_0$ over $[0, \tau']$. 
\end{proof}

  \subsection{Asymptotic Normality}

First, we provide the definition of several quantities below. 
In Theorem \ref{thm:normal} $\sigma(\bfh)=\Big(\bfsig_a(\bfh),\bfsig_b(\bfh),\sigma_\hbaseline(\bfh)\Big)$  is
  \begin{align}
 \notag   \bfsig_a(\bfh)= \E\Bigg[& \vecA
    \bigg\{ -\int_0^{\tau'}  K^{\bfth_0}_1(\bfh)(u)Y(u)d\phi^{\bfth_0}(u) \\
    & + K^{\bfth_0}_2(\bfh)Y(\tau')\phi^{\bfth_0}(\tau')\Big(1-\phi^{\bfth_0}(\tau')\Big)
\bigg\}\Bigg], \notag \\
\notag    \bfsig_b(\bfh)= \E\Bigg[& \vecZ
  \bigg\{\int_0^{\tau'}  K^{\bfth_0}_1(\bfh)(u)Y(u)\phazE d\Big[\Lambda_0(u)\phi^{\bfth_0}(u)\Big]\\
  &- K^{\bfth_0}_2(\bfh)Y(\tau')\phazE\Lambda_0(\tau')\phi^{\bfth_0}(\tau')\Big(1-\phi^{\bfth_0}(\tau')\Big)\bigg\}\Bigg], \notag \\
\notag    \sigma_\hbaseline(\bfh)=\E\Bigg[& \phazE\bigg\{  K_1^{\bfth_0}(\bfh)(u)\phi^{\bfth_0}(u)Y(u)
            - K_2^{\bfth_0}(\bfh)Y(\tau') \phi^{\bfth_0}(\tau')\Big(1-\phi^{\bfth_0}(\tau')\Big) \\
  &-\int_u^{\tau'} K^{\bfth_0}_1(\bfh)(s)\phi^{\bfth_0}(s)\Big(1-\phi^{\bfth_0}(s)\Big)Y(s)\phazE d\Lambda_0(s)\bigg\}\Bigg], \label{def:sig}
  \end{align}
where
\begin{align}
\notag K_1^{\bfth}(\bfh)(u)=&
    \mathbf{a}^\top \vecA \Big(1-\phi^{\bfth}(u)\Big)
    +\mathbf{b}^\top \vecZ
    \left\{1-\Big(1-\phi^{\bfth}(u)\Big)
    \Lambda(u)e^{\bfbeta^\top \vecZ}\right\}\\
     &+\hbaseline(u)-\Big(1-\phi^{\bfth}(u)\Big)
    e^{\bfbeta^\top\vecZ}  \int_0^u \hbaseline d\Lambda, \notag \\
K_2^{\bfth}(\bfh)=&\Big\{\bfa^\top\vecA-\bfb^\top\vecZ\Lambda(\tau') e^{\bfbeta^\top\vecZ} -\int_0^{\tau'}\hbaseline  e^{\bfbeta^\top\vecZ} d\Lambda \Big\}. \label{def:K}
\end{align}

Let 
$
\bfth+t\bfh=\Big(\bfalpha+t\bfa,\bfbeta+t\bfb,\int_0^\cdot (1+t\hbaseline )d\Lambda\Big)
$.
Define the directional derivatives
\begin{equation*}
 \lim_{t\to 0}\frac{l^I_n(\bfth+t\bfh)-l^I_n(\bfth)}{t}
 =S^{\bfth}_n=S^{\bfth}_{n,a}+S^{\bfth}_{n,b}+S^{\bfth}_{n,\hbaseline},
\end{equation*}
where
\begin{align*}
\notag S^{\bfth}_{n,a}= &\frac{1}{n}\sum_{i=1}^n \mathbf{a}^\top \vecA_i
    \bigg\{
    \int_0^{\tau'} \Big(1-\phi_i^{\bfth}(u)\Big) dN_i(u)
    -\int_0^{\tau'} Y_i(u)\phi_i^{\bfth}(u)\Big(1-\phi_i^{\bfth}(u)\Big)
    e^{\bfbeta^\top \vecZ_i}d\Lambda(u)\\
\notag & +\Big(N_i(\tau)-N_i(\tau')\Big)\Big(1-\phi_i^{\bfth}(\tau')\Big)
        -Y_i(\tau)\phi_i^{\bfth}(\tau')\bigg\}\\
\notag S_{n,b}^{\bfth}=&\frac{1}{n}\sum_{i=1}^n \mathbf{b}^\top \vecZ_i
    \bigg[
    \int_0^{\tau'} \left\{1-\Big(1-\phi_i^{\bfth}(u)\Big)
    \Lambda(u)e^{\bfbeta^\top \vecZ_i}\right\} dN_i(u)\\
 \notag   &+\int_0^{\tau'} Y_i(u)\phi_i^{\bfth}(u)e^{\bfbeta^\top \vecZ_i}
        \left\{ \Big(1-\phi_i^{\bfth}(u)\Big)\Lambda(u)e^{\bfbeta^\top \vecZ_i}
        -1
        \right\}
    d\Lambda(u)
\\
    \notag & -\Big(N_i(\tau)-N_i(\tau')\Big)\Big(1-\phi_i^{\bfth}(\tau')\Big)\Lambda(\tau')\phazi
        +Y_i(\tau)\phi_i^{\bfth}(\tau')\Lambda(\tau')\phazi\bigg]\\
\notag S_{n,\hbaseline}^{\bfth} =& \frac{1}{n}\sum_{i=1}^n \int_0^{\tau'} \left[\hbaseline(u)-\Big\{1-\phi_i^{\bfth}(u)\Big\}
    e^{\bfbeta^\top\vecZ_i}  \int_0^u \hbaseline d\Lambda\right] dN_i(u) \\
\notag    &+ \int_0^{\tau'} Y_i(u)\phi_i^{\bfth}(u)e^{\bfbeta^\top \vecZ_i}
    \left[
    \Big\{1-\phi_i^{\bfth}(u)\Big\}
    e^{\bfbeta^\top\vecZ_i}  \int_0^u \hbaseline d\Lambda
    -\hbaseline(u)
    \right]
    d\Lambda(u)\\
    \notag & -\Big(N_i(\tau)-N_i(\tau')\Big)\Big(1-\phi_i^{\bfth}(\tau')\Big)\int^{\tau'}_0\hbaseline d\Lambda\phazi
        +Y_i(\tau)\phi_i^{\bfth}(\tau')\int^{\tau'}_0\hbaseline d\Lambda\phazi .
\end{align*}
Their expectations are denoted as
\begin{equation*}
  S^{\bfth}=S^{\bfth}_a+S^{\bfth}_b+S^{\bfth}_\hbaseline=\E \left(S^{\bfth}_{n,a}\right)+\E \left( S^{\bfth}_{n,b}\right)+\E \left( S^{\bfth}_{n,\hbaseline}\right) .
\end{equation*}

 Again let $\bfth_0$ be the true parameter and $\bfth$ another element in the paramter space. Define $\triangle \bfth=\bfth-\bfth_0$ with
\begin{equation*}  
\triangle \bfalpha=\bfalpha-\bfalpha_0, \,
\triangle \bfbeta=\bfbeta-\bfbeta_0 \text{ and } 
\triangle \Lambda(\cdot)=\Big\{\Lambda(\cdot)-\Lambda_0(\cdot)\Big\}.
\end{equation*}
Define $lin \Theta $ to be the linear space spanned by $\{ \bfth -\bfth_0 : \bfth  \text{ in parameter space}\}$. 
Let $\bfth_t = \bfth_0+t\triangle \bfth$.  
The functional Hessian is a linear operator $lin \Theta \mapsto l^\infty(H_p)$ defined as
\begin{align}
\dot{S}^{\bfth_0}(\triangle \bfth)(\bfh) = & 
		\lim_{t\to 0}\frac{S^{\bfth_t }(\bfh)-S^{\bfth_0}(\bfh)}{t} \notag \\
= & -\triangle\bfalpha^\top \bfsig_a(\bfh)
 -\triangle\bfbeta^\top \bfsig_b (\bfh)
-\int_0^{\tau'} \sigma_\hbaseline(\bfh)(u)d\triangle\Lambda(u) \label{def:funcHess}
\end{align}
with $\sigma$ defined in \eqref{def:sig}. 

The following Lemma \ref{lem:sig inv} is used in the proofs of Theorems \ref{thm:normal} and \ref{thm:var est}. It tells us about the property of $\sigma$, the essential element in the functional Hessian. 

\begin{lemma}\label{lem:sig inv}
    Let the operator $\sigma: (\bfa,\bfb,\hbaseline) \mapsto \Big(\bfsig_a(\bfh),\bfsig_b(\bfh),\sigma_\hbaseline(\bfh)\Big)$
    be defined as in \eqref{def:sig}.
  Under the conditions of Theorem \ref{thm:normal},
  $\sigma$ is a continuously invertible bijection from $H_\infty$ to $H_\infty$.
 \end{lemma}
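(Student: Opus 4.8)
The plan is to recognize $\sigma$ as the (self-adjoint, nonnegative) information operator attached to the interval-censored log-likelihood $l^I_n$ and to establish its continuous invertibility by the classical ``continuously-invertible-plus-compact'' argument of \cite{Murphy95}, reducing the problem to injectivity, which is then settled by an identifiability argument of the same flavor as in the proof of Theorem \ref{thm:consistency}. As a preliminary step I would check that $\sigma$ is a bounded operator $H_\infty \mapsto H_\infty$: the vectors $\bfsig_a(\bfh)$, $\bfsig_b(\bfh)$ are finite by Assumption \ref{assumpt:cov bound}, and $\sigma_\hbaseline(\bfh)(\cdot)$ is of bounded variation on $[0,\tau']$ because every term defining it in \eqref{def:sig}, viewed as a function of $u$, is Lipschitz in $\Lambda_0$ using $\Lambda_0(\tau')<\infty$ (Assumption \ref{assumpt:baseline'}) and the boundedness of $\phi^{\bfth_0}$, $Y$, $e^{\bfbeta_0^\top\vecZ}$. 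I would also record that the quadratic form $Q(\bfh):=\bfa^\top\bfsig_a(\bfh)+\bfb^\top\bfsig_b(\bfh)+\int_0^{\tau'}\hbaseline(u)\sigma_\hbaseline(\bfh)(u)\,d\Lambda_0(u)$ equals the variance of the one-subject directional score $S^{\bfth_0}_{n}(\bfh)$ (the information identity), so $\sigma$ is self-adjoint with $Q(\bfh)\ge 0$.

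Next I would split $\sigma=\sigma_1+\sigma_2$. Here $\sigma_1$ collects the ``diagonal'' contributions: on the finite-dimensional coordinates it acts through the matrices $\E[\vecA\vecA^\top(\cdots)]$ and $\E[\vecZ\vecZ^\top(\cdots)]$, and on the baseline coordinate it acts as the pointwise multiplication $\hbaseline(u)\mapsto \E[\phazE\phi^{\bfth_0}(u)Y(u)]\,\hbaseline(u)$ arising from the $\hbaseline(u)$ term of $K_1^{\bfth_0}$ in \eqref{def:K}. I would verify $\sigma_1$ is continuously invertible: the multiplier is bounded away from $0$ and $\infty$ on $[0,\tau']$ by Assumptions \ref{assumpt:cov bound} and \ref{assumpt:baseline'} (using $\inf_t\E[Y(t)|\vecA,\vecZ]>\varepsilon$ and $\Lambda_0(\tau')<\infty$), and the finite-dimensional blocks are invertible because $\Var(\vecA)$ and $\Var(\vecZ)$ are positive-definite. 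The remainder $\sigma_2$ consists of the cross terms coupling the three coordinates and the Volterra-type operators $\int_0^u\hbaseline\,d\Lambda_0$ and $\int_u^{\tau'}(\cdots)\,d\Lambda_0$ in $K_1^{\bfth_0}$ and $\sigma_\hbaseline$; each maps a bounded-variation input into a Lipschitz (hence uniformly bounded-variation and equicontinuous) output, and the finite-dimensional couplings are finite rank. Thus $\sigma_2$ carries the unit ball of $H_\infty$ to a relatively compact set by a Helly/Arzel\`a--Ascoli argument, so $\sigma_2$ is compact. Writing $\sigma_1^{-1}\sigma=I+\sigma_1^{-1}\sigma_2$ with $\sigma_1^{-1}\sigma_2$ compact, the Fredholm alternative (Riesz--Schauder theory) reduces continuous invertibility of $\sigma$ to its injectivity; the bounded inverse then follows from the open mapping theorem.

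For injectivity, suppose $\sigma(\bfh)=0$. Pairing against $\bfh$ gives $Q(\bfh)=0$, i.e.\ the directional score $S^{\bfth_0}_{n}(\bfh)$ has zero variance and hence equals its mean $0$ almost surely. Expanding this score, it is a linear functional of $\bfa^\top\vecA$, $\bfb^\top\vecZ$ and $\hbaseline$ integrated against the martingale increments $dM(u)$ together with the interval-censoring boundary contribution at $\tau'$ encoded in $K_2^{\bfth_0}$. Separating the coefficient of $dN(u)$ from the compensator and invoking identifiability of the mixture-cure model (\cite{Li01}, Theorem 2), exactly as in the argument following \eqref{eq:identifiability}, forces $\bfa=0$ and $\bfb=0$ via the positive-definiteness of $\Var(\vecA),\Var(\vecZ)$; the surviving identity in the baseline direction, combined with the strict monotonicity of $\Lambda_0$ (Assumption \ref{assumpt:ltrc}), then yields $\hbaseline\equiv 0$ in $H_\infty$. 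Hence $\bfh=0$, $\sigma$ is injective, and therefore a continuously invertible bijection of $H_\infty$.

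The hardest part will be the injectivity step at the infinite-dimensional baseline coordinate: going from ``$Q(\bfh)=0$'' to ``$\hbaseline\equiv 0$ in bounded variation.'' One must disentangle the pointwise-multiplier term, the two Volterra integral terms, and the boundary term $K_2^{\bfth_0}$ at $\tau'$ so that the residual identity is in a form the identifiability argument can consume, and then upgrade an $L^2(d\Lambda_0)$-type null conclusion to the genuine vanishing of $\hbaseline$, using the strict increase of $\Lambda_0$ and the BV regularity of $\hbaseline$ rather than settling for a $d\Lambda_0$-a.e.\ statement. A secondary technical point is establishing the equicontinuity required for compactness of $\sigma_2$ uniformly over the unit ball of $H_\infty$, where the relevant topology is induced by the bounded-variation norm $\|\cdot\|_H$ rather than a supremum norm.
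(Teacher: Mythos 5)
Your proposal is correct and follows essentially the same route as the paper: the paper likewise decomposes $\sigma$ into exactly your $\sigma_1$ (its operator $\Sigma$, with blocks $\E(\vecA\vecA^\top)$, $\E(\vecZ\vecZ^\top)$ and the multiplier $\hbaseline(t)\,\E\{\phazE \phi^{\bfth_0}(t)Y(t)\}$, invertible by positive-definiteness and Assumption \ref{assumpt:baseline'}) plus a compact remainder established via Helly selection and dominated convergence, and it invokes Theorem 3.11 of \cite{Conway97} (the Riesz--Schauder/Fredholm reduction you describe) so that only injectivity remains, which it obtains precisely from your quadratic form $\big<\sigma(\bfh),\bfh\big>=0$ forcing $K_2^{\bfth_0}(\bfh)=0$ a.s.\ and $K_1^{\bfth_0}(\bfh)(u)=0$ a.e. The only cosmetic difference is the final injectivity step: rather than re-invoking the \cite{Li01} identifiability theorem, the paper integrates $K_1^{\bfth_0}(\bfh)$ explicitly against $\phi^{\bfth_0}\phazE\,d\Lambda_0$, evaluates at $t=0$ to get $\bfa=0$ from $\Var(\vecA)\succ 0$, and then reads $\bfb=0$ and $\hbaseline\equiv 0$ directly off $K_2^{\bfth_0}(\bfh)=0$ and the surviving identity.
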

\begin{proof}[\underline{Proof of Lemma \ref{lem:sig inv}}]
  First we prove that $\sigma$ is injection by an identifiability argument.
  Define an inner-product between $\sigma(\bfh)$ and $\bfh$ as 
\begin{align*}
 \Big<\sigma(\bfh),\bfh\Big>=& \bfa^\top\bfsig_a(\bfh)+\bfb^\top\bfsig_b(\bfh)+\int_0^{\tau'}\sigma_\hbaseline(\bfh)(u)\hbaseline(u)
    d\Lambda_0(u) \\
 \notag   =& \int_0^{\tau'}\E  \left[\big\{K^{\bfth_0}_1(\bfh)(u)\big\}^2Y(u)\phi^{\bfth_0}(u)\phazE\right] d\Lambda_0(u)\\
&+\E \left[ \big\{ K^{\bfth_0}_2(\bfh)\big\}^2Y(\tau')\phi^{\bfth_0}(\tau')\Big(1-\phi^{\bfth_0}(\tau')\Big) \right].
\notag
 \end{align*}
 If $\Big<\sigma(\bfh),\bfh\Big>=0$, we have almost surely $K^{\bfth_0}_2(\bfh)=0$ and $K^{\bfth_0}_1(\bfh)(u)=0$ a.e. $u \in [0, \tau']$.  
 Therefore,
$$
\int_0^t  K^{\bfth_0}_1(\bfh)(u)\phi^{\bfth_0}(u)\phazE d\Lambda_0(u)=0, \forall t\in[0,\tau'], a.s..
$$
Calculating the integral, we have for for any $t\in[0,\tau']$ a.s.
$$
-\bfa^\top\vecA\phi^{\bfth_0}(t)+\bfb^\top\vecZ\phi^{\bfth_0}(t)\Lambda_0(t)\phazE +\int_0^t\hbaseline(u)d\Lambda_0(u)\phi^{\bfth_0}(t)\phazE=0.
$$
Setting $t=0$, we have $-\bfa^\top\vecA\phi^{\bfth_0}(0)=0$, so $\bfa^\top\vecA=0$.
By Assumption \ref{assumpt:cov bound}, 
$\bfa=0$.
Plugging $\bfa=0$ into $K^{\bfth_0}_2$ yields
$$
 K^{\bfth_0}_2(\bfh)= \phazE\Big\{\bfb^\top\vecZ\Lambda_0(\tau')-\int_0^{\tau'}\hbaseline(u)d\Lambda_0(u)\Big\}=0, a.s..
$$
Again, $\bfb^\top\vecZ = \int_0^{\tau'}\hbaseline(u)d\Lambda_0(u)/\Lambda_0(\tau')$ is deterministic, so $\bfb=0$. This way $\hbaseline$ must also be constantly zero.
As a result, $\sigma(\bfh)=\sigma(\bfh') \Rightarrow \Big(\sigma(\bfh-\bfh'),\bfh-\bfh'\Big)=0 \Rightarrow \bfh=\bfh'$.

To show it is a bijection, we apply Theorem 3.11 in \cite{Conway97}.
It suffices to decompose $\sigma$ as the sum of one invertible operator
and one compact operator.
The invertible operator is defined as
\begin{equation*}
\Sigma(\bfh)=\Big(\E \left(\vecA\vecA^\top\right) \bfa,\E \left(\vecZ\vecZ^\top\right) \bfb , \hbaseline(t)\E \left\{\phazE \phi^{\bfth_0}(t)Y(t)\right\}\Big).
\end{equation*}
Since $\E \left(\vecA\vecA^\top\right)$, $\E \left(\vecZ\vecZ^\top\right)$ are both positive definite, and $\inf_{t\in[0,\tau']}\E \phazE \phi^{\bfth_0}(t)Y(t)>0$, the inverse exists as
\begin{equation*}
\Sigma^{-1}(\bfh)=\Big(\left[\E \big\{\vecA\vecA^\top \big\}\right]^{-1}\bfa,\left[\E \big\{\vecZ\vecZ^\top\big\}\right]^{-1} \bfb , \hbaseline(t)\left[\E \big\{\phazE \phi^{\bfth_0}(t)Y(t)\big\}\right]^{-1}\Big).
\end{equation*}
For the compactness of $\sigma(\bfh)-\Sigma(\bfh)$, classical Helly-selection plus dominated convergence method applies as all terms are conveniently bounded.
\end{proof}

The proof of Theorem \ref{thm:normal} is the application of Theorem 3.3.1 from \cite{vandervaart1996}. 
We shall verify all the required conditions for the Theorem.

 \begin{proof}[\underline{Proof of Theorem \ref{thm:normal}}]

Since we work under a modified Assumption \ref{assumpt:baseline'} now, the martingale representation in \eqref{def:mart1} needs to change accordingly beyond $\tau'$. 
We still use $M_i(t)$ as the notation. 
Define the filtrations $\big\{\mathcal{F}_t: t \in [0,\tau] \big\}$. 
On $[0,\tau']$, $\mathcal{F}_t$ is the natural $\sigma$-algebra generated by
$\{N_i(t), Y_i(t), \vecA_i, \vecZ_i, i = 1, \dots, n\}$. 
Since there is no extra information in the tail window $(\tau', \tau)$, 
we set $\mathcal{F}_t =\mathcal{F}_{\tau'}$ for $t \in (\tau', \tau)$. 
$\mathcal{F}_\tau$ is the $\sigma$-algebra generated by
$\{N_i(\tau)-N_i(\tau'), Y_i(\tau), \vecA_i, \vecZ_i, i = 1, \dots, n\}$,
where $Y_i(\tau) = Y_i(\tau') - dN_i(\tau')$ is measurable in $\mathcal{F}_{\tau'}$. 
The filtrations on $[0,\tau']$ stay the same, so $M_i(t)$ defined in \eqref{def:mart1} is still a martingale up to time $\tau'$. 
In the tail window $(\tau', \tau)$, we set $M_i(t)$ constantly equals $M_i(\tau')$. 
To extend its definition to time $\tau$, we define
\begin{equation}\label{def:mart2}
 d M_i(\tau) = M_i(\tau) - M_i(\tau') = \big\{N_i(\tau)-N_i(\tau')\big\} -  Y_i(\tau) \phi^{\bfth_0}_i(\tau').
\end{equation}
It is easy to verify that $\E[ M_i(\tau)| \mathcal{F}_{\tau'}] = M_i(\tau')$, 
so $M_i(t)$ thus defined is a martingale with respect to the new filtrations $\big\{\mathcal{F}_t: t \in [0,\tau'] \cup \{\tau\}\big\}$. 
Analogously, we define the process $M^{\bfth}_i(\cdot)$ which replaces the true parameter $\bfth_0$ in $M_i(\cdot)$ by arbitrary $\bfth$ in the parameter space. 
Apparently, $M^{\bfth_0}_i(\cdot) = M_i(\cdot)$. 
From here, we establish the needed results based on the martingale theory. 

First, we prove  weak convergence of the empirical score
 \begin{equation}\label{lem:weak conv}
   \sqrt{n}(S^{\bfth_0}_n-S^{\bfth_0})\stackrel{l^\infty(H_p)}{\longrightarrow} \mathcal{W}.
 \end{equation}
Notice that
$S_1^{\bfth_0}-S^{\bfth_0}$ is a martingale integral with respect to \eqref{def:mart2}.
The weak convergence follows from martingale central limit theorem.
The covariance process is given by the expectation of its quadratic variation:
 \begin{align*}
\notag  \Cov\big(\mathscr{G}(\bfh),\mathscr{G}(\bfh^*)\big)=\E\Big[ \int_0^{\tau'}&
     K^{\bfth_0}_1(\bfh) K^{\bfth_0}_1(\bfh^*)
    Y(u)\phi_0(u)e^{\bfbeta_0^\top \vecZ}d\Lambda_0(u) \\
    &+ K^{\bfth_0}_2(\bfh) K^{\bfth_0}_2(\bfh^*)\phi_0(\tau')\big\{1-\phi_0(\tau')\big\}
	\Big],
\end{align*}
where $K_1$ and $K_2$ are defined as in \eqref{def:K}.

Next, we verify the approximation condition 
\begin{equation}\label{lem:approximation}
   \sqrt{n}\left(S_n^{\hat{\bfth}}-S^{\hat{\bfth}}
- S_n^{\bfth_0}-S^{\bfth_0}\right) = o_p(1).
 \end{equation}
Consider the class $\{S_1^{\bfth}(\bfh)-S_1^{\bfth_0}(\bfh): \|\bfth-\bfth_0\| \le \varepsilon, \bfh\in H_p \}$. 
All terms involved in this class are uniformly bounded with uniformly bounded variation, so it is a Donsker class for the set of observable random variables.
By checking that $\phi_i^{\bfth}$ is Lipschitz in $\bfth$ under the $l^\infty(H_p)$ norm, 
we have almost surely
$$
\sup_{t,\vecZ,\vecA} |\phi_i^{\bfth}(t)-\phi_i^{\bfth_0}(t)|
= O\left(\|\bfth-\bfth_0\|\right),
$$
and similarly
\begin{equation*}
\sup_{t,\vecZ,\vecA} |\phi_i^{\bfth}(t) {\Lambda}(t)-\phi_i^{\bfth_0}(t)\Lambda_0(t)|
=O\left(\|\bfth-\bfth_0\|\right).
\end{equation*}
For a single summand in the score, 
\begin{equation*}
  \sup_{h\in H_p}\E[S_1^{\bfth}(\bfh)-S_1^{\bfth_0}(\bfh)]^2
=O\left(\|\bfth-\bfth_0\|^2\right).
\end{equation*}
We plug $\hat{\bfth}$ into the expression above. Thus, the variance of the limiting process of \eqref{lem:approximation} is $o(1)$ by the consistency of $\hat{\bfth}$
from Theorem \ref{thm:consistency'}, so the process itself is $o_p(1)$. 

We then show the Fr\'{e}chet differentiability of expected score $S$ at $\bfth_0$ in the direction of $\hat{\bfth}-\bfth_0$,
\begin{equation}\label{lem:Frechet}
 S^{\hat{\bfth}_t}-S^{\bfth_0}=t\dot{S}^{\bfth_0}(\hat{\bfth}-\bfth_0)+o_p(t\|\hat{\bfth}-\bfth_0\|).
 \end{equation}
  We use a shorthand notation for the expected score at $\bfth$:
  $$
    S^{\bfth}(\bfh)= \E \left[\int_0^{\tau'} K_1^{\bfth}(\bfh)(u)dM^{\bfth}(u)
		+ 	K_2^{\bfth}(\bfh) d M^{\bfth}(\tau)\right] 
               =  \E \left[\int_0^{\tau} V^{\bfth}(\bfh)(u) d M^{\bfth}(u)\right],
  $$
by setting $$V^{\bfth}(\bfh)(t) = I(t \le \tau')K_1^{\bfth}(\bfh)(t) +  I(t=\tau) K_2^{\bfth}(\bfh).$$
  By the Lipschitz continuity  with respect to $\|\bfth\|$ for all terms involved, $ K_1^{\bfth}(\bfh)$, $K_2^{\bfth}(\bfh)$ and $dM^{\bfth}$,
  \bes 
&&\notag S^{ {\bfth}_t}(\bfh)-S^{\bfth}(\bfh) \\
&=& \E \left[\int_0^{\tau'} V^{ {\bfth}_t}(\bfh)(u)dM^{ {\bfth}_t}(u) \right] \\
\notag    &=& \E \left[ \int_0^{\tau'} V^{\bfth_0}(\bfh)(u)d\big\{M^{ {\bfth}_t}(u)- M^{\bfth_0}(u)\big\}\right]+\E \left[ \int_0^{\tau'}V^{ {\bfth}_t}(\bfh)(u)dM^{\bfth_0}(u)\right]\\
 &   &
    + \E \left[\int_0^{\tau'}\big\{V^{ {\bfth}_t}(\bfh)(u)- V^{\bfth_0}(\bfh)(u)\big\}d\big\{M^{ {\bfth}_t}(u)- M^{\bfth_0}(u)\big\} \right]\\
    &=& t\dot{S}^{\bfth_0}( {\bfth}-\bfth_0)(\bfh)+0+O_p(t^2\| {\bfth}-\bfth_0\|^2).
  \ees 
Again, we plug-in $\hat{\bfth}$ and use the consistency result to verify the condition \eqref{lem:Frechet}.

Afterwards, we find the local inverse of the functional Hessian in \eqref{def:funcHess}. 
We have shown in Lemma \ref{lem:sig inv} that the functional operator $\sigma$ is a continuously invertible bijection from $H_\infty$ to $H_\infty$. The invertibility of $\dot{S}^{\bfth_0}$ in $H_p$ follows from the following argument. 
By the continuous invertibility of $\sigma$, there is some $q$ so that $\sigma^{-1}(H_q) \subseteq H_p$, and
\begin{align}
  & \inf_{\triangle\bfth \in lin \Theta} \frac
  {\sup_{\bfh\in H_p}|(\bfalpha-\bfalpha_0)^\top \bfsig_a(\bfh)
        +(\bfbeta-\bfbeta_0)^\top \bfsig_b (\bfh)
        +\int_0^{\tau'}\sigma_\hbaseline(\bfh) d(\Lambda-\Lambda_0)|}
  { \|\triangle \bfth\|_{l^\infty(H_p)}\ } \notag  \\
  \ge & \inf_{\triangle\bfth \in lin \Theta} \frac
  {\sup_{\bfh\in \sigma^{-1}(H_q)}|(\bfalpha-\bfalpha_0)^\top \bfsig_a(\bfh)
        +(\bfbeta-\bfbeta_0)^\top \bfsig_b (\bfh)
        +\int_0^{\tau'}\sigma_\hbaseline(\bfh) d(\Lambda-\Lambda_0)|}
  {  p\|\triangle \bfth \| } \notag\\
=&\inf_{\triangle\bfth \in lin \Theta} \frac
  {\sup_{\bfh\in H_q}| \triangle \bfth (\bfh) |}
  { p\|\triangle \bfth \|  } >
  \frac{q }{2p}. \label{cor:cont inv}
\end{align}

Finally, let us put everything together. The NPMLE $\hat{\bfth}$ is shown to be consistent in Theorem \ref{thm:consistency'}, and 
  \eqref{lem:weak conv}, \eqref{lem:approximation}, \eqref{lem:Frechet} 
    and \eqref{cor:cont inv} verify the conditions of Theorem 3.3.1 from \cite{vandervaart1996}. 
 \end{proof}

  \begin{proof}[\underline{Proof of Theorem \ref{thm:var est}}]
  The proof for the continuous invertibility of $\hat{\sigma}$ is similar to the proof of  Lemma \ref{lem:sig inv}. The approximation error between the natural estimator $\hat{\sigma}$ 
  and Louis' formula variance estimator using \eqref{eq:louis} again comes from the ``ghost copies'' like the case in Lemma \ref{lemma:EM}, 
so the same argument applies to show their asymptotic equivalence. 
 \end{proof}


\setcounter{equation}{0}
\setcounter{lemma}{0}
\section{Appendix B: Variance Estimator}\label{appendixB}

\subsection{Derivatives of Log-likelihood}

Let $l^c(\bfalpha,\bfbeta,\blam)=\sum_{i=1}^n l^c_i(\bfalpha,\bfbeta,\blam)$ be the complete data log-likelihood,
\begin{align*}
\notag l^c_i(\bfalpha,\bfbeta,\blam)
 =& (\etaA_i+M_i) \bfalpha^\top \vecA_i -(1+M_i)\log(1+e^{\bfalpha^\top \vecA_i})\\
\notag &+\delta^1_i \etaA_i \sum_{k=1}^K I\{X_i=t_k\}(\log \lambda_k +\bfbeta^\top \vecZ_i)
- \etaA_i \sum_{k:t_k \le X_i} \lambda_k e^{\bfbeta^\top \vecZ_i}\\
 &+M_i\sum_{k:t_k<Q_i} I\{\kappa_i=k\}\Big(\log \lambda_k+\bfbeta^\top \vecZ_i-\sum_{h=1}^k \lambda_h e^{\bfbeta^\top \vecZ_i}\Big).
\end{align*}
Its gradient is given by
\begin{equation*}
    \nabla l^c_i=\left(\frac{\partial l^c_i}{\partial \bfalpha},
        \frac{\partial l^c_i}{\partial \bfbeta},
        \frac{\partial l^c_i}{\partial \blam}\right)^\top,
\end{equation*}
 where
\begin{align*}
  \frac{\partial l^c_i}{\partial \bfalpha}
=& \vecA_i\Big\{\etaA_i+M_i-(1+M_i)\frac{e^{\bfalpha^\top \vecA_i}}{1+e^{\bfalpha^\top \vecA_i}}\Big\}
= \vecA_i\big\{\etaA_i-p_i+M_i(1-p_i)\big\}, \\
\notag  \frac{\partial l^c_i}{\partial \bfbeta}
=& \vecZ_i \bigg\{\etaA_i \delta^1_i +M_i-\Big(\etaA_i \sum_{k:t_k \le X_i}\lambda_k
+M_i \sum_{k=1}^{\kappa_i} \lambda_k \Big)e^{\bfbeta^\top \vecZ_i}\bigg\} \\
 =& \vecZ_i \Big\{\etaA_i \delta^1_i +M_i-\etaA_i \Lambda_i(X_i)
-M_i \Lambda_i(\kappa_i)\Big\}, \\
\notag  \frac{\partial l^c_i}{\partial \lambda_k}
=& \Big(\etaA_i \delta^1_i  I\{X_i=t_k\}+M_i
I\{\kappa_i=k\}\Big)\frac{1}{\lambda_k}-\Big(\etaA_i I\{t_k \le X_i\}+M_iI\{\kappa_i \ge t_k\}\Big) e^{\bfbeta^\top \vecZ_i}\\
 =& \etaA_i\Big( \frac{\delta^1_i  I\{X_i=t_k\}}{\lambda_k}-I\{t_k \le X_i\}e^{\bfbeta^\top \vecZ_i}\Big)
+ M_i\Big( \frac{I\{\kappa_i=k\}}{\lambda_k}- I\{\kappa_i \ge t_k\} e^{\bfbeta^\top \vecZ_i}\Big).
\end{align*}
Its Hessian is given by
\begin{equation*}
  \nabla^2 l^c_i=\left(
    \begin{array}{ccc}
      \frac{\partial^2 l^c_i}{\partial \boldsymbol{\alpha} \partial \boldsymbol{\alpha}^\top} & 0 & 0 \\
      0 & \frac{\partial^2 l^c_i}{\partial \boldsymbol{\beta} \partial \boldsymbol{\beta}^\top} & \frac{\partial^2 l^c_i}{\partial \boldsymbol{\beta} \partial \boldsymbol{\lambda}^\top} \\
      0 & \left[ {\frac{\partial^2 l^c_i}{\partial \boldsymbol{\beta} \partial \boldsymbol{\lambda}^\top}} \right]^\top & \text{diag}(\frac{\partial^2 l^c_i}{\partial \lambda_k^2 })
    \end{array}\right),
\end{equation*}
where 
\begin{align*}
 \frac{\partial^2 l^c_i}{\partial \boldsymbol{\alpha} \partial \boldsymbol{\alpha}^\top}
=& \vecA_i \vecA_i^\top \Big\{-(1+M_i)\frac{e^{\bfalpha^\top \vecA_i}}{(1+e^{\bfalpha^\top \vecA_i})^2}\Big\}
 = - \vecA_i \vecA_i^\top (1+M_i)p_i(1-p_i), \\
  \frac{\partial^2 l^c_i}{\partial \boldsymbol{\beta} \partial \boldsymbol{\beta}^\top}
 =& \vecZ_i \vecZ_i^\top \bigg\{-\Big(\etaA_i \sum_{k:t_k \le X_i}\lambda_k
+M_i \sum_{k=1}^{\kappa_i} \lambda_k \Big)e^{\bfbeta^\top \vecZ_i}\bigg\}, \\
  \frac{\partial^2 l^c_i}{\partial \bfbeta \partial \lambda_k}
 =& \vecZ_i \bigg\{-\Big(\etaA_i I\{t_k \le X_i\}
+M_i I\{t_k \le \kappa_i\} \Big)e^{\bfbeta^\top \vecZ_i}\bigg\}, \\
 \frac{\partial^2 l^c_i}{\partial \lambda_k^2 }
=&  -\Big(\etaA_i \delta^1_i  I\{X_i=t_k\}+M_i
I\{\kappa_i=k\}\Big)\frac{1}{\lambda_k^2}, \\
  \frac{\partial^2 l^c_i}{\partial \boldsymbol{\alpha} \partial \boldsymbol{\beta}^\top}
 =& \frac{\partial^2 l^c_i}{\partial \boldsymbol{\alpha} \partial \boldsymbol{\lambda}^\top}
 = \frac{\partial^2 l^c_i}{\partial \lambda_k \partial \lambda_h }=0, \ \ \ \ k\neq h.
\end{align*}

\subsection{Conditional Expectations}

By the conditional expectations \eqref{EM:condE1} - \eqref{EM:condE3},
    we are able to calculate the `first order' conditional expectations,
    $\E[\nabla l^c_i|\Ocal]$ and $\E [\nabla^2 l^c_i|\Ocal]$:
\begin{align*}
 \E&  \left [ \frac{\partial l^c_i}{\partial \bfalpha}
\right] = \vecA_i\Big\{\E (\etaA_i)-p_i+\E (M_i)(1-p_i)\Big\}, \\
 \E&  \left [ \frac{\partial l^c_i}{\partial \bfbeta}
\right] = \vecZ_i \bigg[\E (\etaA_i) \Big\{\delta^1_i+\log S_i(X_i)\Big\}+\E (M_i) \Big\{1+\sum_{k:t_k<Q_j}\P(\tilde{T}_{ij}=t_k) \log S_i(t_k)\Big\}\bigg], \\
 \E&  \left [ \frac{\partial l^c_i}{\partial \lambda_k}
\right] =\E (\etaA_i)\Big\{\frac{\delta^1_i I\{t_k = X_i\}}{\lambda_k}-I\{t_k \le X_i\}e^{\bfbeta^\top \vecZ_i}\Big\}
+\E (M_i)\Big\{\frac{\P(\tilde{T}_{ij}=t_k)}{\lambda_k}-\P(\tilde{T}_{ij} \ge t_k)e^{\bfbeta^\top \vecZ_i}\Big\}.
\end{align*}
\begin{align*}
 \E& \left [ \frac{\partial^2 l^c_i}{\partial \boldsymbol{\alpha} \partial \boldsymbol{\alpha}^\top}
\right] = - \vecA_i \vecA_i^\top (1+\E (M_i))p_i(1-p_i), \\
 \E& \left [ \frac{\partial^2 l^c_i}{\partial \boldsymbol{\beta} \partial \boldsymbol{\beta}^\top}
\right] = \vecZ_i \vecZ_i^\top \Big\{ \E (\etaA_i) \log S_i(X_i)
+\E (M_i) \sum_{k:t_k<Q_i}\P(\tilde{T}_{ij}=t_k) \log S_i(t_k)\Big\}, \\
 \E& \left [ \frac{\partial^2 l^c_i}{\partial \bfbeta \partial \lambda_k}
\right] = -\vecZ_i \Big\{\E (\etaA_i) I\{t_k \le X_i\}
+\E (M_i) \P(t_k \le \kappa_i) \Big\}e^{\bfbeta^\top \vecZ_i}, \\
 \E& \left [ \frac{\partial^2 l^c_i}{\partial \lambda_k^2 }
\right] =  -\Big\{\E (\etaA_i) \delta^1_i  I\{\tilde{T}_{ij}=t_k\}+\E (M_i)
\P(\tilde{T}_{ij}=t_k)\Big\}\frac{1}{\lambda_k^2}.
\end{align*}
To calculate `second order' expectation $\E[\nabla l^c_i{\nabla l^c_i}^\top|\Ocal]$, we first compute the conditional variances:
\begin{align*}
 \Var&[\etaA_i|\mathcal{O}] =  \delta^c_i\frac{p_i(1-p_i)S_i(X_i)}{\big\{1-p_i+p_iS_i(X_i)\big\}^2}, \\
 \Var&[M_i|\mathcal{O}] = \frac{p_i\Big[1-S_i(Q_i)\big\}}{\big\{1-p_i+p_iS_i(Q_i)\big\}^2}.
\end{align*}
Then,
\begin{align*}
 \E  \left [ \frac{\partial l^c_i}{\partial \bfalpha} { \frac{\partial l^c_i}{\partial \bfalpha}}^\top 
\right] = &\E  \left [ \frac{\partial l^c_i}{\partial \bfalpha} \right]
\E \left [ \frac{\partial l^c_i}{\partial \bfalpha} \right]^\top
+\vecA_i \vecA_i^\top\big\{(1-p_i)^2 \Var (M_i)+ \Var (\etaA_i)\big\},\\
\notag \E \left [ \frac{\partial l^c_i}{\partial \bfalpha} { \frac{\partial l^c_i}{\partial \bfbeta}}^\top
\right] =&\E  \left [ \frac{\partial l^c_i}{\partial \bfalpha}\right]
\E  \left [ \frac{\partial l^c_i}{\partial \bfbeta} \right]^\top
+\vecA_i \vecZ_i^\top \bigg[ \Var (\etaA_i)\big\{\delta^1_i+\log S_i(X_i)\big\}\\
&+ \Var (M_i)(1-p_i)\Big\{1+\sum_{k:t_k<Q_i}\P(\tilde{T}_{ij}=t_k)\log S_i(t_k)\Big\}\bigg],\\
\notag \E \left [ \frac{\partial l^c_i}{\partial \bfbeta} {\frac{\partial l^c_i}{\partial \bfbeta}}^\top\right]
 =&\E  \left [ \frac{\partial l^c_i}{\partial \bfbeta} \right]
\E  \left [ \frac{\partial l^c_i}{\partial \bfbeta} \right]^\top
+\vecZ_i \vecZ_i^\top\bigg[ \Var (\etaA_i)\Big\{\delta^1_i+\log S_i(X_i)\Big\}^2\\
\notag &+ \Var (M_i)\Big\{1+\sum_{k:t_k<Q_i}\P(\tilde{T}_{ij}=t_k)\log S_i(t_k)\Big\}^2\\
 &+\E (M_i)\Big\{\sum_{k:t_k<Q_i}\P(\tilde{T}_{ij}=t_k)\log S_i(t_k)^2-\big(\sum_{k:t_k<Q_i}\P(\tilde{T}_{ij}=t_k)\log S_i(t_k)\big)^2\Big\}\bigg],\\
\notag \E \left [ \frac{\partial l^c_i}{\partial \bfalpha}  \frac{\partial l^c_i}{\partial \lambda_k}
\right] =&\E  \left [ \frac{\partial l^c_i}{\partial \bfalpha}\right]
\E  \left [\frac{\partial l^c_i}{\partial \lambda_k}\right]  + \vecA_i
\bigg[\Var (\etaA_i) \Big\{\frac{\delta^1_i I\{t_k = X_i\}}{\lambda_k}-I\{t_k \le X_i\}e^{\bfbeta^\top \vecZ_i}\Big\} \\
 &+\Var (M_i)(1-p_i)\Big\{\frac{\P(\tilde{T}_{ij}=t_k)}{\lambda_k}-\P(\tilde{T}_{ij} \ge t_k)e^{\bfbeta^\top \vecZ_i}\Big\}\bigg],\\
\notag \E \left[ \frac{\partial l^c_i}{\partial \bfbeta} \frac{\partial l^c_i}{\partial \lambda_k}
\right] =&\E \left[ \frac{\partial l^c_i}{\partial \bfbeta}\right]
\E   \left[\frac{\partial l^c_i}{\partial \lambda_k}\right] \\
\notag &+\vecZ_i \bigg[ \Var (\etaA_i)\big\{\delta^1_i+\log S_i(X_i)\big\}\Big\{\frac{\delta^1_i I\{t_k  = X_i\}}{\lambda_k}-I\{t_k \le X_i\}e^{\bfbeta^\top \vecZ_i}\Big\} \\
\notag &+ \Var (M_i)\Big\{\frac{\P(\tilde{T}_{ij}=t_k)}{\lambda_k}-\P(\tilde{T}_{ij} \ge t_k)e^{\bfbeta^\top \vecZ_i}\Big\}
\Big\{1+\sum_{h:t_h<Q_i}\P(\tilde{T}_{ij}=t_h)\log S_i(t_h)\Big\}\\
\notag &- \E (M_i)\Big\{\sum_{h:t_h<Q_i}\P(\tilde{T}_{ij}=t_h)\log S_i(t_h)\frac{\P(\tilde{T}_{ij}=t_k)}{\lambda_k}-\frac{\P(\tilde{T}_{ij}=t_k)\log S_i(t_k)}{\lambda_k}\\
\notag &
-\P\{\tilde{T}_{ij} \ge t_k\}e^{\bfbeta^\top \vecZ_i} \sum_{h:t_h<Q_i}\P(\tilde{T}_{ij}=t_h)\log S_i(t_h)\\
 &+e^{\bfbeta^\top \vecZ_i}\sum_{h =k}^{t_h<Q_i}\P(\tilde{T}_{ij}=t_h)\log S_i(t_h)\Big\}\bigg],
\end{align*}
\begin{align*}
 \notag &\E \left [ \frac{\partial l^c_i}{\partial \lambda_k} \frac{\partial l^c_i}{\partial \lambda_h} 
\right] =\E\etaA_i \left\{ -\frac{\delta^1_i I\{X_i =t_{k\vee h}\}}{\lambda_{k\vee h}}e^{\bfbeta^\top\vecZ_i}+
I\{X_i\ge t_{k\vee h}\}e^{2\bfbeta^\top\vecZ_i}\right\}\\
\notag &+\E (\etaA_i) \E (M_i) \left\{ \frac{\delta^1_i I\{\tilde{T}_{ij}=t_k\}}{\lambda_k}-
I\{X_i\ge t_k\}e^{\bfbeta^\top\vecZ_i}\right\}
\left\{ \frac{\P(\tilde{T}_{ij}=t_h)}{\lambda_h}-\P(\tilde{T}_{ij} \ge t_h)e^{\bfbeta^\top\vecZ_i} \right\}\\
\notag &+\E (\etaA_i) \E (M_i) \left\{ \frac{\delta^1_i I\{X_i =t_h\}}{\lambda_h}-
I\{X_i\ge t_h\}e^{\bfbeta^\top\vecZ_i}\right\}
\left\{ \frac{\P(\tilde{T}_{ij}=t_k)}{\lambda_k}-\P(\tilde{T}_{ij} \ge t_k)e^{\bfbeta^\top\vecZ_i} \right\} \\
\notag &+ \E [M_i^2-M_i]  \left\{ \frac{\P(\tilde{T}_{ij}=t_k)}{\lambda_k}-\P(\tilde{T}_{ij} \ge t_k)e^{\bfbeta^\top\vecZ_i} \right\}
 \left\{ \frac{\P(\tilde{T}_{ij}=t_h)}{\lambda_h}-\P(\tilde{T}_{ij} \ge t_h)e^{\bfbeta^\top\vecZ_i} \right\}\\
\notag &+\E (M_i) \left\{ -\frac{\P(\tilde{T}_{ij} =t_{k\vee h})}{\lambda_{k \vee h}}e^{\bfbeta^\top\vecZ_i} +\P(\kappa_i\ge t_{k \vee h})e^{2\bfbeta^\top\vecZ_i} \right\},\\
\notag &\E \left [ \frac{\partial l^c_i}{\partial \lambda_k} \frac{\partial l^c_i}{\partial \lambda_k} 
\right] =\E\etaA_i \left\{ \frac{\delta^1_i I\{\tilde{T}_{ij}=t_k\}}{\lambda_k}-
I\{X_i\ge t_k\}e^{\bfbeta^\top\vecZ_i}\right\}^2\\
\notag &+\E (\etaA_i) \E (M_i) \left\{ \frac{\delta^1_i I\{\tilde{T}_{ij}=t_k\}}{\lambda_k}-
I\{X_i\ge t_k\}e^{\bfbeta^\top\vecZ_i}\right\}
\left\{ \frac{\P(\tilde{T}_{ij}=t_k)}{\lambda_k}-\P(\tilde{T}_{ij} \ge t_k)e^{\bfbeta^\top\vecZ_i} \right\}\\
\notag &+\E (\etaA_i) \E (M_i) \left\{ \frac{\delta^1_i I\{\tilde{T}_{ij}=t_k\}}{\lambda_k}-
I\{X_i\ge t_k\}e^{\bfbeta^\top\vecZ_i}\right\}
\left\{ \frac{\P(\tilde{T}_{ij}=t_k)}{\lambda_k}-\P(\tilde{T}_{ij} \ge t_k)e^{\bfbeta^\top\vecZ_i} \right\} \\
\notag &+ \E [M_i^2-M_i]  \left\{ \frac{\P(\tilde{T}_{ij}=t_k)}{\lambda_k}-\P(\tilde{T}_{ij} \ge t_k)e^{\bfbeta^\top\vecZ_i} \right\}
 \left\{ \frac{\P(\tilde{T}_{ij}=t_k)}{\lambda_k}-\P(\tilde{T}_{ij} \ge t_k)e^{\bfbeta^\top\vecZ_i} \right\}\\
 &+\E (M_i) \left\{ \frac{\P(\tilde{T}_{ij}=t_k)}{\lambda^2_k}- 2\frac{\P(\tilde{T}_{ij}=t_k)}{\lambda_k}e^{\bfbeta^\top\vecZ_i}
+\P(\tilde{T}_{ij} \ge t_k)e^{2\bfbeta^\top\vecZ_i} \right\}.
\end{align*}
\end{appendix}


\newpage

\begin{figure}
\caption{Study entry times for all individuals  in the SAB data (left), and left truncated Kaplan-Meier curves (95\% confidence intervals) for the SAB events (right).}
\label{fig:Qhist}
\begin{center}
\includegraphics[scale=.6]{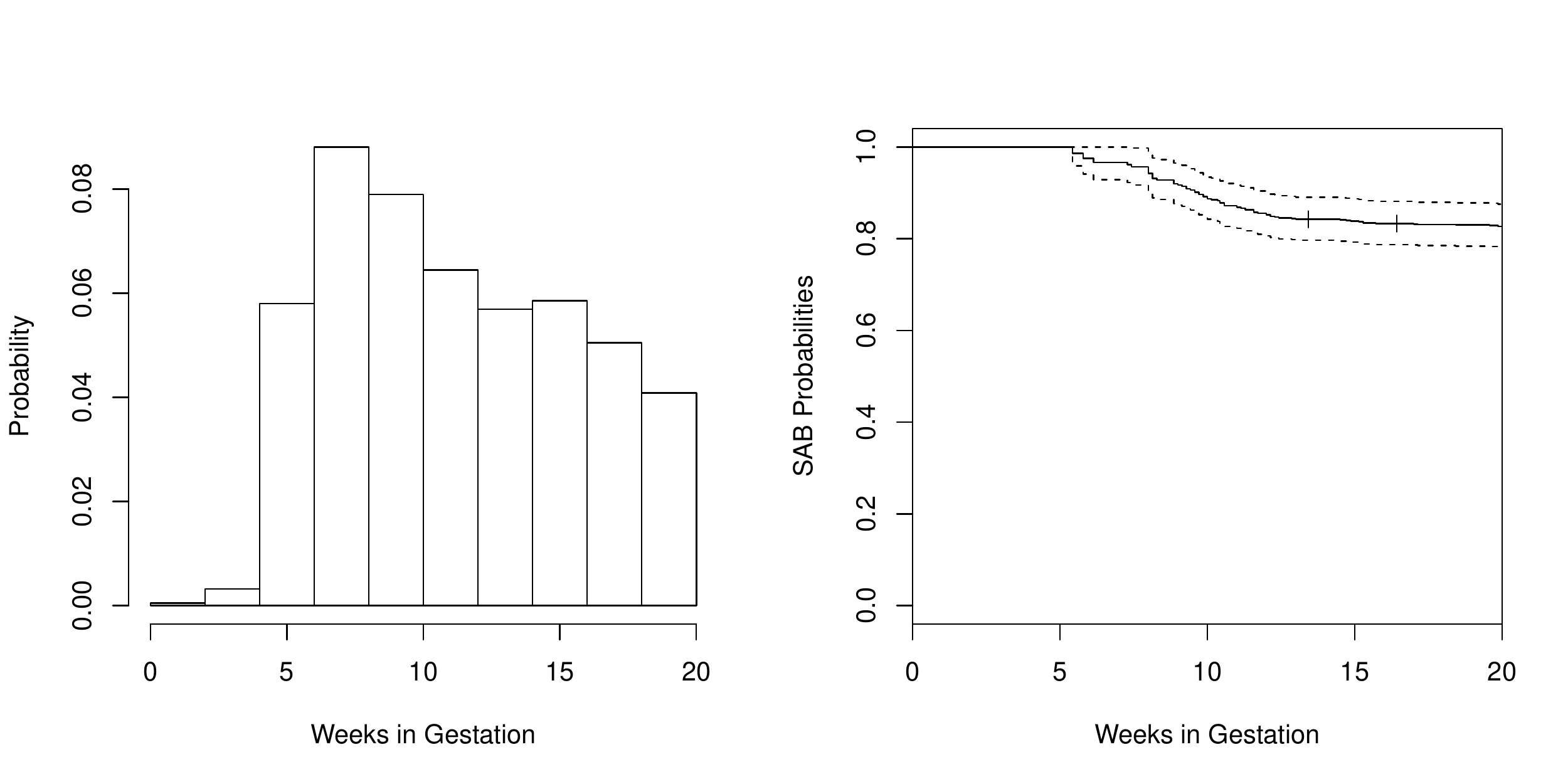}
\end{center}
\end{figure}

\newpage
\begin{table}[ht]
\caption{Simulation results using the EM algorithm for NPMLE. } \label{table:simulation}
{\footnotesize
  \begin{center}  \begin{tabular}{ccccccccccc}
\hline \hline
~ & ~ & \multicolumn{4}{c}{$n=200$} & \multicolumn{1}{c}{~} & \multicolumn{4}{c}{$n=1000$} \\ \cline{3-6} \cline{8-11}
        ~               & True Value                       & Estimate &  Sample SD & SE & \multicolumn{1}{c}{Coverage} & ~ & Estimate &  Sample SD & SE & Coverage\\
\hline \hline
\multicolumn{11}{c}{10\% Truncation, 0\% Censoring} \\ \hline
$\alpha_{0}$          & 1.00          &   1.01&  0.79&  0.75&  94.0 \%&&  0.98&  0.35&  0.33&  93.6 \% \\
$\alpha_{1}$          & -0.63        &   -0.64&  0.20&  0.19&  94.8 \%&&  -0.63&  0.09&  0.08&  94.3 \% \\
$\alpha_{2}$          & 1.00          &   1.00&  0.36&  0.37&  95.6 \%&&  1.01&  0.16&  0.16&  94.8  \%\\
$\beta_{1}$          & -0.20          &    -0.23&  0.20&  0.17&  92.2 \%&&  -0.20&  0.07&  0.07&  95.6  \%\\
$\beta_{2}$            & 0.30          &   0.33&  0.34&  0.32&  94.2 \%&&  0.29&  0.14&  0.13&  93.4  \%\\
        \hline \hline
\multicolumn{11}{c}{20\% Truncation, 0\% Censoring} \\ \hline
$\alpha_{0}$          & 1.00          &  0.97&  0.80&  0.79&  94.8 \% &&  0.99&  0.34&  0.35&  96.0  \%\\
$\alpha_{1}$          & -0.63        &   -0.64&  0.21&  0.20&  95.4 \% &&  -0.63&  0.09&  0.09&  96.2  \%\\
$\alpha_{2}$          & 1.00          &    0.98&  0.40&  0.39&  95.4 \% && 0.99&  0.17&  0.17&  95.2  \%\\
$\beta_{1}$          & -0.20          &    -0.20&  0.20&  0.18&  94.6  \%&&  -0.20&  0.07&  0.07&  95.2  \%\\
$\beta_{2}$            & 0.30          &   0.31&  0.37&  0.34&  94.6 \% &&  0.30&  0.14&  0.14&  94.2  \%\\
        \hline \hline
\multicolumn{11}{c}{10\% Truncation, 20\% Censoring} \\ \hline
$\alpha_{0}$          & 1.00          &   1.18&  0.97&  0.99&  96.6 \% &&  1.02&  0.41&  0.42&  95.8   \%\\
$\alpha_{1}$          & -0.63        &   -0.69&  0.25&  0.26&  96.2 \% &&  -0.64&  0.11&  0.11&  96.2 \% \\
$\alpha_{2}$          & 1.00          &   1.01&  0.50&  0.49&  95.4 \% &&  1.00&  0.20&  0.21&  96.0 \% \\
$\beta_{1}$          & -0.20          &    -0.21&  0.30&  0.26&  91.6  \%&&  -0.21&  0.11&  0.11&  94.4 \% \\
$\beta_{2}$            & 0.30          &   0.31&  0.53&  0.49&  93.4 \% &&  0.30&  0.21&  0.20&  93.8 \% \\
        \hline \hline
\multicolumn{11}{c}{20\% Truncation, 20\% Censoring} \\ \hline
$\alpha_{0}$          & 1.00          &   1.05&  1.00&  0.96&  95.8 \% &&  0.98&  0.37&  0.41&  97.0 \% \\
$\alpha_{1}$          & -0.63        &   -0.66&  0.27&  0.25&  96.6 \% &&  -0.63&  0.10&  0.11&  96.6 \% \\
$\alpha_{2}$          & 1.00          &   1.05&  0.49&  0.47&  95.6  \%&&  1.01&  0.20&  0.20&  94.6 \% \\
$\beta_{1}$          & -0.20          &   -0.19&  0.30&  0.26&  90.4 \% &&  -0.20&  0.11&  0.11&  95.0 \% \\
$\beta_{2}$            & 0.30          &   0.33&  0.54&  0.48&  92.2 \% & & 0.31&  0.21&  0.20&  94.8 \% \\
        \hline
    \end{tabular} \end{center}
}
\end{table}

\newpage
{\footnotesize
\begin{table}[htb]
\caption{Cure rate model versus naive model fits for SAB data} \label{table:SAB}
  \begin{center}
  \begin{tabular}{lrcrc}
        \hline\hline
  &      \multicolumn{2}{c}{Cure model} &  \multicolumn{2}{c}{Separate models} \\
        ~               &    Estimate (SE) & P-value &    Estimate (SE) & P-value \\
\hline 
Logistic &  ~ & ~ &&\\
Intercept  &  -0.74 (0.54)  &  0.17    &    -2.25 (0.49)  &  $<$0.01\\
Healthy  &  -0.54 (0.49)  &  0.27    &    -0.92 (0.45)  &  0.04\\
Diseased Control  &  0.18 (0.31)  &  0.56    &    0.01 (0.28)  &  0.98\\
BMI  &  -0.37 (0.18)  &  0.04    &    -0.11 (0.16)  &  0.51\\
Gravidity$>$1  &  0.01 (0.3)  &  0.97    &    0.2 (0.27)  &  0.46\\
Smoking  &  0.41 (0.37)  &  0.27    &    0.65 (0.34)  &  0.06\\
Alcohol  &  -0.1 (0.29)  &  0.73    &    -0.24 (0.26)  &  0.35\\
        \hline 
Cox PH & ~ & ~ &&\\
Healthy Control  &  -0.36 (0.38)  &  0.34    &    -0.41 (0.5)  &  0.42\\
Diseased Control  &  -0.34 (0.25)  &  0.17    &    -0.29 (0.32)  &  0.36\\
BMI  &  -0.35 (0.09)  &  $<$0.01&    -0.84 (0.22)  &  $<$0.01\\
Gravidity$>$1  &  -0.52 (0.23)  &  0.02    &    -0.38 (0.28)  &  0.18\\
Smoking  &  -1.01 (0.33)  &  $<$0.01&    -0.65 (0.36)  &  0.08\\
Alcohol  &  0.66 (0.26)  &  0.01    &    0.87 (0.29)  &  $<$0.01\\
        \hline
    \end{tabular}
    \end{center}
\end{table}
}

\newpage
\begin{figure}
\caption{Left truncated Kaplan-Meier curves for SAB events according to
BMI (top) or alcohol (bottom), among the full data set (left) and without the
observed cured individuals (right); the $p$-values are from the cure model (Table \ref{table:SAB}).}
\label{fig:KMbycov}
\begin{center}
\includegraphics[scale=.5]{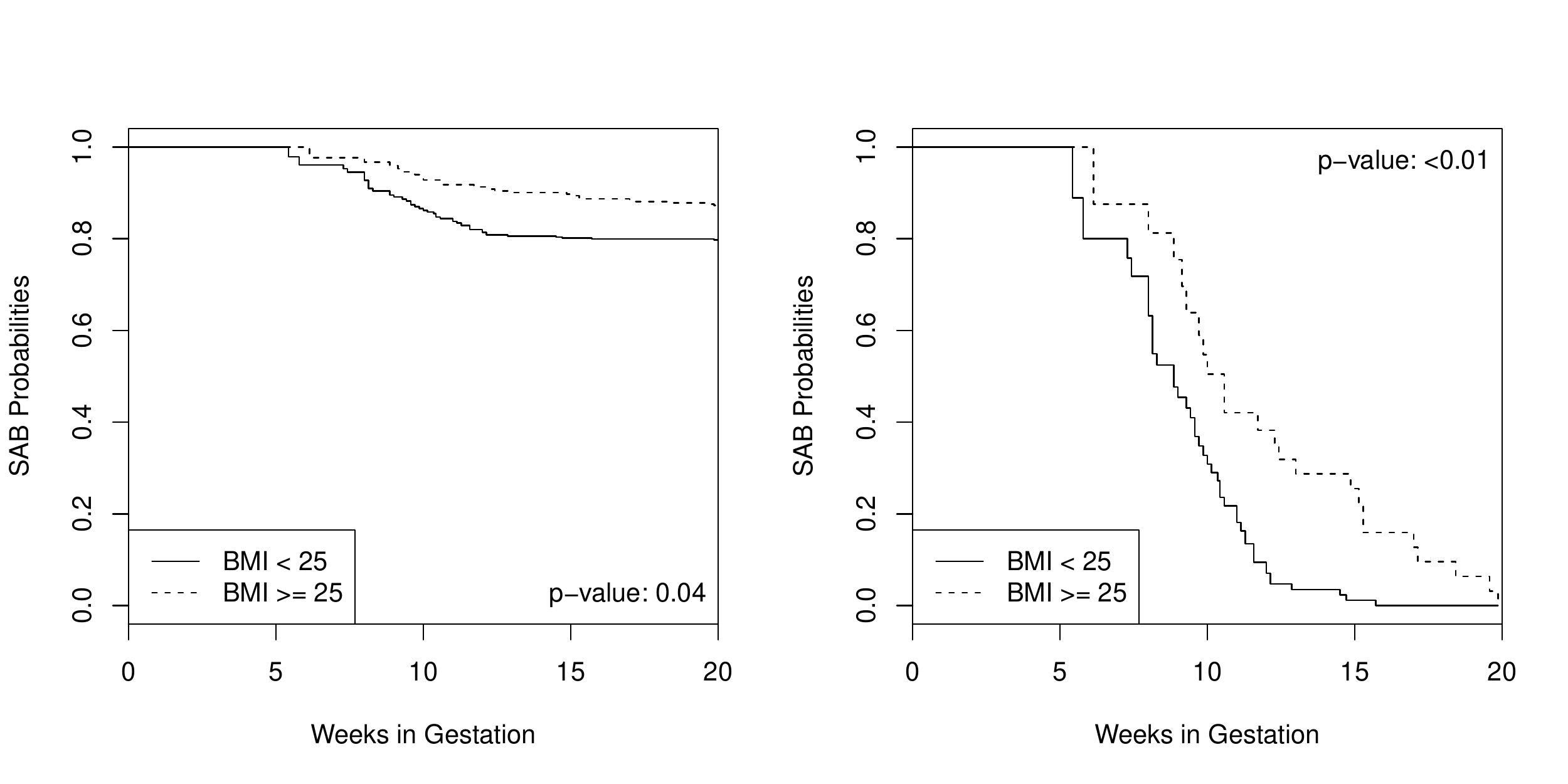}\\
\includegraphics[scale=.5]{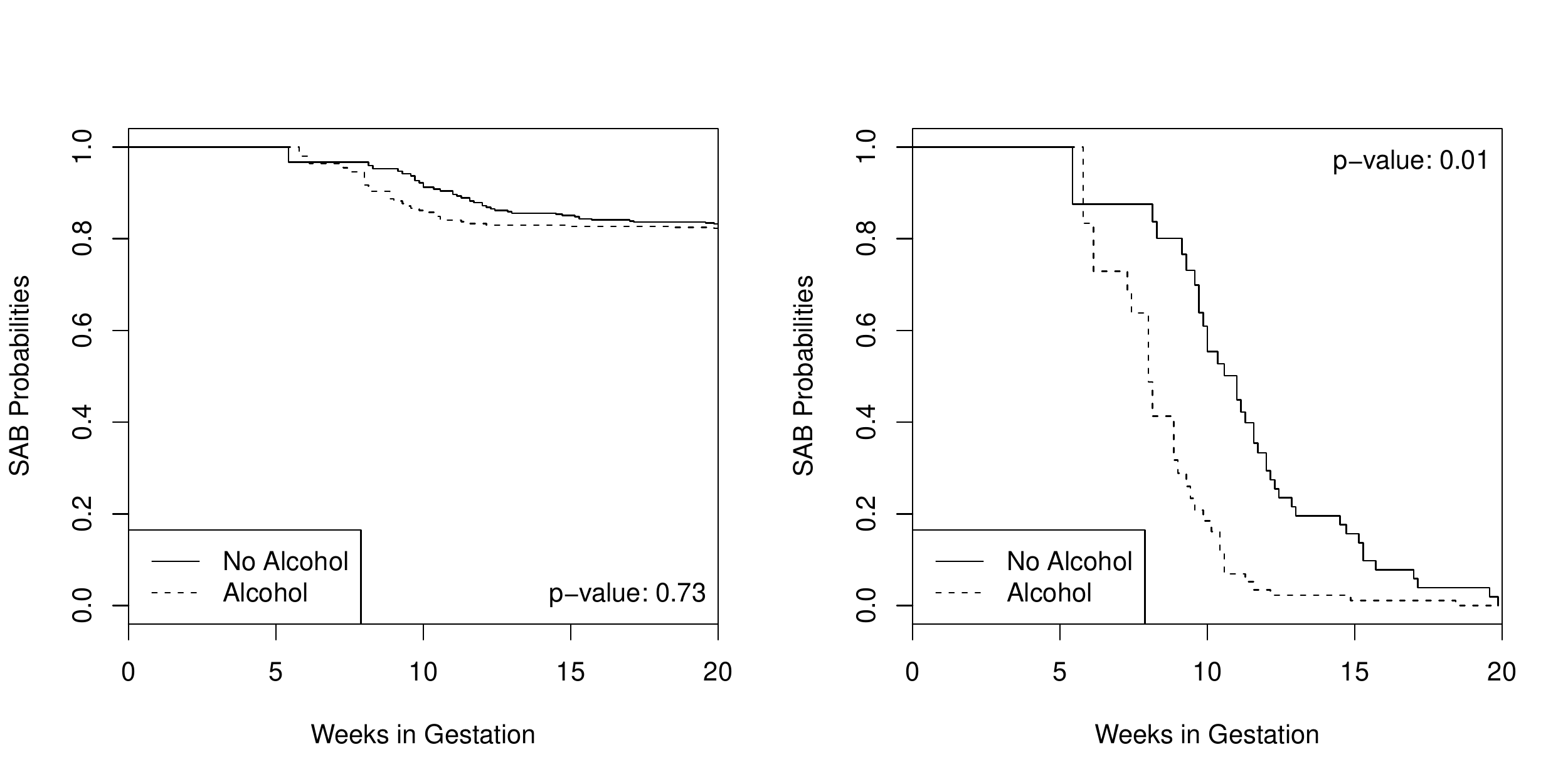}
\end{center}
\end{figure}

\end{document}